\pdfoutput=1
\PassOptionsToPackage{table}{xcolor}
\PassOptionsToPackage{framemethod=tikz}{mdframed}
\documentclass[11pt,a4paper]{gdm/googledeepmind}
\usepackage{mathtools}
\usepackage{bm}
\usepackage{multirow}
\usepackage[section]{placeins}
\usepackage{algorithm}
\usepackage{algpseudocode}
\usepackage{natbib}
\usepackage[most]{tcolorbox}
\usepackage{thmtools}
\usepackage{wrapfig}
\usepackage{gdm/gdm-colors}
\graphicspath{{img/}}

\definecolor{impr}{RGB}{34,139,34}
\definecolor{lightred}{RGB}{255,230,230}
\definecolor{darkred}{RGB}{192,0,0}
\definecolor{bestbg}{HTML}{FFF2B2}
\definecolor{secondbg}{HTML}{E8F0FE}
\definecolor{LavenderLight}{HTML}{C7C3F5}
\definecolor{LightCoral}{RGB}{240,128,128}
\definecolor{LightBlue}{RGB}{173,216,230}
\newcommand{\best}[1]{\cellcolor{bestbg}{\boldmath\textbf{#1}}}
\newcommand{\sbest}[1]{\cellcolor{secondbg}\underline{#1}}

\renewcommand{\arraystretch}{1.15}

\colorlet{thmboxbg}{LightBlue!20}
\colorlet{remarkboxbg}{LightCoral!10}
\mdfdefinestyle{thmblue}{%
    backgroundcolor=thmboxbg, linecolor=black, linewidth=0.5pt, roundcorner=3pt,
    innertopmargin=6pt, innerbottommargin=7pt, innerleftmargin=8pt,
    innerrightmargin=8pt, skipabove=8pt, skipbelow=8pt, nobreak=true}
\mdfdefinestyle{thmcoral}{%
    backgroundcolor=remarkboxbg, linecolor=black, linewidth=0.5pt, roundcorner=3pt,
    innertopmargin=6pt, innerbottommargin=7pt, innerleftmargin=8pt,
    innerrightmargin=8pt, skipabove=8pt, skipbelow=8pt, nobreak=true}

\declaretheorem[name=Theorem, numberwithin=section, mdframed={style=thmblue}]{theorem}
\declaretheorem[name=Lemma, sibling=theorem, mdframed={style=thmblue}]{lemma}
\declaretheorem[name=Proposition, sibling=theorem, mdframed={style=thmblue}]{proposition}
\declaretheorem[name=Corollary, sibling=theorem, mdframed={style=thmblue}]{corollary}
\declaretheorem[name=Remark, sibling=theorem, style=remark, mdframed={style=thmcoral}]{remark}
\declaretheorem[name=Definition, sibling=theorem, style=definition]{definition}
\declaretheorem[name=Assumption, sibling=theorem, style=definition]{assumption}
\declaretheorem[name=Takeaway, numbered=no, style=remark, mdframed={style=thmcoral}]{takeaway}

\newcommand{\method}{\textsc{Trace}}
\newcommand{\drbg}{\mathrm{DRBG}}
\newcommand{\ctx}{\mathrm{ctx}}
\newcommand{\keyone}{\mathsf{key}_1}
\newcommand{\keytwo}{\mathsf{key}_2}
\newcommand{\Hnull}{H_0}
\newcommand{\Halt}{H_1}
\newcommand{\E}{\mathbb{E}}
\newcommand{\Var}{\mathrm{Var}}
\newcommand{\ent}{\mathcal{H}}
\DeclareMathOperator*{\argmin}{arg\,min}
\DeclareMathOperator*{\argmax}{arg\,max}

\usepackage{longtable}
\usetikzlibrary{positioning,arrows.meta}

\title{\method: A Two-Channel Robust Attribution Watermark via Complementary Embeddings for LLM-Agent Trajectories}
\author{%
\begin{minipage}{\linewidth}
\centering\normalfont
\makebox[0.31\linewidth]{Zheng Gao$^{1}$}%
\makebox[0.31\linewidth]{Xiaoyu Li$^{1}$}%
\makebox[0.31\linewidth]{Xiaoyan Feng$^{2}$}\par\vspace{3pt}
\makebox[0.31\linewidth]{Jiaojiao Jiang$^{1}$}%
\makebox[0.31\linewidth]{Yang Song$^{1}$}%
\makebox[0.31\linewidth]{Yulei Sui$^{1}$}\par\vspace{3pt}
\makebox[0.31\linewidth]{Zhenchang Xing$^{3}$}%
\makebox[0.31\linewidth]{Liming Zhu$^{3}$}\par\vspace{8pt}
{\small
$^{1}$University of New South Wales \quad
\texttt{\{zheng.gao1,\,xiaoyu.li2,\,jiaojiao.jiang,\,yang.song1,\,y.sui\}@unsw.edu.au}\par\vspace{2pt}
$^{2}$Griffith University \quad
\texttt{xiaoyan.feng@griffithuni.edu.au}\par\vspace{2pt}
$^{3}$CSIRO's Data61 \quad
\texttt{\{zhenchang.xing,\,liming.zhu\}@data61.csiro.au}\par}
\end{minipage}%
}

\correspondingauthor{}

\begin{abstract}
LLM agents reach users through resellers, who may rebrand a developer's agent or substitute a cheaper model. When provenance is disputed, attribution rests on the trajectory log (the record of tool calls, observations, and executed actions, not the model's reasoning), which the reseller stores and processes to meter usage. A watermark must therefore survive an adversary with full read/write access to the very evidence it is detected from; existing agent watermarks do not, as their attribution is read straight off that log. We present TRACE, to our knowledge the first agent watermark that is distortion-free in its action choices, self-synchronizing under deletion, and unconditionally invariant under rewriting. Deletion desynchronizes a position-derived key and rewriting alters content, so a deletion-robust key must come from content and a rewrite-robust key from position, and no single key serves both. A trajectory, however, has room for two watermarks. TRACE superposes a selection channel that sets which action is chosen, keyed on local content with a distortion-free sampler, so the agent's distribution is provably unchanged and detection resynchronizes after deletions, and a tally channel that sets how many records each decision group holds, keyed on the log's skeleton alone, which no rewriting can touch. We prove this behavioral watermark's signal is bought with decision entropy, each decision paying at least half its entropy and deterministic decisions nothing, and that erasing both channels forces the reseller to corrupt the trajectories it resells. On ToolBench and ALFWorld, TRACE matches the unwatermarked agent's success rate while its selection channel reaches detection scores near z = 100 on long-horizon trajectories, stays detectable under 70\% step deletion, and keeps a tally channel exactly unchanged under LLM rewriting of any strength.
\end{abstract}

\begin{document}

\maketitle

\addtocontents{toc}{\protect\setcounter{tocdepth}{-3}}

\begin{center}
\vspace{-6pt}
\includegraphics[width=0.56\textwidth]{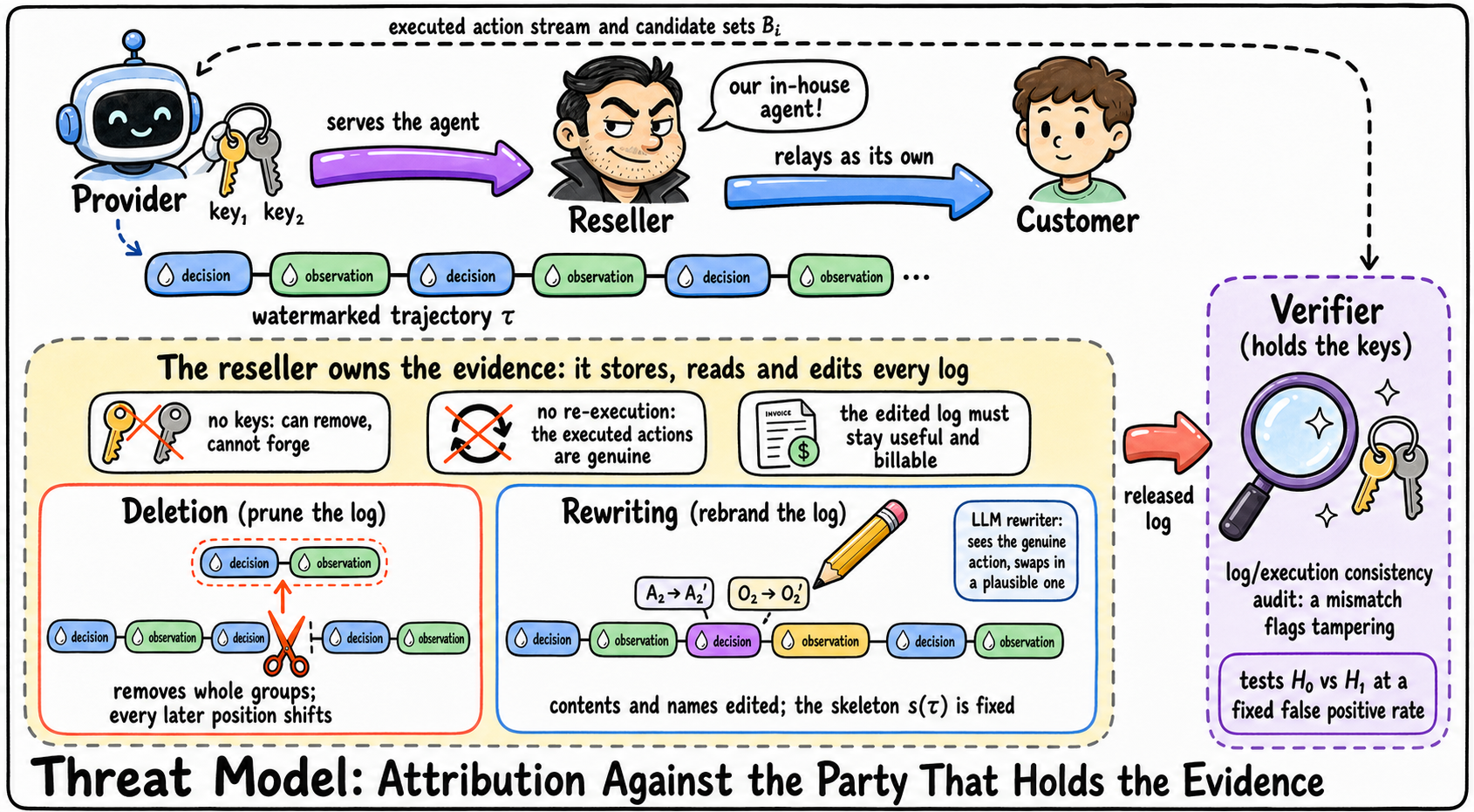}
\captionof{figure}{The reseller threat model: the adversary owns the evidence it audits.}
\label{fig:threat}
\vspace{-6pt}
\end{center}

\definecolor{promptframe}{RGB}{120,113,108}
\definecolor{promptback}{RGB}{250,250,248}
\newmdenv[
  linecolor=promptframe,
  backgroundcolor=promptback,
  linewidth=0.9pt,
  roundcorner=3pt,
  nobreak=true,
  innertopmargin=7pt, innerbottommargin=7pt,
  innerleftmargin=9pt, innerrightmargin=9pt,
  skipabove=9pt, skipbelow=9pt,
  frametitlefont={\normalfont\small\bfseries},
  frametitlerule=true, frametitlerulewidth=0.4pt,
]{promptbox}

\section{Introduction}\label{sec:intro}

Large language model agents no longer merely produce text: they invoke
search APIs, file tickets, send messages, book services, execute
code, and respond to security incidents
\citep{yao2022react,schick2023toolformer,qin2024toolllm,Park2023GenerativeAI,li2026cyberthreatsurvey}.
Actions carry consequences that prose does not. An operator audited
after an incident must show which of the logged actions its agent did
and did not take. When agent behavior causes harm, attribution is the
first step of liability. Governance proposals reach the same point
from the policy side, calling for visibility into agent activity
through identifiers and activity logs \citep{Chan2024VisibilityIA}. Every
one of these needs runs through the same artifact, the agent's
trajectory log, and the log serves them only if it can be attributed
to the agent that produced it. Throughout, \emph{trajectory} means this
execution trace, the logged tool calls, observations, and actions,
not the model's reasoning trace or a bare conversation history. A
system exposing no such trace (a chat model's single response, an
image generator's single image) presents no decision sequence for
\method{} to mark.

For text, provenance has a mature answer: watermarking. Biasing or
derandomizing the token sampler with a secret key lets a detector that
holds the key distinguish watermarked output from natural text
\citep{kirchenbauer2023watermark,aaronson2023watermarking,kuditipudi2023robust,christ2024undetectable},
and the approach is deployed at production scale
\citep{dathathri2024scalable}. For agents, however, token-level
watermarks are the wrong primitive, for three reasons. First, the log
does not store the token stream. An agent's decisions are translated
into structured records of behaviors and actions, tool calls, their
arguments, and the environment's observations, and the sampled tokens
that would carry a text watermark are largely lost in that translation
\citep{huang2025agent,huang2026agentmark}. The artifact that
survives, and the one that matters for attribution, is the action
stream. Second, the signal-bearing units are scarce. A trajectory
carries its identity in the decision sequence, and decisions number in
the single digits to a few dozen per task: we measure roughly $1.4$
effective decision groups per task on ToolBench and $23$ to $25$
decision steps per task on ALFWorld, far below the token counts at
which text watermarks attain power. Third, agent decisions are frequently
low-entropy, often admitting a single tool, and any watermark that
moves probability mass at such steps directly damages task success.

The watermark has accordingly begun to move from the tokens to the
behavior. Agent Guide \citep{huang2025agent} biases the agent's
high-level choices toward a keyed subset and detects the bias with a
z-statistic. AgentMark \citep{huang2026agentmark} removes the bias,
embedding a multi-bit identifier into planning decisions through
distribution-preserving conditional sampling under black-box APIs.
AgentWM \citep{wang2026protecting} biases selection among semantically
equivalent tool paths, so that the signal survives into models
trained to imitate the agent. ActHook \citep{meng2026watermarking}
watermarks trajectories as training data, planting keyed hook actions
that resurface in any model trained on a stolen log. The decision
stream, these works establish, is a viable carrier.

They differ in goal, from provenance to imitation defense to dataset
copyright, but share one structure: each carries its signal once,
under one keying, and measures robustness empirically against generic
perturbations. Their adversaries perturb logs, paraphrase text, or
train imitators. None of them holds the evidence itself. Yet that is
exactly the party through which agents increasingly reach users. A
\emph{reseller} licenses a developer's agent and forwards it to
customers, but advertises it as something it is not
(Figure~\ref{fig:threat}): as the
reseller's own in-house system, or as a premium, safety-evaluated
provider while a cheaper model runs underneath, a substitution
pattern cost-aware LLM deployments already practice
systematically~\citep{zhang2026aprrouter}. Metering and billing
already require the reseller to process every trajectory, so it is
entitled to store, read, and edit the very logs from which provenance
would be judged. When a harmed customer, a contested contract, or a
platform auditing a partner's traffic raises the question \emph{did
this trajectory come from that agent?}, the answer must be
established against an adversary who owns the evidence. Attribution
that the log-holder can quietly strip is no attribution at all.

The reseller's two natural laundering moves pull keying design in
opposite directions. To shrink and sanitize what it forwards, the
reseller \emph{prunes} the log: it drops steps, truncates, and
discards records that betray the upstream provider. This is a
\emph{deletion} attack. Deletion desynchronizes any position-derived
key, so surviving it demands a key derived from local \emph{content},
which lets detection re-align at the very next decision after a dropped
record. To rebrand what it forwards,
the reseller \emph{rewrites} the log: it paraphrases observations and
renames tools into its own namespace while preserving the step
structure. This is a \emph{rewriting} attack. Rewriting can alter
every content field, so surviving it demands a key derived from
\emph{position}, which content edits cannot move. No single keying
satisfies both demands, which is why a single-signal scheme, whatever
its keying, falls to one move or the other. A trajectory, however, has
room for two watermarks: one on the identity of its decisions, one on
their shape.

\paragraph{Our approach.}
\method{} superposes two watermarks on one trajectory, with
independent keys, disjoint carriers, and, by design, disjoint
vulnerable surfaces. It is, to our knowledge, the first agent
watermark that is distortion-free in its action choices,
self-synchronizing under deletion, and unconditionally invariant
under rewriting. The \emph{selection} channel modulates \emph{which}
action is selected, sampling the agent's distribution through a
distortion-free exponential race \citep{kuditipudi2023robust} keyed on
preceding content. The sampled distribution provably equals the
agent's (Theorem~\ref{thm:df}), and content keying confines a
deletion's damage to one neighboring key, so detection resynchronizes
(Proposition~\ref{prop:blast}). Zero distortion does not make
detection free: each decision's signal is lower-bounded by half its
entropy and vanishes exactly at deterministic decisions, so short
trajectories are pooled, at an explicit rate
(Theorem~\ref{thm:entropy}, Corollary~\ref{cor:power}). The
\emph{tally} channel modulates \emph{how many} records each decision
group contains, appending a context-neutral redundant record under a
key derived from group position alone. Count and key are functions of
the trajectory's skeleton, which rewriting cannot touch, so the
channel is unconditionally invariant under every rewriting attack
(Theorem~\ref{thm:rewrite}).

Erasing both layers at once is provably expensive: the attack must
edit the skeleton and, unless it targets groups by their realized
scores, alter a constant fraction of group contents
(Theorem~\ref{thm:erasure}). Both edits degrade the service the
reseller is paid to deliver. Laundering the log means corrupting the
product.

\paragraph{Contributions.}
We make the following contributions:
\begin{itemize}
\item We formalize attribution against the party that holds the
evidence: a reseller with full read and write access to the trajectory
log, whose two laundering moves induce the deletion and rewriting
attack classes (Section~\ref{sec:prelim}).
\item We design \method{}, which superposes a content-keyed selection
channel and a position-keyed tally channel on one trajectory: to our
knowledge the first agent watermark that is distortion-free,
deletion-self-synchronizing, and rewrite-invariant at once
(Section~\ref{sec:method}).
\item We back the design with guarantees proved structurally where
prior agent watermarks measure robustness empirically: exact null laws
for both detectors, an entropy lower bound that prices distortion-free
detectability and yields the pooling rate, and a joint-erasure theorem
charging any attack that silences both channels with skeleton edits
and, when obliviously targeted, a constant fraction of altered groups
(Section~\ref{sec:theory}).
\item We introduce the \emph{LLM rewriter}, to our knowledge the first
informed, plausibility-preserving instance of the rewriting class,
mounted by a language model that sees the genuine choice, and we run
it against every scheme in the comparison
(Section~\ref{sec:attacks}).
\item We evaluate on ToolBench and ALFWorld against red--green and
multi-bit baselines: \method{} matches the unwatermarked agent's
success rate while the biased red--green watermark pays up to $8.1$
points, attributes at $1\%$ FPR under either single-axis attack where
a single rewriting pass erases both baselines, and concedes
essentially only the combined-attack corner, where the reseller has
already destroyed the service it resells (Section~\ref{sec:exp}).
\end{itemize}
\section{Related Work}\label{sec:related}

\paragraph{Watermarking LLM text.}
Statistical watermarks bias or derandomize token sampling with a
pseudorandom key. \citet{kirchenbauer2023watermark} boost the logits
of a keyed green list, trading detection power against distortion,
and \citet{zhao2023provable} harden the partition with a fixed
unigram key. Distortion-free and unbiased schemes remove the quality
cost: \citet{aaronson2023watermarking} couples sampling to a keyed Gumbel trick,
\citet{kuditipudi2023robust} formalize distortion-free samplers with
edit-robust detection, \citet{hu2024unbiased} and
\citet{christ2024undetectable} construct unbiased and
cryptographically undetectable variants, and
\citet{dathathri2024scalable} deploy sampling-level watermarking at
production scale. Multi-bit schemes carry identity payloads
\citep{yoo2024advancing}; BiMark \citep{feng2025bimark} stacks several
unbiased reweightings on each token, strengthening one signal on one
carrier, whereas our two layers place two signals on disjoint
carriers against two different attacks. A parallel line binds the
signal to sentence-level semantics so that it survives paraphrase
\citep{hou2024semstamp}; reliability studies and benchmarks map the
quality versus robustness frontier
\citep{kirchenbauer2024reliability,piet2025markmywords,liu2024survey}. On the
attack side, paraphrase removes token-level signals
\citep{krishna2023paraphrasing}, watermark stealing reverse-engineers
the keyed rules from API access \citep{jovanovic2024watermark},
\citet{sadasivan2023can} evade detectors through recursive
paraphrasing, and \citet{zhang2024watermarks} prove that a quality oracle
and a perturbation oracle suffice to erase any strong watermark.
These results shape our design rather than threaten it: the tally
channel's rewrite invariance is structural, not a statistical claim a
stronger paraphraser could erode (Theorem~\ref{thm:rewrite}), and for
the selection channel we prove what an informed rewriter achieves
(Proposition~\ref{prop:loser}). The selection channel transplants the
exponential race of \citet{kuditipudi2023robust} from tokens to
behaviors, where the candidate set is the environment's admissible
action set; the conditional law and entropy bound we prove for its
score (Lemma~\ref{lem:cond}, Theorem~\ref{thm:entropy}) appear to be
new even in the token setting.

\paragraph{Watermarking other generative modalities.}
Image watermarking has walked an arc from content-independent to
content-bound signals, and then to the granularity of the binding.
Regeneration attacks provably strip post-hoc invisible watermarks
\citep{zhao2024invisible}, pushing the signal into the generation
process itself; the resulting initial-noise schemes, Tree-Ring
\citep{wen2023tree}, Gaussian Shading \citep{yang2024gaussian}, and
pseudorandom codes \citep{gunn2025undetectable}, are training-free but
keyed independently of content, which black-box forgery exploits to
transplant a watermark onto arbitrary images \citep{muller2025black}.
Responses anchor detection in the initial noise itself
\citep{arabi2025hidden}, root the
signal in model weights or training data
\citep{fernandez2023stable,yu2021artificial}, or bind verification to
image semantics \citep{arabi2025seal}. The binding itself then became
the target: LLM-guided semantic injection defeats a single global
binding with edits that are locally substantial yet globally coherent
\citep{gao2026breaking}, and SLICE answers by anchoring distinct semantic
factors to disjoint regions of the initial noise \citep{gao2026slice}.
Behavioral trajectories invite the same lesson in a different
geometry: the selection channel binds the signal to trajectory content and
the tally channel to trajectory structure, and the two bindings fail
under complementary attacks by design.

\paragraph{Watermarking agent behavior.}
Tool-using agents
\citep{yao2022react,schick2023toolformer,qin2024toolllm,patil2024gorilla,Park2023GenerativeAI,Shridhar2020ALFWorldAT}
and the benchmarks that evaluate them as decision makers
\citep{Liu2023AgentBenchEL} expose a structured decision and observation
loop that text watermarks do not exploit, and a young line of work
embeds the signal there. Agent Guide \citep{huang2025agent}
biases the behavior distribution toward a keyed subset, a red--green
rule lifted to the decision level. AgentMark \citep{huang2026agentmark}
removes the bias through distribution-preserving conditional sampling
and carries a multi-bit identifier; it serves as a baseline in our
experiments. AgentWM \citep{wang2026protecting} targets model imitation,
biasing selection among semantically equivalent tool paths so that
the signal survives training on stolen outputs, and ActHook
\citep{meng2026watermarking} watermarks trajectory datasets with keyed hook
actions that a model trained on the data reproduces. Across this line
the adversary perturbs, paraphrases, or distills, and the watermark
is one signal under one keying. \method{} differs on the axis these
works leave open: the adversary who holds the log itself. We
formalize a reseller with lawful write access and two laundering
moves, and answer with a zero-bit, two-layer scheme carrying exact
finite-sample null distributions, unconditional rewrite invariance,
and a joint-erasure lower bound. The closest operational practice,
log signing, proves integrity of a log one already trusts but cannot
attribute an unsigned, possibly edited trajectory: a reseller
relaying a rebranded log simply drops the provider's signature.
Behavioral watermarking
therefore sits inside a broader provenance architecture, alongside
trusted execution environments, attestation, and authenticated logs,
as one complementary signal, the one that still speaks when the party
holding those records is itself the adversary. Our threat model (Section~\ref{sec:threat}) is built
for exactly this case, letting the reseller edit everything except
the environment-supplied action space and the executed action stream.
\section{Problem Formulation and Threat Model}\label{sec:prelim}

This section fixes, in the order a security argument needs them, the
object being watermarked (Section~\ref{sec:traj}), the parties and
their capabilities (Section~\ref{sec:threat}), and the attack classes
that every guarantee in this paper is stated against
(Section~\ref{sec:attacks}).

\subsection{Agent Trajectories}\label{sec:traj}

An agent interacts with an environment through alternating decisions
and observations; the log is modeled as a tagged sequence.

\begin{definition}[Agent trajectory]\label{def:traj}
An \emph{agent trajectory} is a finite sequence
$\tau := (e_1, \dots, e_T)$ of records, each carrying a role tag
$\rho(e_t) \in \{\textsc{dec}, \textsc{obs}\}$ and a content string
$c(e_t)$: \textsc{dec} records are decisions emitted by the agent,
\textsc{obs} records are observations returned by the environment. The
tag sequence $s(\tau) := (\rho(e_1), \dots, \rho(e_T))$ is the
\emph{skeleton} of $\tau$.
\end{definition}

These records log
emitted behavior, not the reasoning that produced it. A deployment
may expose three artifacts: internal reasoning traces (increasingly
hidden or encrypted), user-facing reasoning summaries (which may
diverge from the underlying computation), and the agent trajectory
just defined; \method{} reads and marks only the third.

\begin{definition}[Decision-boundary grouping]\label{def:group}
The \emph{decision-boundary grouping} of $\tau$ is the unique partition
of $\tau$ into consecutive blocks $g_1, \dots, g_m$, each consisting of
exactly one \textsc{dec} record followed by all \textsc{obs} records
preceding the next \textsc{dec} record (or the end of $\tau$); we write
$k_i \ge 0$ for the number of \textsc{obs} records in $g_i$.
\end{definition}

The grouping reads no textual markers and no content heuristics, only
the positions of \textsc{dec} tags; this gives it the invariance the
design will lean on.

\begin{proposition}[Skeleton determines positions]\label{prop:skeleton}
If $s(\tau) = s(\tau')$, then $\tau$ and $\tau'$ have the same number of
groups, the same group boundaries, and the same counts
$(k_1, \dots, k_m)$.
\end{proposition}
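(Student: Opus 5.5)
The plan is to show that the decision-boundary grouping of Definition~\ref{def:group} is a function of the skeleton alone. Concretely, I would exhibit a map $\Gamma$ from tag sequences to (number of groups, boundaries, counts) such that the grouping of any trajectory $\tau$ equals $\Gamma(s(\tau))$. Once this holds, $s(\tau)=s(\tau')$ immediately gives $\Gamma(s(\tau))=\Gamma(s(\tau'))$, which is the statement.

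\textbf{Step 1: define $\Gamma$ directly on tag sequences.} Given $s=(\rho_1,\dots,\rho_T)$, let $1\le d_1<\dots<d_m\le T$ be the indices with $\rho_{d_i}=\textsc{dec}$, and set $d_{m+1}:=T+1$. Define group $i$ as the index interval $[d_i,\,d_{i+1}-1]$ and $k_i:=d_{i+1}-d_i-1$. Only tags are read, never content strings $c(e_t)$.

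\textbf{Step 2: check that the grouping of $\tau$ is $\Gamma(s(\tau))$.} The blocks $\{e_t : t\in[d_i,d_{i+1}-1]\}$ are consecutive. Each consists of the \textsc{dec} record $e_{d_i}$ followed by exactly the \textsc{obs} records before the next \textsc{dec} (or the end of $\tau$). They therefore satisfy the defining property, and the number of \textsc{obs} records in block $i$ is $k_i$.

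\textbf{Step 3: uniqueness.} The definition says \emph{the} partition, so any valid partition must coincide with this one. Each block contains exactly one \textsc{dec} record and begins with it, so block starts are exactly the \textsc{dec} positions. Each block ends just before the next \textsc{dec} position or at $T$. Hence the boundaries are forced.

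\textbf{Main obstacle: leading \textsc{obs} records.} If $\rho_1=\textsc{obs}$, those leading records belong to no group, and Definition~\ref{def:group} would not yield a partition of all of $\tau$. I would handle this in one of two ways:
\begin{itemize}
\item Adopt the convention that trajectories begin with a \textsc{dec} record, or
\item Treat the prefix $[1,d_1-1]$ as a preamble.
\end{itemize}
In either case the prefix length is again a function of $s$ alone, so the conclusion is unaffected. Either way, the proof is a direct bookkeeping argument with no probabilistic content.
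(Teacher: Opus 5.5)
Your proposal is correct and follows the paper's argument, which simply notes that the partition of Definition~\ref{def:group} depends only on the positions of \textsc{dec} tags; your map $\Gamma$ spells out that one-line observation. Your point about leading \textsc{obs} records is a fair catch of an edge case the paper leaves implicit, and either of your fixes preserves the conclusion, since the length of any such prefix is also read off the skeleton.
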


\begin{definition}[Group attributes]\label{def:attrs}
For each group $g_i$:
(i) $B_i$ is the \emph{candidate behavior set} presented to the agent
at decision $i$ by the environment (the available tools of a ToolBench
task, the admissible commands of an ALFWorld step; terminal actions
excluded), with $N_i := |B_i|$;
(ii) $P_i : B_i \to [0,1]$ is the agent's normalized distribution over
$B_i$, elicited at decision time, with uniform fallback;
(iii) $b_i \in B_i$ is the behavior actually selected; and
(iv) $A_i$ is the sequence of action identities recorded in $g_i$.
We write $\ent(P_i)$ for the Shannon entropy of $P_i$ in nats. Group
$g_i$ is \emph{effective} if $N_i \ge 2$ and the selection is
non-terminal, both predicates evaluated against the environment-supplied
$B_i$ and the executed action stream rather than any reseller-editable
record content, so the set of effective groups is itself invariant under
rewriting; only effective groups are pooled.
\end{definition}

Both layers draw randomness from one primitive, a deterministic random
bit generator $\drbg(\texttt{key}, \texttt{nonce}) \in [0,1)$
instantiated with HMAC-SHA512 \citep{Bellare1996KeyingHF,Barker2012RecommendationFR}:
embedder and detector recompute identical values from identical inputs,
and no side information is ever transmitted. The analysis adopts the
standard pseudorandom-function idealization
\citep{Goldreich1986HowTC}; independence holds only across distinct
evaluation points, so pooled detection deduplicates groups whose
evaluation points coincide.

\begin{assumption}[Ideal pseudorandomness]\label{ass:prf}
To any party not holding the key, the values
$\{\drbg(\texttt{key}, \nu)\}_{\nu}$ across distinct nonces $\nu$ are
i.i.d.\ uniform on $[0,1)$, independent of all other randomness; calls
under independent keys are mutually independent.
\end{assumption}

\subsection{Threat Model}\label{sec:threat}

\paragraph{Provider (defender).}
Two parties interact through the resale of an agentic service
(Figure~\ref{fig:threat}): a
\emph{provider}, who develops and serves the agent, and a
\emph{reseller}, who licenses that agent and relays it to customers
under a misrepresentation, as the reseller's own in-house system or as
a provider other than the one running underneath. The provider's
\emph{goal} is a test that, given a trajectory, decides whether its
agent produced it, at a false positive rate fixed in advance that no
key-less party can inflate, and at no cost to the service itself,
since a watermark that degrades task success will not be deployed;
formally, the provider, or an auditor acting with its keys, tests
$\Hnull$ (the trajectory was produced without knowledge of the keys)
against $\Halt$ (it was produced by the watermarked agent). Its
\emph{capability} is control of the sampler and the keys: it holds the
secret $(\keyone, \keytwo)$ and embeds \method{} at decision time
(Section~\ref{sec:method}), so every trajectory its agent produces
carries the watermark before leaving its control.

\paragraph{Reseller (adversary).}
The reseller's \emph{goal} is to defeat attribution: a trajectory the
provider's agent in fact produced must not be attributable to the
provider. Its defining \emph{capability} is full access to the
evidence: because it meters and bills usage, it stores and processes
every trajectory and may read and edit the log at will before any copy
reaches a verifier. Three limits bind it, and the design turns on all
three. \emph{No keys:} it does not hold $(\keyone, \keytwo)$, so it
cannot forge the watermark, only attempt to remove it.
\emph{No re-execution:} it resells the provider's live service, so the
executed action stream is genuine and its only lever is to edit the
recorded log.
\emph{Utility and consistency:} the edited log must remain useful and
consistent, since a service that no longer completes its customers'
tasks is worthless to resell, and the same logs substantiate the
reseller's charges, so they can be audited against the provider's
upstream record of consumption.

\subsection{Attack Classes}\label{sec:attacks}

Within these limits, editing the log decomposes into two operations,
which we formalize as the attack classes the analysis must withstand;
their composition is the \emph{combined attack}.
\emph{Pruning} the log, to shrink it or to strip records that name the
provider, is deletion.

\begin{definition}[Deletion attack]\label{def:del}
A \emph{deletion attack} drops records from the log. Because the
reseller cannot re-execute (Section~\ref{sec:threat}), the executed
\textsc{dec} records are preserved; the attack removes \textsc{obs}
records, lowering the counts $k_i$ of the affected groups (a group whose
redundant record is dropped falls from $k_i = 2$ to $k_i = 1$) and, in
the limit, reducing an affected group to its single preserved
\textsc{dec} record ($k_i = 0$). The
i.i.d.\ instance we sweep is parameterized by a rate $r \in [0,1]$: each
\textsc{obs} record is deleted independently with probability $r$.
Targeted removal of provider-identifying or redundant records,
truncation, and log loss are further instances.
\end{definition}

\emph{Rebranding} the log, paraphrasing observation text and renaming
actions into the reseller's namespace, is rewriting.

\begin{definition}[Rewriting attack]\label{def:rewrite}
A \emph{rewriting attack} is any map $R$ on trajectories with
$s(R(\tau)) = s(\tau)$: record contents (observation text and action
identities alike) may be edited arbitrarily, but no record is inserted
or deleted and no tag is changed.
\end{definition}

\paragraph{The LLM rewriter: a new informed attack.}
Definition~\ref{def:rewrite} is a class; evaluating against it requires
an instance, and prior work on behavioral watermarks tests only generic
perturbations such as random substitution or paraphrase
\citep{huang2025agent,huang2026agentmark,wang2026protecting}. We
introduce the \emph{LLM rewriter}, an attack we define here, to our
knowledge for the first time, that realizes the strongest rewrite a
reseller can mount without the keys: an \emph{informed},
\emph{plausibility-preserving} edit executed by a language model
(Figure~\ref{fig:llmrewriter}). For
each attacked group, an LLM receives the observation, the admissible
action set, and the action the agent actually chose, and returns a
\emph{different} action that a reasonable agent could plausibly have
taken; the recorded identity is replaced with its answer (exact prompt
in Appendix~\ref{app:algs}). The substitution is informed, since the
model sees the genuine choice, which is exactly the regime
Proposition~\ref{prop:loser} characterizes for the selection channel; and it
is plausibility-preserving, since an implausible replacement would
betray the edit to any reader and violate the consistency constraint
above. Applied to a fraction $q$ of groups, the LLM rewriter realizes
Definition~\ref{def:rewrite} at strength $q$ and leaves the skeleton
untouched by construction. Section~\ref{sec:exp} sweeps $q$ from $0$ to
$1$ and runs this attack against every scheme under comparison, ours
and the baselines alike.

\begin{wrapfigure}{r}{0.58\textwidth}
\vspace{-\baselineskip}
\centering
\includegraphics[width=0.56\textwidth]{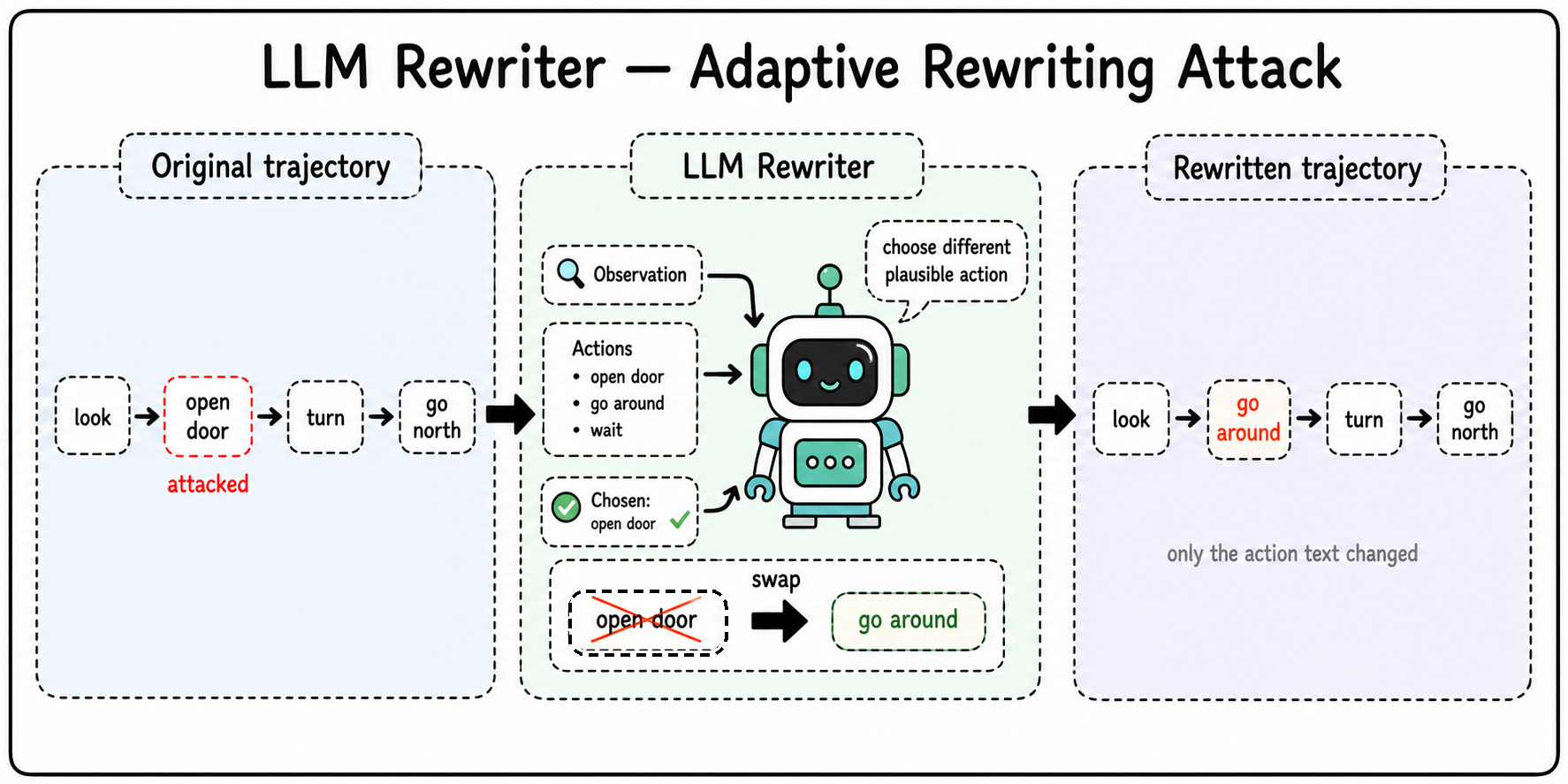}
\caption{The LLM rewriter swaps the recorded action for a different
plausible one, leaving the skeleton untouched.}
\label{fig:llmrewriter}
\vspace{-0.5\baselineskip}
\end{wrapfigure}

\paragraph{What the reseller cannot touch.}
Two things lie outside the rewriting class and beyond the reseller's
reach. The candidate behavior set $B_i$ is supplied by the environment,
not the log, so the verifier reads it at audit time from fields the
reseller cannot edit rather than
trusting it. And the executed action stream is what the provider
actually ran on the reseller's behalf: forging or deleting a
\textsc{dec} record breaks the correspondence between the log and that
stream, so attacks are required to act on the execution stream itself,
the data every detector under comparison reads. The detector accordingly
supports verification both from the reseller-released log and from the
grouping reconstructed from execution, the latter trusting no
reseller-editable field; large divergence between the two sources is
itself evidence of tampering, a \emph{log/execution consistency audit}
to which Section~\ref{sec:theory} returns.

\section{The \method{} Scheme}\label{sec:method}

\begin{figure}[!t]
\centering
\includegraphics[width=\textwidth]{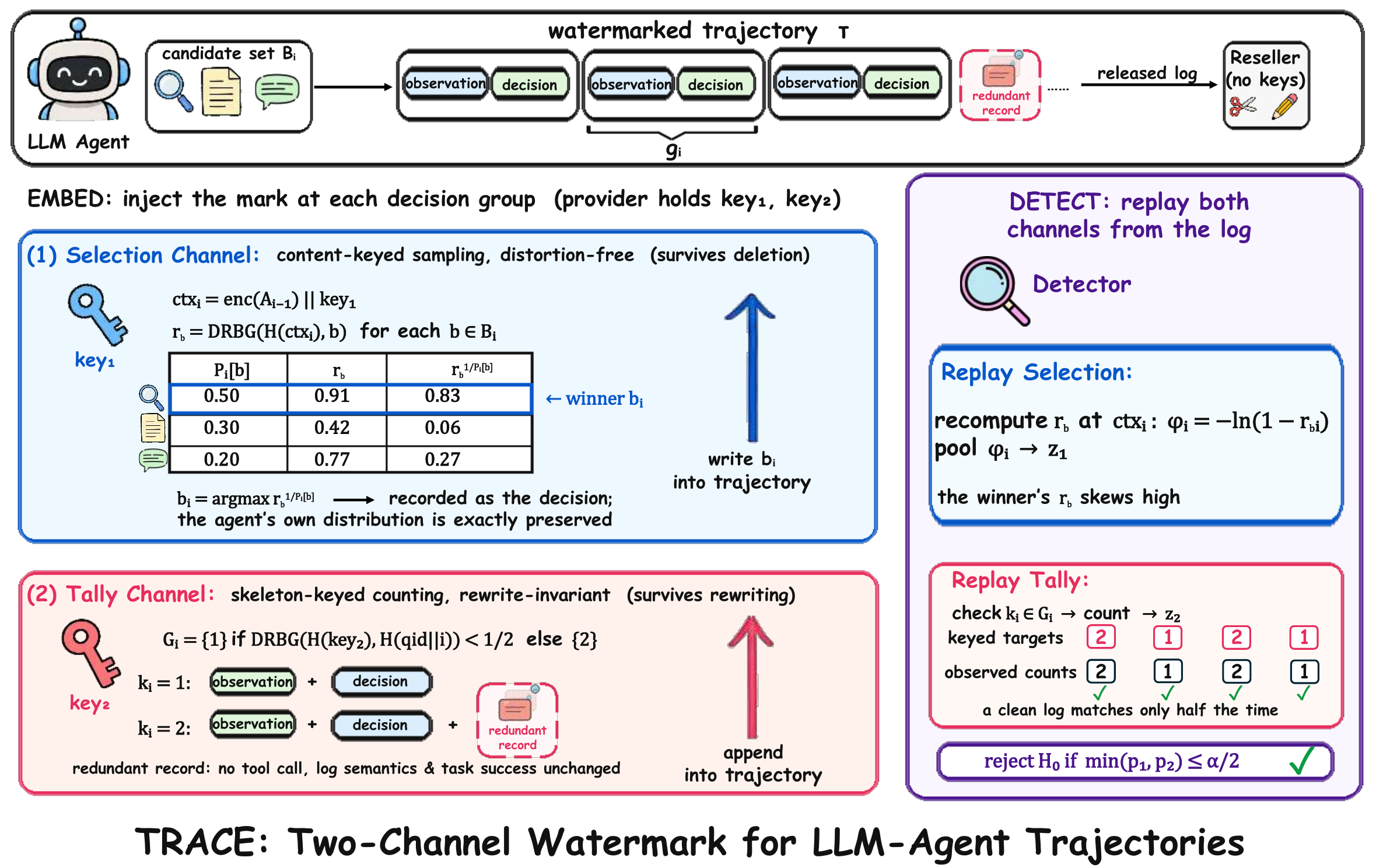}
\caption{Overview of \method{}: a content-keyed selection channel and a
skeleton-keyed tally channel embed two complementary marks at each
decision, read back by replaying both from the log.}
\label{fig:framework}
\end{figure}

Fix independent keys $\keyone, \keytwo$. \method{} consists of an
embedder, which replaces the agent's sampler at each decision and
appends keyed redundant records, and a detector, which maps an observed
trajectory to the statistics $(z_1, z_2)$ of \eqref{eq:z1} and
\eqref{eq:z2}; pseudocode for both is given as
Algorithms~\ref{alg:embed} and~\ref{alg:detect} in
Appendix~\ref{app:algs}. One principle governs the construction:
\emph{a statistic is invariant under an attack class as soon as its
carrier and its keying are functions of data preserved by every attack
in the class}. A rewriting attack preserves exactly the skeleton
(Proposition~\ref{prop:skeleton}); a deletion attack preserves the
contents of surviving groups and no positions; and no carrier--keying
pair is preserved by both classes, content failing under rewriting and
position under deletion. \method{} therefore runs one layer on each
invariant, both instances of the same template: keyed pseudorandomness
evaluated on invariant data, coupled to one carrier, detected by
replay (Figure~\ref{fig:framework}). The \emph{selection channel} modulates which action is
selected, keyed on content; the \emph{tally channel} modulates how many
records each decision group contains, keyed on position. Subscript $1$
refers throughout to the selection channel, $2$ to the tally channel;
Section~\ref{sec:theory} proves every property cited below.

\subsection{The Selection Channel: Content-Keyed Sampling}\label{sec:layer1}

The carrier is the selected behavior $b_i$. The watermark context of
group $i$ is
\begin{equation}\label{eq:ctx}
\ctx_i := \mathrm{enc}(A_{i-1}) \,\|\, \keyone,
\qquad
\ctx_1 := \textsc{bootstrap} \,\|\, \keyone,
\end{equation}
with $\mathrm{enc}$ an injective encoding of action-identity sequences,
and each candidate receives the value
\begin{equation}\label{eq:rb}
r_b := \drbg\bigl(\texttt{key} = H(\ctx_i),\ \texttt{nonce} = b\bigr)
\in (0,1), \qquad b \in B_i.
\end{equation}
Position is deliberately absent from \eqref{eq:ctx}: under deletion at
rate $r$, the context $\ctx_i$ is unchanged for every group whose
predecessor retains all its records, so only the immediate successors of
groups that lost a record evaluate the primitive at fresh points (blast
radius one) and the detector re-aligns at the next group, whereas a
single position in the keying path would desynchronize every group from
the first dropped record onward. The window has memory one because a longer
window, while equally admissible, widens the damage of each deletion
from one group to the window length (Remark~\ref{rem:window},
Appendix~\ref{app:practical}).

We realize the selection rule as a keyed exponential race over the
candidate set,
\begin{equation}\label{eq:exp}
b_i := \argmin_{b \in B_i} \frac{-\ln r_b}{P_i[b]}
     \;=\; \argmax_{b \in B_i}\ r_b^{\,1/P_i[b]},
\end{equation}
implemented in logarithmic form: every admissible behavior runs
against its own keyed clock, and the agent plays the winner. By
Theorem~\ref{thm:df} the race reproduces the agent's distribution
exactly, $\Pr[\,b_i = b\,] = P_i[b]$ for every $b \in B_i$, so utility
is settled by construction. The winner's value $r_{b_i}$ is stochastically
above uniform precisely because it won; this residue is the watermark.
The detector replays \eqref{eq:ctx} and \eqref{eq:rb} on the observed
trajectory, trusting no self-reported value, and scores each effective
group by
\begin{equation}\label{eq:phi}
\varphi_i := -\ln\bigl(1 - r_{b_i}\bigr),
\end{equation}
so that $\varphi_i \sim \mathrm{Exp}(1)$ exactly when $r_{b_i}$ is an
unwatermarked uniform. Pooled over the $n$ effective groups,
deduplicated so that no two share an evaluation point
(Assumption~\ref{ass:prf}),
\begin{equation}\label{eq:z1}
X_1 := \sum_{i=1}^{n} \varphi_i,
\qquad
z_1 := \frac{X_1 - n}{\sqrt{n}},
\end{equation}
with exact $p$-value $Q(n, X_1)$, the upper Gamma tail.
Section~\ref{sec:theory} and Appendix~\ref{app:proofs} supply the
quantitative content:
$X_1 \sim \mathrm{Gamma}(n, 1)$ under $\Hnull$ (Lemma~\ref{lem:null1});
$\E[\varphi_i] \ge 1 + \tfrac12 \ent(P_i)$ under $\Halt$
(Theorem~\ref{thm:entropy}); and under deletion at rate $r$ every score
generically stays at or above the null while at least a $(1-r)^2$ fraction of the entropy
signal survives, so the expected selection $z$ stays positive for every
$r < 1$ (Proposition~\ref{prop:blast}). Against
rewriting the layer claims nothing: the carrier itself is overwritten,
and replacing $b_i$ by a race loser never raises $\E[\varphi_i]$ above
$1$, driving it strictly below whenever the substitute carries positive
probability (Proposition~\ref{prop:loser}). Covering this gap is the
purpose of the tally channel.

\subsection{The Tally Channel: Skeleton-Keyed Counting}\label{sec:layer2}

Rewriting fixes the skeleton, hence the group count $m$ and the counts
$(k_1, \dots, k_m)$ (Proposition~\ref{prop:skeleton}); the tally channel
reads nothing else. The carrier is $k_i \in \{1, 2\}$, a zero-bit keyed
pattern, and the keying is a function of $(\keytwo, \texttt{qid}, i)$
alone:
\begin{equation}\label{eq:target}
\rho_i := \drbg\bigl(\texttt{key} = H(\keytwo),\
\texttt{nonce} = H(\texttt{qid} \,\|\, i)\bigr),
\qquad
G_i :=
\begin{cases}
\{1\}, & \rho_i < \tfrac12,\\
\{2\}, & \rho_i \ge \tfrac12,
\end{cases}
\end{equation}
with \texttt{qid} the task-instance identifier. Position, fatal in the
selection channel, is admissible here because rewriting cannot move it; an
earlier design keyed the target on $A_{i-1}$, and identity substitution
then desynchronized the recomputed targets from the embedded ones,
destroying the channel. The embedder realizes the target by
\[
k_i \;=\; 1 \;+\;
\mathbf{1}\bigl[\,G_i = \{2\}\,\bigr]\cdot
\mathbf{1}\bigl[\,\text{augmentation admissible for } g_i\,\bigr],
\]
appending, when both indicators equal $1$, one redundant record of the
following kind after the group's primary observation, a record that
disturbs neither the semantic content of the log nor the task's
execution; admissibility means that the environment's logging format
permits such a record.

\begin{definition}[Context-neutral redundant record]\label{def:redundant}
Let $\pi$ be the trajectory prefix up to and including the primary
observation of group $g_i$. A record $\tilde{e}$ appended to $g_i$ is
\emph{context-neutral} if
(i) $c(\tilde{e}) = f(\pi)$ for a fixed deterministic function $f$, so
that $\tilde{e}$ is informationally redundant with the prefix and the
agent's effective context is unchanged;
(ii) producing $\tilde{e}$ invokes no tool and incurs no environment
side effect; and
(iii) $\rho(\tilde{e}) = \textsc{obs}$, with no leading \textsc{dec}
record.
\end{definition}

Conditions (i) and (ii) are the utility guarantee: $c(\tilde{e})$ is a
deterministic function of a prefix the agent already possesses, so no
decision-relevant information changes downstream, and no tool is
re-executed (real agent tools are frequently side-effectful APIs). The
layer's resistance to forging or erasing the tally signal rests on
condition (iii) alone (Remark~\ref{rem:distinguish}), while its
invariance under rewriting is a separate guarantee, carried by the
skeleton keying (Theorem~\ref{thm:rewrite}).

\begin{remark}[Why no leading decision]\label{rem:distinguish}
A genuine extra tool call is necessarily headed by its own \textsc{dec}
record and therefore opens a \emph{new} group of count $1$ under
Definition~\ref{def:group}; the watermark's redundant record, having no
leading decision, folds into the \emph{current} group and raises its
count to $2$. The tally channel's signal is thus precisely ``a record
with no decision at its head.'' To forge or erase it the
reseller must insert or delete a \textsc{dec} record, an operation
outside the rewriting class of Definition~\ref{def:rewrite}, and one
that desynchronizes the log from the executed action stream.
\end{remark}

The detector replays \eqref{eq:target} on the observed skeleton,
trusting no self-reported field, and counts hits over the $n$ pooled
groups, whose index set is determined by the skeleton and the
environment-evaluated effectiveness of Definition~\ref{def:attrs} and
is therefore itself untouched by rewriting,
\begin{equation}\label{eq:z2}
X_2 := \sum_{i=1}^{n} \mathbf{1}\bigl[\,k_i \in G_i\,\bigr],
\qquad
z_2 := \frac{X_2 - n p_0}{\sqrt{n\, p_0 (1 - p_0)}},
\end{equation}
with exact Binomial $p$-value; here $p_0 = \tfrac12$ for every baseline
whose group counts satisfy $k_i \equiv 1$
(Appendix~\ref{app:practical} treats the general case). For a baseline
whose groups carry a single observation absent augmentation, the
watermarked agent with admissible augmentation attains $X_2 = n$, that
is, $z_2 = \sqrt{n}$, deterministically. The statistic $X_2$ is
invariant under every rewriting attack (Theorem~\ref{thm:rewrite}),
and $\Pr_{\Hnull}[X_2 = n] = 2^{-n}$ when every pooled group carries
one or two records, conservatively less otherwise
(Lemma~\ref{lem:null2}).

\subsection{The Composed Scheme}\label{sec:compose}

The coupling between the layers is one-way: when $k_i = 2$ the
augmented sequence $A_i$, the chosen action followed by the redundant
record's identity, enters the selection channel's window at group
$i + 1$, but
the tally channel reads no content, so the coupling graph is acyclic, a
single edge from the tally channel into the selection channel.
Consequently the exact $p$-values $p_1$ and $p_2$ are
independent under $\Hnull$ (Proposition~\ref{prop:compose}), and the
detector rejects when $\min(p_1, p_2) \le \alpha/2$, which bounds the
false positive rate by $\alpha$. Whether the selection channel's window
includes redundant records is a design knob that trades a sliver of
deletion exposure against decoupling; our experiments use the coupled
default (Remark~\ref{rem:window}, Appendix~\ref{app:practical}).

\section{Theoretical Analysis}\label{sec:theory}

This section states the four theorems that carry the paper's claims
and reads each one against the experiments of Section~\ref{sec:exp};
false-positive control, exact at every sample size, needs no theorem
of its own (Section~\ref{sec:theory-null}). The supporting lemmas and
propositions, together with all proofs, are deferred to
Appendix~\ref{app:proofs}, and every distributional claim was
additionally verified by Monte Carlo simulation
(Appendix~\ref{app:practical}). Appendix~\ref{app:notation} collects the
notation, and Appendix~\ref{app:resultsmap} maps the dependency
structure of every result below.

\subsection{Utility Preservation}\label{sec:theory-utility}

\begin{theorem}[Distortion-freeness]\label{thm:df}
Under Assumption~\ref{ass:prf}, the rule \eqref{eq:exp} satisfies
$\Pr[\,b_i = b\,] = P_i[b]$ for every $b \in B_i$.
\end{theorem}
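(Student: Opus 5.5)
The plan is to reduce the exponential race \eqref{eq:exp} to the classical minimum-of-independent-exponentials argument. By Assumption~\ref{ass:prf}, and conditioning on the context $\ctx_i$ and the distribution $P_i$, the values $\{r_b\}_{b\in B_i}$ are i.i.d.\ uniform on $(0,1)$. They are evaluated at distinct nonces $b$ under one key $H(\ctx_i)$. They are also independent of $P_i$, which is elicited from the agent and does not depend on $\keyone$'s outputs. Candidates with $P_i[b]=0$ receive clock value $+\infty$ and never win, and ties occur with probability zero. So I may restrict attention to the support $S:=\{b : P_i[b]>0\}$ and argue almost surely.

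First, set $E_b := -\ln r_b$, so that each $E_b \sim \mathrm{Exp}(1)$ and the $E_b$ are independent. The clocks $T_b := E_b / P_i[b]$ are then independent with $T_b \sim \mathrm{Exp}(P_i[b])$, since $\Pr[T_b > t] = \Pr[E_b > P_i[b]\,t] = e^{-P_i[b]t}$. The rule \eqref{eq:exp} selects $b_i = \argmin_{b\in S} T_b$. The equivalence with the $\argmax$ form follows because $x\mapsto -\ln x / p$ is decreasing in $r_b^{1/p}$.

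Second, I compute the winning probability directly by conditioning on $T_b = t$:
\[
\Pr[b_i = b] = \int_0^\infty P_i[b]\,e^{-P_i[b]t}\prod_{b'\in S\setminus\{b\}} e^{-P_i[b']t}\,dt
= P_i[b]\int_0^\infty e^{-t\sum_{b'\in S}P_i[b']}\,dt = P_i[b],
\]
using $\sum_{b'\in S}P_i[b']=1$, since $P_i$ is normalized (or uniform under fallback). For $b\notin S$ the probability is $0 = P_i[b]$.

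The only delicate point, and the one I would flag as the main obstacle, is justifying that the $r_b$ are i.i.d.\ uniform and independent of $P_i$, conditional on everything that determines $\ctx_i$. The context depends on $A_{i-1}$, which was itself chosen using earlier DRBG outputs. However, those outputs were evaluated under a different key $H(\ctx_{i-1})$, so by Assumption~\ref{ass:prf} the current evaluation points are fresh. If two groups happen to share a context, the marginal statement for a single group still holds, because the argument conditions only on the current group. The statement concerns the per-decision law, so no deduplication is needed here. I would note the measure-zero events $r_b=0$ (excluded since $r_b\in(0,1)$) and ties, and conclude.
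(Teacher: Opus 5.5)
Your core argument is correct and is essentially the paper's proof. The paper's race lemma (Lemma~\ref{lem:race}) uses the same clocks $T_b = E_b/P_i[b]$ and the same integral, but computes the joint tail $\Pr[b_i = b,\ T > t] = P_i[b]\,e^{-t}$. That also gives $T \sim \mathrm{Exp}(1)$ independent of the winner, which Lemma~\ref{lem:cond} needs later; for Theorem~\ref{thm:df} alone, your $t = 0$ case is enough.

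Your last paragraph, where you resolve the obstacle you flagged, contains two errors.

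\emph{Freshness is not automatic.} Earlier outputs need not sit under a different key. For example, $\ctx_{i-1} = \ctx_i$ whenever $A_{i-2} = A_{i-1}$, as when an agent repeats an action. More generally, $\ctx_j = \ctx_i$ can occur for any $i < j$.

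\emph{The result does not survive a repeated context.} If a context repeats with overlapping candidates, the values $r_b$ at group $j$ are the very values that decided $b_i$. Meanwhile $P_j$ is a function of a history that contains $b_i$. So, conditional on $(\ctx_j, P_j)$, these values are neither uniform nor independent of $P_j$. With $B_j = B_i$ and $P_j = P_i$ non-degenerate, the race returns $b_j = b_i$ with conditional probability $1$, not $P_j[b_i]$. Even the averaged identity $\Pr[b_j = b] = \E\,P_j[b]$ can fail. Take $B_i = B_j = \{x, y\}$ and $P_i$ uniform, with $P_j[x] = 0.1$ if $b_i = x$ and $P_j[x] = 0.5$ otherwise. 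Then $\Pr[b_j = x] = \tfrac12 \cdot 0.2 + \tfrac12 \cdot 0 = 0.1$, while $\E\,P_j[x] = 0.3$.

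The paper's proof avoids this only by reading Assumption~\ref{ass:prf} as supplying i.i.d.\ uniform values at the current decision, independent of everything else. You should state that idealization explicitly: the theorem is a per-decision statement at a fresh evaluation point. Do not claim that collisions leave it intact.
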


\begin{takeaway}
The watermark is invisible in distribution at every decision: the keyed
race plays exactly the agent's own action distribution, not an
approximation of it, so the threat model's deployment constraint is met
by construction rather than by tuning.
\end{takeaway}

The proof, via the race lemma (Lemma~\ref{lem:race},
Appendix~\ref{app:race}), is the exponential-clocks form of
the Gumbel-max trick \citep{Maddison2014AS}. Table~\ref{tab:main} is
the theorem made visible: \method{} sits within seed noise of
\textsc{Base} on every benchmark, while the biased red--green
watermark pays $8.1$ points on ALFWorld ID, the price of moving
probability mass at low-entropy decisions that a distortion-free
sampler never moves.
The tally channel's utility guarantee is definitional rather than
distributional: by Definition~\ref{def:redundant}(i)--(ii) the appended
record is a deterministic function of context the agent already
possesses, invokes no tool, and incurs no side effect, so it changes no
decision-relevant information downstream.

\subsection{Exact False-Positive Control}\label{sec:theory-null}

Both detectors are exactly calibrated at every sample size, with no
asymptotics and no assumption beyond Assumption~\ref{ass:prf}. Under
$\Hnull$ the selection scores are i.i.d.\ $\mathrm{Exp}(1)$, so
$X_1 \sim \mathrm{Gamma}(n, 1)$ and its $p$-value is the upper Gamma
tail $Q(n, X_1)$ at every $n$ (Lemma~\ref{lem:null1}); the tally hits
are fair coins, so $X_2 \sim \mathrm{Bin}(n, \tfrac12)$, exactly so
when every pooled group carries one or two records and conservatively
otherwise, and a perfect hit count has probability exactly $2^{-n}$
(Lemma~\ref{lem:null2}).

The numbers compound fast: a watermarked agent forces every admissible
tally hit, so $30$ effective groups already certify provenance at a
false positive rate of $2^{-30} \approx 10^{-9}$. Forcing every hit also pins the alternative at $z_2 = \sqrt{n}$ by
\eqref{eq:z2}, the pooling rule visible in Table~\ref{tab:tb-detect},
where the $n = 50$ split's tally $z$ scales over the $n = 20$ splits
close to the predicted ratio ($1.45$ against
$\sqrt{2.5} \approx 1.58$), and \method{}'s wrong-key controls in
Table~\ref{tab:detect} are this null observed empirically: the tally
controls sit within noise of zero, and the selection controls sit
systematically below it, a direction that cannot inflate a one-sided
false positive rate. Rejecting when
$\min(p_1, p_2) \le \alpha/2$ bounds the combined false positive rate
by $\alpha$ via the union bound; the two channels' exact $p$-values
are moreover independent under $\Hnull$
(Proposition~\ref{prop:compose}), which licenses sharper combinations.

\subsection{The Entropy--Detectability Trade-off}\label{sec:theory-signal}

Under $\Halt$ the sampler \eqref{eq:exp} prefers candidates with large
$r_b$, so the replayed score sits stochastically above the null;
conditional on the winner $b$ it follows a generalized exponential law
with mean $\psi(1/p_b + 1) + \gamma$, where $\psi$ is the digamma
function and $\gamma$ the Euler--Mascheroni constant
(Lemma~\ref{lem:cond}; Corollaries~\ref{cor:uniform}
and~\ref{cor:pointmass} give its uniform and deterministic extremes).
Averaging that closed form yields the bound the channel turns on
(proof in Appendix~\ref{app:entropy}); to our knowledge it is new.

\begin{restatable}[Entropy lower bound on the signal]{theorem}{thmentropy}\label{thm:entropy}
For every distribution $P_i$,
\[
\E[\varphi_i] \;\ge\; 1 + \tfrac12\, \ent(P_i),
\]
with equality if and only if $P_i$ is a point mass.
\end{restatable}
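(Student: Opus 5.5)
The plan is to average the conditional law of Lemma~\ref{lem:cond} and then reduce everything to a single scalar inequality about the digamma function, checked term by term. By Lemma~\ref{lem:cond} and Theorem~\ref{thm:df}, the winner is $b$ with probability $p_b := P_i[b]$. Conditional on that event, $\varphi_i$ has mean $\psi(1/p_b+1)+\gamma$. Hence
\[
\E[\varphi_i] \;=\; \sum_{b:\,p_b>0} p_b\,\bigl(\psi(1/p_b+1)+\gamma\bigr).
\]
Candidates with $p_b = 0$ never win, and they contribute nothing to $\ent(P_i)$, so they drop out of both sides. Since $\sum_b p_b = 1$, the target can be rewritten as
\[
\sum_b p_b\bigl(1+\tfrac12\ln(1/p_b)\bigr) \;=\; 1+\tfrac12\ent(P_i).
\]
It therefore suffices to prove, for every $p\in(0,1]$, the pointwise inequality
\[
f(x) \;:=\; \psi(x+1)+\gamma-1-\tfrac12\ln x \;\ge\; 0,
\qquad x := 1/p \ge 1,
\]
with equality only at $x=1$. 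Multiplying by $p_b$ and summing then gives the theorem.

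The scalar inequality is handled by a boundary value and a monotonicity argument.
\begin{itemize}
\item At $x=1$ we have $\psi(2) = 1-\gamma$, so $f(1)=0$.
\item For the derivative, $f'(x) = \psi'(x+1) - \tfrac{1}{2x}$. I would invoke the standard trigamma bound $\psi'(y) > \tfrac1y + \tfrac{1}{2y^2}$ for $y>0$. This can be proved from the series $\psi'(y)=\sum_{k\ge0}(y+k)^{-2}$ by comparing the sum with the integral $\int_y^\infty t^{-2}\,dt$ and using convexity (trapezoid comparison), or it can simply be cited.
\item The bound gives $\psi'(x+1) > \tfrac{1}{x+1}$. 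Moreover $\tfrac{1}{x+1}\ge\tfrac1{2x}$ exactly when $x\ge 1$. Hence $f'(x)>0$ on $[1,\infty)$.
\end{itemize}
So $f$ is strictly increasing there, and $f(x)>0$ for every $x>1$.

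For the equality case, the sum of nonnegative terms $p_b f(1/p_b)$ vanishes iff every support point has $p_b=1$, that is, iff $P_i$ is a point mass. This matches Corollary~\ref{cor:pointmass}, where $\E[\varphi_i]=1$ and $\ent(P_i)=0$.

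The main obstacle is the digamma step: a clean, citable lower bound on $\psi'$ that is strong enough near $x=1$. The margin there is tight, because $f'(1) = \psi'(2)-\tfrac12 = \pi^2/6-1-\tfrac12 \approx 0.145$. If the integral-comparison bound proves awkward, the fallback is to write $f$ directly as $\psi(x+1)-\psi(2)-\tfrac12\ln x = \int_1^x \bigl(\psi'(t+1)-\tfrac1{2t}\bigr)\,dt$ and bound the integrand using the series for $\psi'$. A second fallback is to use the known inequality $\psi(x+1) \ge \ln(x+\tfrac12)$ for large $x$, and check the remaining compact range separately.
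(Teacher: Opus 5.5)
Your proof is correct and follows the paper's argument essentially step for step: the same reduction via Lemma~\ref{lem:cond} to the scalar inequality $\psi(x+1)+\gamma \ge 1+\tfrac12\ln x$ on $x\ge 1$, the same boundary value $\psi(2)=1-\gamma$, and the same monotonicity argument through the trigamma bound of Lemma~\ref{lem:trigamma}. One small improvement over the paper is your observation that only $\psi'(x+1)>\tfrac{1}{x+1}\ge\tfrac{1}{2x}$ is needed, and $\psi'(y)>1/y$ already follows from comparing the decreasing series $\sum_{k\ge0}(y+k)^{-2}$ with $\int_0^\infty (y+t)^{-2}\,dt$; this removes the need for the paper's trapezoid/convexity bound, and it means the ``tight margin'' you flag at $x=1$ is not actually an obstacle.
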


\begin{takeaway}
Detection is paid for in entropy: every decision yields signal worth
at least half its entropy, and a deterministic decision yields none,
so a distortion-free watermark must pool short or low-entropy
trajectories rather than bias them.
\end{takeaway}

The
limit is broader than \method{}: a deterministic decision admits no
keyed variation under any distortion-free rule, so we read the entropy
price as intrinsic to distortion-free behavioral watermarking, and the
theorem as one sharp instance of it.
The trade-off is visible in the experiments. The detector pools groups
across trajectories at the explicit sample-complexity rate of
Corollary~\ref{cor:power}, and the two benchmarks separate just as the
bound's $\ent/2$ per-group rate predicts (Table~\ref{tab:detect}):
ToolBench's roughly $1.4$
effective groups per task yield selection $z \approx 4.5$ per split, while
ALFWorld's $23$ to $25$ decision steps accumulate the same guaranteed
per-group drift into $z$ between $94$ and $103$. The red--green watermark buys
its signal with probability mass instead and trails \method{} by $9.3$
success-rate points on ALFWorld ID (Table~\ref{tab:main}); \method{}
pays only in randomness the agent already spends.

\subsection{Robustness under the Threat Model}\label{sec:theory-robust}

Table~\ref{tab:ortho} is the design in one view: each layer's soft spot
is the other's strength.

\paragraph{Deletion: blast radius one.}
Deletion drops observation records, so it lowers the counts $k_i$ the
tally reads and corrupts that channel directly. The selection channel is
content-keyed: its carrier is the preserved decision $b_i$, so deletion
cannot substitute a race loser and every score stays at or above the
null, never below it save a rare context coincidence (the loser drift of
rewriting is impossible here).
Only the keying context can desynchronize, and the memory-one window
\eqref{eq:ctx} confines each dropped record's damage to a single
neighboring group, so at least a $(1-r)^2$ fraction of the entropy signal
survives at rate $r$ and the selection channel's expected score stays
positive for every $r < 1$ (Proposition~\ref{prop:blast}, Appendix~\ref{app:blast}).
This is the formal sense in which the selection channel is
self-synchronizing, and it is what Figure~\ref{fig:robust}(a) shows:
selection $z$ falls from $11.34$ to $4.06$ at $70\%$ deletion yet still
clears the threshold, while the count-keyed tally crosses it.

\paragraph{Rewriting: the selection channel inverts; the tally channel does not move.}
Against rewriting the selection channel claims nothing, and its failure is
sharper than erasure. Every identity the LLM rewriter substitutes is,
by construction, a candidate that lost the keyed race, so the replayed
score never rises above the null and falls strictly \emph{below} even
an unwatermarked log's whenever the substitute carried positive
probability under the agent's distribution
(Proposition~\ref{prop:loser}, Appendix~\ref{app:rewriteproofs}). That
is the below-null drift of Figure~\ref{fig:robust}(b), where selection $z$
crosses zero near $q \approx 0.5$ and ends at $-3.51$, and it is why no
amount of extra pooling rescues a content-keyed channel against a
rewriting reseller: pooling amplifies negative drift. A second,
rewrite-invariant channel is a necessity, not a preference, and its
guarantee is unconditional.

\begin{restatable}[Rewrite invariance of the tally]{theorem}{thmrewrite}\label{thm:rewrite}
For every rewriting attack $R$ (Definition~\ref{def:rewrite}) and every
trajectory $\tau$: $X_2(R(\tau)) = X_2(\tau)$, hence
$z_2(R(\tau)) = z_2(\tau)$.
\end{restatable}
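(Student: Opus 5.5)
The proof is a composition argument: show that every ingredient of $X_2$ is a function of data that a rewriting attack leaves fixed. First I will write $X_2$ explicitly as
\[
X_2(\tau) = F\bigl(s(\tau),\ \mathcal{I}(\tau),\ (G_i)_i\bigr),
\]
the sum of the indicators $\mathbf{1}[k_i \in G_i]$ over the pooled index set $\mathcal{I}(\tau)$, after deduplication. I then check that each of the three arguments is unchanged when $\tau$ is replaced by $R(\tau)$.

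For the counts, Definition~\ref{def:rewrite} gives $s(R(\tau)) = s(\tau)$. Proposition~\ref{prop:skeleton} then yields the same number of groups $m$, the same group boundaries, and the same counts $(k_1,\dots,k_m)$.

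For the targets, \eqref{eq:target} computes $\rho_i$ from $(\keytwo, \texttt{qid}, i)$ alone. The key is held by the verifier and $\texttt{qid}$ identifies the task instance; neither is a record content field, so I will state explicitly that it is read from the environment or task context, like $B_i$. The index $i$ is a group position, which is fixed by the previous step. Hence every $G_i$ is unchanged. This holds deterministically, so Assumption~\ref{ass:prf} is not needed.

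For the index set, the paper asserts below \eqref{eq:z2} that the pooled index set is determined by the skeleton and by effectiveness. By Definition~\ref{def:attrs}, effectiveness is evaluated against the environment-supplied $B_i$ and the executed action stream, not reseller-editable content, so the set of effective groups is rewrite-invariant. Deduplication must also be handled. The tally evaluation points are the nonces $H(\texttt{qid}\,\|\,i)$, which depend only on invariant data, so the deduplicated subset and its size $n$ are unchanged too. The main obstacle is exactly this bookkeeping: ensuring that no content-dependent field (for example a deduplication rule or a $\texttt{qid}$ read from the log) leaks into the index set. I will settle it by appealing to the threat model's stipulation that the verifier reads $B_i$, the execution stream, and the task identifier from fields the reseller cannot edit.

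With all three arguments equal, the sums are termwise identical, so $X_2(R(\tau)) = X_2(\tau)$. Since $z_2$ in \eqref{eq:z2} depends only on $X_2$, $n$, and the constant $p_0$ (which is $\tfrac12$, or in the general case of Appendix~\ref{app:practical} a function of the counts and targets, both invariant), it follows that $z_2(R(\tau)) = z_2(\tau)$.
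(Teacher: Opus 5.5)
Your proposal is correct and takes essentially the same route as the paper's proof. The counts are fixed by Proposition~\ref{prop:skeleton}, and each target $G_i$ depends only on $(\keytwo, \texttt{qid}, i)$, with \texttt{qid} treated as non-editable metadata rather than record content. The pooled index set, and hence $n$, is fixed because effectiveness is evaluated against the environment-supplied $B_i$ and the executed action stream, while deduplication reads only the content-free nonces $H(\texttt{qid}\,\|\,i)$.

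One small inaccuracy: you describe the general-case $p_0$ as a function of the trajectory's counts and targets, whereas the paper reports an empirically calibrated $p_0$. This is harmless, since any $p_0$ not read from record contents leaves $z_2$ invariant.
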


\begin{takeaway}
Rewrite robustness here is an identity, not a bound: no rewriting
attack of any strength moves the tally statistic by a single bit,
because both its carrier and its key live in the skeleton that
rewriting, by definition, cannot touch.
\end{takeaway}

The invariance is
exactly as wide as Definition~\ref{def:rewrite}: semantic
rebranding (paraphrase and renaming in place) is covered, since it
preserves the tag sequence, whereas transformations that alter that
sequence, by inserting or deleting records, leave the class and are
met by Theorem~\ref{thm:erasure}(a) and the log/execution
consistency audit.
The experiments realize the identity to the digit: the tally $z$ stays
pinned at $14.34$ for every rewriting strength $q$ in
Figure~\ref{fig:robust}(b), with zero variation within each seed, while
both single-signal baselines collapse on the same axis. Against a
reseller who owns the log, an invariant the editor cannot move is the
one guarantee no rewriting effort can buy down. The headline claim is
then that erasing \emph{both} layers at once is qualitatively more
expensive than erasing either.

\begin{restatable}[Cost of joint erasure]{theorem}{thmerasure}\label{thm:erasure}
Let $\tau$ be watermarked with $m$ effective single primary-observation groups
($k_i = 1$ absent augmentation) and admissible augmentation throughout,
each group with $\ent(P_i) \ge h > 0$, and let
$A$ be any attack producing $\tau'$.
\begin{itemize}
\item[(a)] If $A$ is skeleton-preserving, then
$z_2(\tau') = z_2(\tau) = \sqrt{m}$: the tally channel is untouched. Hence any
attack with $\E[z_2(\tau')] < \sqrt{m}$ edits the skeleton, deleting
or inserting records, and is exposed to the log/execution consistency
audit.
\item[(b)] If $A$ corrupts the selection-channel evaluation point
(its context or its selected identity) of at most $a$ groups, whether by
deleting records that feed a group's context or by altering a selected
identity, and chooses the corrupted set \emph{obliviously}, that is,
independently of the realized selection-channel values (the
substitution \emph{within} a corrupted group may still be informed),
then, since the preserved decisions keep all $m$ groups,
\[
\E\bigl[z_1(\tau')\bigr]
\;\ge\;
\frac{(m - a)\bigl(1 + \tfrac{h}{2}\bigr) - m}{\sqrt{m}} .
\]
Consequently, for any threshold $\theta \ge 0$, $\E[z_1(\tau')] \le \theta$ forces
\[
a \;\ge\; \frac{h}{2 + h}\, m \;-\; \theta \sqrt{m}:
\]
the reseller must corrupt a constant fraction of the trajectory.
\end{itemize}
\end{restatable}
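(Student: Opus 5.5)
Part (a) follows by combining Proposition~\ref{prop:skeleton} with Theorem~\ref{thm:rewrite}. A skeleton-preserving attack satisfies $s(\tau') = s(\tau)$, so it lies in the rewriting class of Definition~\ref{def:rewrite}. Theorem~\ref{thm:rewrite} then gives $X_2(\tau') = X_2(\tau)$. The pooled index set is also unchanged, because effectiveness is evaluated against the environment-supplied $B_i$ and the executed stream (Definition~\ref{def:attrs}). Under the hypotheses (single primary observation, admissible augmentation everywhere), the embedder forces $k_i \in G_i$ for every group. Hence $X_2(\tau) = m$, and with $p_0 = \tfrac12$ we get $z_2 = (m - m/2)/\sqrt{m/4} = \sqrt{m}$. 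The contrapositive is then immediate: $\E[z_2(\tau')] < \sqrt{m}$ requires $z_2(\tau') \ne \sqrt{m}$ with positive probability. That can only happen through a non-skeleton-preserving edit, i.e.\ an insertion or deletion, which is exactly what the consistency audit flags.

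For part (b), I split the $m$ groups, all retained because decisions are preserved, into a corrupted set $C$ with $|C| \le a$ and an intact set $U$. By linearity, $\E[X_1(\tau')] = \sum_{i \in U} \E[\varphi_i'] + \sum_{i \in C} \E[\varphi_i']$. On $U$ the evaluation point, meaning both the context and the selected identity, is identical to the embedding one, so $\varphi_i' = \varphi_i$.

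The delicate point is that conditioning on $i \in U$ must not bias $\varphi_i$. This is where obliviousness enters: $C$ is chosen independently of the realized selection values, so $\E[\varphi_i \mid i \in U] = \E[\varphi_i] \ge 1 + \tfrac12 \ent(P_i) \ge 1 + h/2$ by Theorem~\ref{thm:entropy}.

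For $i \in C$ I need $\E[\varphi_i'] \ge 0$, and this trivially holds since $\varphi = -\ln(1 - r) \ge 0$. This crude bound is what the displayed inequality uses. Proposition~\ref{prop:loser} shows that an informed substitution can push the score below the null mean of $1$, though never below $0$, so I would not try to claim more than nonnegativity here. Summing gives $\E[X_1'] \ge (m - |C|)(1 + h/2) \ge (m - a)(1 + h/2)$. Subtracting $m$ and dividing by $\sqrt{m}$, since $n = m$, yields the stated bound on $\E[z_1(\tau')]$.

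The consequence is algebra. If $\E[z_1] \le \theta$, then $(m - a)(1 + h/2) - m \le \theta\sqrt{m}$, so $a(1 + h/2) \ge mh/2 - \theta\sqrt{m}$. Dividing by $1 + h/2 = (2 + h)/2$ gives $a \ge \frac{h}{2+h} m - \frac{2\theta\sqrt{m}}{2+h}$. Because $2/(2+h) \le 1$ and $\theta \ge 0$, this implies the weaker stated form $a \ge \frac{h}{2+h} m - \theta\sqrt{m}$.

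I expect the main obstacle to be the rigorous version of the obliviousness step. A deletion corrupts the context of the successor group, and so must be counted toward $a$ even though the group's identity survives. The random set $C$ may also depend on the trajectory's content and on $P_i$, just not on the keyed values. I would handle this by conditioning on everything except the DRBG outputs: by Assumption~\ref{ass:prf} the race values are then independent of $C$, and the per-group bound of Theorem~\ref{thm:entropy} applies conditionally. Deduplication of coinciding evaluation points is only needed for the null law, so it plays no role here.
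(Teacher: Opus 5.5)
For the theorem as stated, your argument is correct and is the paper's proof. Part (a) reduces to Theorem~\ref{thm:rewrite} plus the forced hits $X_2 = m$. Part (b) uses the same intact/corrupted split, the entropy bound on intact groups licensed by obliviousness, only the pointwise floor $\varphi_i \ge 0$ on corrupted groups, and the same algebra. The paper, too, passes through $\theta\sqrt{m}/(1+h/2)$ before relaxing to $\theta\sqrt{m}$.

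What you should retract is the extension in your last paragraph, that $C$ ``may also depend on the trajectory's content.''

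\emph{Why it fails.} The recorded identities $b_i$, and every observation downstream of them, are functions of the race values. So once $C$ reads them, no conditioning on ``everything except the DRBG outputs'' makes the race values independent of $C$. That conditioning does not even fix $P_i$, which depends on earlier selections. Given $b_i = b$, Lemma~\ref{lem:cond} replaces the entropy bound by $\psi(1/p_b+1)+\gamma$, which for a likely winner lies below $1+h/2$.

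\emph{A concrete counterexample.} Take two candidates with probabilities $(0.99, 0.01)$ in every group. Rewrite to the likely candidate exactly those groups whose recorded winner was the unlikely one. Each group now always records the same candidate, at either its embedded context or a fresh shifted one, so every replayed score has mean exactly $1$ and $\E[z_1(\tau')] = 0$. Yet only about $1 - 0.99^2 \approx 0.020$ of the evaluation points are corrupted (each rewrite also shifts its successor's context). That is below the $\frac{h}{2+h} \approx 0.027$ fraction the bound would force. This is the score-adaptive attack of Remark~\ref{rem:adaptive}.

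\emph{A correct relaxation.} For each $i$, require the event $\{i \in U\}$ to be conditionally independent of group $i$'s race values given the history $\mathcal{F}_{i-1}$ up to that race. It may depend on $B_i$, $P_i$, earlier content and independent coins, but not on $b_i$ or anything after it. Then $\E[\varphi_i \mathbf{1}[i\in U] \mid \mathcal{F}_{i-1}] \ge (1+h/2)\Pr[i \in U \mid \mathcal{F}_{i-1}]$ by Theorem~\ref{thm:entropy} applied to the fresh race. Summing with $\E[|U|] \ge m-a$ gives the bound. Under the theorem's actual hypothesis your main argument needs none of this.
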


\begin{takeaway}
There is no cheap way to erase both channels: silencing the tally
channel means editing the skeleton and facing the consistency audit,
and silencing the selection channel means that any attack not
targeting groups by their realized
scores must corrupt a constant fraction of the very actions the resold
service depends on. Laundering the log means corrupting the product.
\end{takeaway}

\begin{table}[!t]
\centering
\caption{Orthogonal vulnerable surfaces: each channel is robust
precisely where the other breaks, with the cited results carrying the
guarantees.}
\label{tab:ortho}
\begin{tabular}{lll}
\toprule
Attack & Selection channel (carrier $b_i$) & Tally channel (carrier $k_i$) \\
\midrule
Deletion / dropped observations & robust: blast radius 1
(Prop.~\ref{prop:blast}) & broken (counts fall) \\
Rewriting / LLM rewriter & broken
(Prop.~\ref{prop:loser}) & invariant (Thm.~\ref{thm:rewrite}) \\
Combined (delete $+$ rewrite) & \multicolumn{2}{l}{cost bounded below
by Theorem~\ref{thm:erasure}} \\
\bottomrule
\end{tabular}
\end{table}

This is why only the joint high-deletion, high-substitution corner of
Figure~\ref{fig:heat} suppresses both detectors, and why neither cost
can be avoided by attacking only the released log, since detection
reads the grouping reconstructed from execution
(Section~\ref{sec:threat}). The oblivious-set hypothesis in (b) is
necessary rather than technical, since an attacker that targets groups
by their realized scores can do better, and the score-adaptive rate is
open (Remark~\ref{rem:adaptive}, Appendix~\ref{app:erasure}); the LLM
rewriter of Section~\ref{sec:exp} attacks a random fraction of groups
and so falls inside the covered regime.

\section{Experiments}\label{sec:exp}

We evaluate the three claims the theory makes: distortion-freeness
costs no utility (Section~\ref{sec:exp-utility}), detection is
calibrated and attributes the agent at deployment-relevant evidence
sizes (Section~\ref{sec:exp-detect}), and the two channels fail only
together, under exactly the attacks the threat model names
(Section~\ref{sec:exp-robust}).

\subsection{Experimental Setup}\label{sec:exp-setup}

\paragraph{Schemes.}
Four arms run on every benchmark. \textsc{Base} is the unwatermarked
agent. \textsc{AM-F} is AgentMark \citep{huang2026agentmark}, the
closest existing agent watermark: distribution-preserving multi-bit
embedding whose payload is recovered through random linear network
coding (RLNC). \textsc{RG} is the red--green watermark of
\citet{kirchenbauer2023watermark} lifted from tokens to behaviors,
biasing selection toward a keyed green subset of $B_i$
($\gamma = 0.5$, $\delta = 2.0$) under the same content window as
\method{}'s selection channel; it represents the canonical biased,
single-signal design point. \method{} is the scheme of
Section~\ref{sec:method}, its keys expanded through SHA-256 into the
material driving the HMAC-SHA512 DRBG.

\paragraph{Data.}
ToolBench \citep{qin2024toolllm} evaluates tool-use decision making
over a large corpus of real-world APIs. We use its six test splits,
T1 to T6, spanning the single-tool, intra-category multi-tool, and
intra-collection multi-tool regimes under held-out instructions,
tools, and categories ($n = 20$ per seed and split, except T5 with
$n = 50$); its trajectories are short, roughly $1.4$ effective
decision groups per task. ALFWorld \citep{Shridhar2020ALFWorldAT}
evaluates embodied household planning in interactive text
environments. Its six task families, A1 to A6, ask the agent to find,
process, and place objects: simple pick-and-place, cleaning, heating,
or cooling an item before putting it away, examining an object under a
lamp, and placing two objects; trajectories are long, $23$ to $25$
decision steps. We use ALFWorld's in-distribution validation split (ID, $140$
tasks) and its out-of-distribution split of unseen environments (OOD,
$134$ tasks). Task subsets are fixed and shared across arms and seeds,
all results are mean $\pm$ sample standard deviation over three seeds,
and the label-to-split mapping is tabulated in
Table~\ref{tab:splits} (Appendix~\ref{app:exp}). The backbone is GPT-5.4-mini, served through
an API relay; an ALFWorld ablation on a second, locally deployed
Qwen backbone is in Appendix~\ref{app:qwen}
(Tables~\ref{tab:qwen-util} and~\ref{tab:qwen-detect}).

\paragraph{Metrics.}
\emph{1) Utility:} success rate (SR), scored by one LLM judge
(GPT-5.4-mini, identical across arms; solved $1$, unsure $0.5$,
unsolved $0$), and steps per task; \method{} step counts include the
tally channel's redundant records, so they measure the full logged
overhead. \emph{2) Detection:} the pooled per-channel $z$ of
Section~\ref{sec:method}, thresholded at $\theta = 2$ and combined
across channels by $\max(z_1, z_2)$, the normal-approximation
counterpart of Algorithm~\ref{alg:detect}'s exact rule; wrong-key
controls (wk) rerun each detector under an independent key never used
at embedding, estimating the empirical null. Because the threat model
fixes the verifier's false positive rate in advance, we also report
\emph{TPR at calibrated FPR}: positives are watermarked trajectories
scored under the true key, negatives the same trajectories under the
wrong key; bundles of $B$ trajectories are pooled into one statistic,
and TPR@$x\%$FPR is the fraction of positive bundles above the
$(1{-}x\%)$ quantile of the negative bundles ($5000$ bootstrap
resamples). Under attack, each attack cell already pools $118$ to
$127$ trajectories into one $z$, so a cell is one fixed-size bundle
and
TPR@$1\%$FPR is the fraction of cells above the one-sided Gaussian
$1\%$ threshold $2.326$, calibrated by the wrong-key per-cell null.
Together these measure what a verifier holding the evidence can
attribute at its chosen FPR (protocol details in
Appendix~\ref{app:tpr}). AM-F is additionally scored on its native
metric, verified channel bits per task.

\paragraph{Attacks.}
The two classes of Section~\ref{sec:attacks}, applied to every arm:
random observation deletion at rate $r$ (five trials per seed), our LLM
rewriter at strength $q$ (three trials per seed), and their
composition, rewriting at strength $q$ followed by deletion at rate
$r$, over a $3 \times 6$ grid of $(r, q)$ with nine runs per cell; the
detection threshold is $\theta = 2$ throughout.

\begin{table}[!t]
\centering
\caption{Main results on the principal backbone: per-task success rate
and steps, mean $\pm$ std over three seeds (Avg.\ rows $n$-weighted
with deltas vs.\ \textsc{Base}; green: no degradation beyond seed
noise, red: clear drop; $\dagger$: no pooled std; \method{} steps
include the tally channel's redundant records).}
\label{tab:main}
\vspace{2mm}
\resizebox{\textwidth}{!}{%
\begin{tabular}{l|c|cccc|cccc}
\toprule
\multirow{2}{*}{\textbf{Setting}} & \multirow{2}{*}{\textbf{Task}}
& \multicolumn{4}{c|}{\textbf{SR (\%)} $\uparrow$}
& \multicolumn{4}{c}{\textbf{Steps / task}} \\
\cmidrule(lr){3-6} \cmidrule(lr){7-10}
& & Base & AM-F & RG & \method{}
& Base & AM-F & RG & \method{} \\
\midrule\midrule
\cellcolor{red!5} & A1 & $91.4{\pm}2.9$ & $93.3{\pm}4.4$ & $89.5{\pm}1.6$ & $91.4{\pm}2.9$
 & $18.1{\pm}2.3$ & $17.9{\pm}1.6$ & $17.8{\pm}0.9$ & $24.7{\pm}2.7$ \\
\cellcolor{red!5} & A2 & $85.2{\pm}3.7$ & $84.0{\pm}4.3$ & $81.5{\pm}7.4$ & $88.9{\pm}7.4$
 & $24.3{\pm}3.2$ & $23.5{\pm}1.6$ & $26.8{\pm}3.2$ & $30.1{\pm}5.4$ \\
\cellcolor{red!5} & A3 & $81.2{\pm}12.5$ & $89.6{\pm}3.6$ & $75.0{\pm}6.2$ & $93.8{\pm}6.2$
 & $25.1{\pm}3.1$ & $23.9{\pm}3.7$ & $30.6{\pm}0.9$ & $37.8{\pm}2.7$ \\
\cellcolor{red!5} & A4 & $76.0{\pm}4.0$ & $80.0{\pm}4.0$ & $57.3{\pm}8.3$ & $77.3{\pm}4.6$
 & $27.3{\pm}3.3$ & $24.6{\pm}1.7$ & $33.6{\pm}2.7$ & $40.7{\pm}3.3$ \\
\cellcolor{red!5} & A5 & $87.2{\pm}4.4$ & $92.3{\pm}7.7$ & $82.1{\pm}4.4$ & $94.9{\pm}8.9$
 & $19.6{\pm}0.8$ & $17.7{\pm}5.1$ & $21.6{\pm}2.8$ & $18.9{\pm}4.2$ \\
\cellcolor{red!5} & A6 & $70.8{\pm}0.0$ & $66.7{\pm}7.2$ & $56.9{\pm}9.6$ & $59.7{\pm}8.7$
 & $35.9{\pm}1.0$ & $32.7{\pm}1.8$ & $37.6{\pm}0.9$ & $51.7{\pm}2.2$ \\
\cellcolor{red!5}\multirow{-7}{*}{\textbf{\shortstack[l]{ALFWorld\\ID}}} & \textbf{Avg.}
 & $\mathbf{82.4{\pm}2.1}$
 & \cellcolor{green!66!black!20}$\mathbf{84.0{\pm}3.0}\,(\uparrow 1.6)$
 & \cellcolor{lightred}$\mathbf{74.3{\pm}2.9}\,(\downarrow 8.1)$
 & \cellcolor{green!66!black!20}$\mathbf{83.6{\pm}2.6}\,(\uparrow 1.2)$
 & $\mathbf{24.9}^{\dagger}$
 & $\mathbf{23.4}^{\dagger}\,(\downarrow 1.5)$
 & $\mathbf{27.6}^{\dagger}\,(\uparrow 2.7)$
 & $\mathbf{34.2{\pm}2.6}\,(\uparrow 9.3)$ \\
\midrule
\cellcolor{orange!7} & A1 & $90.3{\pm}4.8$ & $98.6{\pm}2.4$ & $80.6{\pm}2.4$ & $90.3{\pm}6.4$
 & $21.6{\pm}3.2$ & $15.2{\pm}0.4$ & $23.9{\pm}2.6$ & $29.6{\pm}2.4$ \\
\cellcolor{orange!7} & A2 & $87.1{\pm}5.6$ & $84.9{\pm}4.9$ & $82.8{\pm}4.9$ & $78.5{\pm}8.1$
 & $20.7{\pm}2.4$ & $23.1{\pm}1.0$ & $22.9{\pm}3.1$ & $38.9{\pm}4.3$ \\
\cellcolor{orange!7} & A3 & $85.5{\pm}2.5$ & $91.3{\pm}4.3$ & $76.8{\pm}6.6$ & $76.8{\pm}5.0$
 & $24.1{\pm}4.1$ & $21.9{\pm}2.9$ & $30.7{\pm}0.3$ & $41.9{\pm}3.2$ \\
\cellcolor{orange!7} & A4 & $90.5{\pm}0.0$ & $88.9{\pm}2.7$ & $87.3{\pm}5.5$ & $90.5{\pm}8.2$
 & $21.7{\pm}0.9$ & $19.8{\pm}1.5$ & $20.3{\pm}4.3$ & $27.9{\pm}3.5$ \\
\cellcolor{orange!7} & A5 & $88.9{\pm}5.6$ & $83.3{\pm}5.6$ & $83.3{\pm}0.0$ & $92.6{\pm}3.2$
 & $23.0{\pm}1.2$ & $19.0{\pm}1.5$ & $23.9{\pm}1.6$ & $27.9{\pm}4.9$ \\
\cellcolor{orange!7} & A6 & $33.3{\pm}9.0$ & $45.1{\pm}12.2$ & $51.0{\pm}6.8$ & $54.9{\pm}14.8$
 & $43.1{\pm}0.8$ & $42.4{\pm}0.9$ & $40.0{\pm}0.6$ & $55.6{\pm}6.6$ \\
\cellcolor{orange!7}\multirow{-7}{*}{\textbf{\shortstack[l]{ALFWorld\\OOD}}} & \textbf{Avg.}
 & $\mathbf{81.3{\pm}2.2}$
 & \cellcolor{green!66!black!20}$\mathbf{83.8{\pm}2.2}\,(\uparrow 2.5)$
 & \cellcolor{lightred}$\mathbf{78.1{\pm}4.1}\,(\downarrow 3.2)$
 & \cellcolor{green!66!black!20}$\mathbf{81.1{\pm}4.2}\,(\downarrow 0.2)$
 & $\mathbf{24.8}^{\dagger}$
 & $\mathbf{22.9}^{\dagger}\,(\downarrow 1.9)$
 & $\mathbf{26.3}^{\dagger}\,(\uparrow 1.5)$
 & $\mathbf{36.7{\pm}2.1}\,(\uparrow 11.9)$ \\
\midrule
\cellcolor{LavenderLight!20} & T1 & $72.5{\pm}10.9$ & $83.3{\pm}2.9$ & $78.3{\pm}2.9$ & $81.7{\pm}2.9$
 & $2.18{\pm}0.33$ & $1.97{\pm}0.13$ & $1.98{\pm}0.08$ & $1.80{\pm}0.22$ \\
\cellcolor{LavenderLight!20} & T2 & $85.0{\pm}5.0$ & $86.7{\pm}5.8$ & $78.3{\pm}12.6$ & $86.7{\pm}7.6$
 & $2.17{\pm}0.13$ & $2.00{\pm}0.22$ & $2.10{\pm}0.13$ & $1.93{\pm}0.13$ \\
\cellcolor{LavenderLight!20} & T3 & $78.3{\pm}7.6$ & $75.0{\pm}0.0$ & $76.7{\pm}10.4$ & $83.3{\pm}5.8$
 & $2.17{\pm}0.13$ & $2.07{\pm}0.10$ & $2.33{\pm}0.24$ & $2.37{\pm}0.34$ \\
\cellcolor{LavenderLight!20} & T4 & $91.7{\pm}7.6$ & $90.0{\pm}0.0$ & $88.3{\pm}2.9$ & $90.0{\pm}10.0$
 & $2.38{\pm}0.25$ & $2.28{\pm}0.19$ & $2.17{\pm}0.03$ & $2.40{\pm}0.66$ \\
\cellcolor{LavenderLight!20} & T5 & $69.3{\pm}5.0$ & $73.3{\pm}1.2$ & $75.3{\pm}3.1$ & $65.7{\pm}5.9$
 & $2.12{\pm}0.22$ & $2.01{\pm}0.16$ & $2.07{\pm}0.20$ & $1.81{\pm}0.05$ \\
\cellcolor{LavenderLight!20} & T6 & $78.3{\pm}2.9$ & $73.3{\pm}2.9$ & $71.7{\pm}2.9$ & $68.3{\pm}11.5$
 & $2.27{\pm}0.12$ & $1.97{\pm}0.08$ & $2.02{\pm}0.16$ & $2.35{\pm}0.90$ \\
\cellcolor{LavenderLight!20}\multirow{-7}{*}{\textbf{ToolBench}} & \textbf{Avg.}
 & $\mathbf{77.2{\pm}1.8}$
 & \cellcolor{green!66!black!20}$\mathbf{78.9{\pm}0.4}\,(\uparrow 1.7)$
 & \cellcolor{green!66!black!20}$\mathbf{77.6{\pm}3.4}\,(\uparrow 0.4)$
 & \cellcolor{green!66!black!20}$\mathbf{76.6{\pm}2.2}\,(\downarrow 0.6)$
 & $\mathbf{2.20{\pm}0.10}$
 & $\mathbf{2.04{\pm}0.05}\,(\downarrow 0.16)$
 & $\mathbf{2.10{\pm}0.08}\,(\downarrow 0.10)$
 & $\mathbf{2.05{\pm}0.16}\,(\downarrow 0.15)$ \\
\bottomrule
\end{tabular}}
\end{table}

\subsection{Task Utility under Watermarking}\label{sec:exp-utility}

Table~\ref{tab:main} is the empirical face of Theorem~\ref{thm:df}. On
ToolBench every arm sits within about two points of \textsc{Base} in
aggregate, where \method{}'s weighted success rate is indistinguishable
from \textsc{Base} at seed noise; the wider per-split swings (most
visibly T5 and T6) sit within the larger seed variance on those splits. On
ALFWorld \method{} matches \textsc{Base} in aggregate on both splits
($+1.2$\,pp ID, $-0.2$\,pp OOD, within seed noise), while the biased
\textsc{RG} pays $8.1$ points ID and $3.2$ OOD; the per-type rows
locate the damage where low-entropy decisions concentrate, most
visibly A4 ID. \textsc{AM-F}, also distribution-preserving, stays at
\textsc{Base} level, confirming that what RG pays for is the bias
itself. \method{}'s step counts include the tally channel's redundant
records, reported on the same all-episode denominator as every other
arm, and decomposing them shows where the gap to \textsc{Base} lives:
the redundant records contribute $11.0$ (ID) and $11.9$ (OOD) entries
per task, while the decision path itself runs $23.2$ and $24.8$ steps
against \textsc{Base}'s $24.9$ and $24.8$, at parity. The entire
overhead is watermark records that by Definition~\ref{def:redundant}
invoke no tool and incur no environment side effect, log lines rather
than agent work; on ToolBench the same accounting adds about $0.7$
redundant records on top of $1.4$ decision groups per task. The full
accounting, per split and per backbone, is tabulated in
Table~\ref{tab:redundant} (Appendix~\ref{app:redundant}). \textbf{Distortion-freeness is free in
practice: \method{} tracks the unwatermarked agent within seed noise
on every benchmark, while the biased watermark pays its utility tax
exactly where decisions are low-entropy.}

\begin{table}[!t]
\centering
\caption{Detection on the principal backbone: pooled $z$ per channel
with wrong-key (wk) controls and AM-F capacity in verified channel
bits ($\ddagger$:
mean over the six per-split statistics of Table~\ref{tab:tb-detect}).}
\label{tab:detect}
\footnotesize
\vspace{2mm}
\setlength{\tabcolsep}{3.6pt}
\begin{tabular}{l|cc|cccc|c}
\toprule
\multirow{2}{*}{\textbf{Setting}} & \multicolumn{2}{c|}{\textbf{RG}} & \multicolumn{4}{c|}{\textbf{\method{}}}
& \multicolumn{1}{c}{\textbf{AM-F}} \\
\cmidrule(lr){2-3} \cmidrule(lr){4-7} \cmidrule(lr){8-8}
& $z$ $\uparrow$ & wk & Sel.\ $z$ $\uparrow$ & Tally $z$ $\uparrow$
& wk sel. & wk tally & bits/task $\uparrow$ \\
\midrule\midrule
\cellcolor{red!5}ToolBench & $2.72^{\ddagger}$ & $-0.88^{\ddagger}$ & \sbest{$4.51^{\ddagger}$}
 & \best{$5.77^{\ddagger}$} & $-0.16^{\ddagger}$ & $0.43^{\ddagger}$
 & $1.68{\pm}0.06$ \\
\cellcolor{orange!7}ALFWorld ID & $37.37{\pm}0.58$ & $-0.04{\pm}1.29$ & \best{$94.15{\pm}6.16$}
 & \sbest{$54.38{\pm}2.22$} & $-4.53{\pm}1.62$ & $1.06{\pm}0.48$
 & $53.85{\pm}1.30$ \\
\cellcolor{orange!7}ALFWorld OOD & $34.48{\pm}1.31$ & $1.99{\pm}0.55$ & \best{$102.53{\pm}6.16$}
 & \sbest{$55.32{\pm}1.74$} & $-2.43{\pm}0.67$ & $0.73{\pm}0.29$
 & $49.34{\pm}2.21$ \\
\bottomrule
\end{tabular}
\end{table}

\begin{figure}[!t]
\centering
\includegraphics[width=0.58\textwidth]{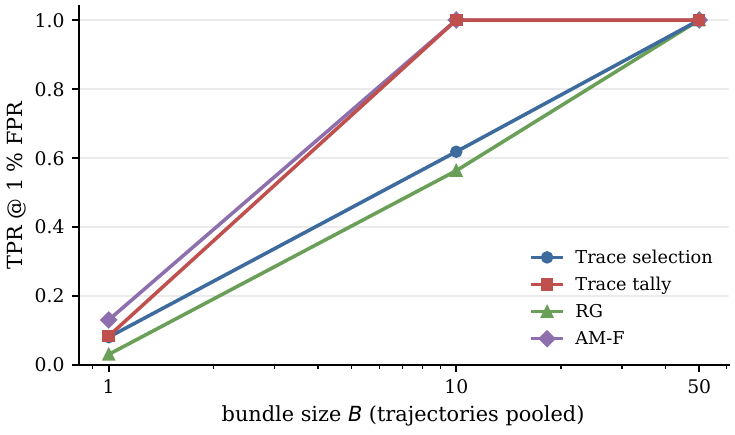}
\caption{TPR at $1\%$ FPR on ToolBench against the wrong-key null, as
a function of the number of pooled trajectories $B$ (ALFWorld and
per-FPR breakdowns in Appendix~\ref{app:tpr}).}
\label{fig:tpr}
\end{figure}

\subsection{Detection Power and Evidence Requirements}\label{sec:exp-detect}

Table~\ref{tab:detect} reports pooled detection, and \method{}'s
wrong-key controls behave as Section~\ref{sec:theory-null} demands:
the tally controls sit within noise of zero, and the selection
controls sit systematically \emph{below} it, the harmless side of a
one-sided test. The two benchmarks separate exactly along
Theorem~\ref{thm:entropy}: ToolBench's $1.4$ effective groups per task
yield single-digit per-split $z$, while ALFWorld's long horizons push
the selection channel to $z$ between $94$ and $103$ and the tally into
the fifties
(Table~\ref{tab:detect}; per-seed values in Table~\ref{tab:alf-seed}),
well above RG. AM-F's verified channel bits
follow the same horizons, $1.68$ per ToolBench task
against $49$ to $54$ on ALFWorld: capacity, like our signal, is bought
with decision entropy.

Figure~\ref{fig:tpr} asks the question a verifier actually faces: how
many trajectories buy attribution at a fixed false positive rate? On
ToolBench a single trajectory is not enough for \emph{any} behavioral
watermark: its trajectories are simply too short, and $1.4$ effective
decisions carry too little entropy, the price
Theorem~\ref{thm:entropy} fixes for every scheme that leaves the
agent's distribution intact. \method{} converts evidence into
attribution as fast as anything we tested. On the long-horizon
ALFWorld a \emph{single} trajectory already attributes most of the
time, TPR@$1\%$FPR reaching $0.87$ to $0.94$ for the selection channel
and $0.85$ to $0.86$ for the tally (full tables in
Appendix~\ref{app:tpr}, Tables~\ref{tab:tpr-clean}
and~\ref{tab:tpr-fpr}); on ToolBench, ten pooled trajectories take
the tally channel to TPR $1.000$, volume the resale setting supplies
by definition. AM-F's clean detection keeps pace at $B = 10$, but it
buys that power with a payload design that
Section~\ref{sec:exp-robust} shows collapsing under a single rewriting
pass, while the tally channel does not move. \textbf{A single
long-horizon trajectory, or ten short ones, attributes \method{}'s
agent at $1\%$ FPR, and \method{} alone keeps this power under every
single-axis attack that follows.}

\begin{figure}[!t]
\centering
\includegraphics[width=\textwidth]{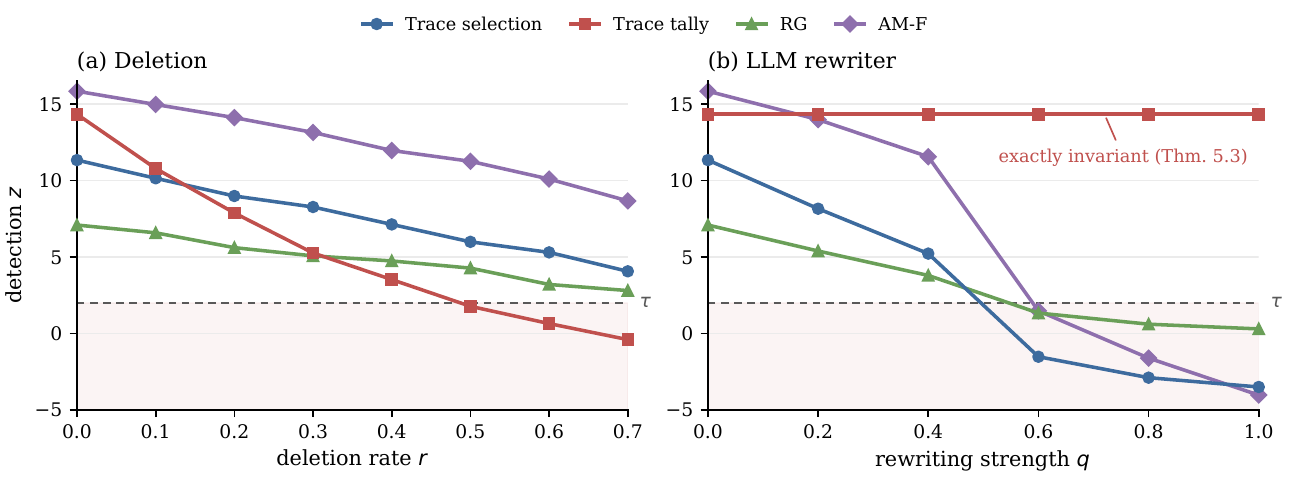}
\caption{Detection $z$ on ToolBench under (a) observation deletion at rate
$r$ and (b) our LLM rewriter at strength $q$ (shaded: below the
threshold $\theta = 2$).}
\label{fig:robust}
\end{figure}

\begin{figure}[!t]
\centering
\includegraphics[width=\textwidth]{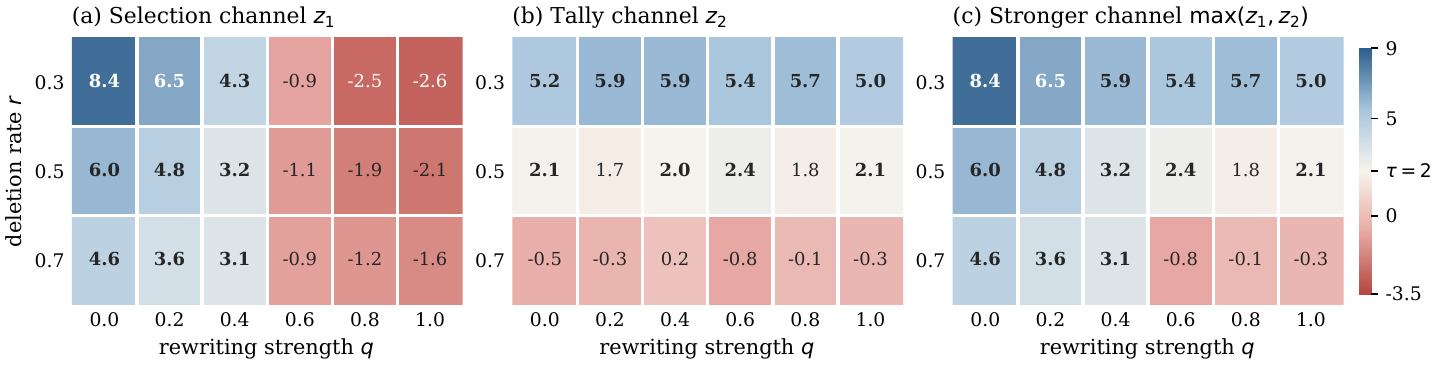}
\caption{Combined attack on \method{} (ToolBench): per-channel
detection $z$ and, in panel (c), the stronger channel, over deletion
rate $r$ and rewriting strength $q$, with colors diverging at
$\theta = 2$.}
\label{fig:heat}
\end{figure}

\subsection{Robustness under Deletion, Rewriting, and Combined Attacks}\label{sec:exp-robust}

The reseller of Section~\ref{sec:threat} launders the log with two
moves, pruning and rebranding, and its goal is to push detection below
$\theta$ while keeping the log sellable. Table~\ref{tab:ortho} predicts
exactly how this plays out: each attack class breaks one of \method{}'s
channels by design and cannot touch the other, whereas every baseline
carries a single signal and therefore owns a single fatal axis. This
subsection tests that prediction move by move.

\paragraph{Deletion (Figure~\ref{fig:robust}a).}
Dropping observations lowers the counts $k_i$ the tally reads, so the
count-keyed tally channel must fail, and it does, crossing $\theta$ near
$r \approx 0.5$.
The question deletion actually poses is whether anything survives it,
and the selection channel does: its carrier is the preserved decision,
so deletion can only desynchronize the keying context, and the
memory-one window confines each dropped record's damage to a single
neighboring group (Proposition~\ref{prop:blast}),
so the detector resynchronizes after every cut, and $z$ decays
gracefully from $11.34$ to $4.06$ at $70\%$ deletion, still above
threshold, with TPR@$1\%$FPR never below $0.93$ at any tested rate
(Table~\ref{tab:tpr-robust}). To
even reach that point, the reseller has already discarded seventy
percent of the service it bills for. The baselines also weather this
axis: RG, content-keyed like our selection channel but with a weaker
per-group signal, decays from $7.09$ to $2.80$; AM-F rides its RLNC
erasure coding to $z = 8.67$ at bit accuracy $1.00$ (full sweep in
Table~\ref{tab:base-del}). Deletion alone launders nothing.

\paragraph{Rewriting (Figure~\ref{fig:robust}b).}
Rebranding, renaming actions into the reseller's own namespace, is the
move a reseller mounts as a matter of course, and our LLM rewriter is
its strongest informed instance. It erases every content-keyed signal
in the comparison: RG falls from $7.09$ to $0.29$, AM-F from $15.85$
to $-4.04$ with bit accuracy $0.34$, and both baselines' TPR collapses
from $1.000$ to $0.000$ by $q = 0.8$ (full sweep in
Table~\ref{tab:base-sub}). The schemes that kept pace with
\method{} on clean logs are erased by a single rewriting pass.
\method{}'s own selection channel fails on this axis too, by design,
and even \emph{inverts}: every substituted identity is a race loser,
so the replayed score sinks below the null, crossing zero near
$q \approx 0.5$ and ending at $-3.51$, the drift
Proposition~\ref{prop:loser} predicts. The tally channel, whose
carrier and key live in the skeleton the rewriter cannot touch by
definition, does not move at all: $z = 14.34$ at every strength, with
zero variation across strengths within every seed, TPR $1.000$
throughout,
Theorem~\ref{thm:rewrite} realized to the digit. This is the gap
between empirical robustness and structural invariance: no rewriting
budget, however large, buys the reseller anything against the tally.

\paragraph{Combined attack (Figure~\ref{fig:heat}).}
The only move left is to delete and rewrite at once. Panels (a) and
(b) confirm that the two channels fail orthogonally, $q$ draining the
selection channel and $r$ the tally, nearly independently of one
another. Panel (c) shows what the combined detector sees, the stronger
of the two channels, and it stays above $\theta$ everywhere except the
corner $r = 0.7$ with $q \ge 0.6$ (plus the marginal cell
$(0.5, 0.8)$ at $z = 1.81$), exactly the regime
Theorem~\ref{thm:erasure} prices. Reaching that corner means rewriting
most of the steps \emph{and} deleting $70\%$ of the result: a log
that no longer resembles the service the reseller bills for, carrying
the skeleton edits the log/execution consistency audit flags. The RG
and AM-F grids (Table~\ref{tab:base-comb}; TPR versions in
Table~\ref{tab:tpr-comb}) need no corner: each is
already erased by one axis alone. \textbf{Every single-signal scheme
has a fatal axis; \method{} has none: under either single-axis attack
one channel keeps attributing at $1\%$ FPR, and silencing both costs
the reseller the very product it resells.}

\section{Conclusion}\label{sec:conclusion}

We introduced \method, a two-layer behavioral watermark for LLM agents
in which a distortion-free, content-keyed selection channel (robust to
deletion, blast radius one) and a position-keyed tally channel
(unconditionally invariant to rewriting) are superposed on one
trajectory with independent keys and one-way coupling.
Against an adversary holding the evidence,
attribution should decompose across complementary invariants, each
keyed to data one attack class must preserve, a principle we
conjecture extends beyond watermarking. The analysis
gives exact null distributions for both detectors, a closed-form
conditional law for the selection score, an entropy lower bound that
makes the trade-off between utility and detectability of
distortion-free agent watermarking precise, and a joint-erasure theorem
showing that suppressing both layers forces skeleton edits and, for
obliviously targeted attacks, constant-fraction deletion or alteration
simultaneously. Experiments on
ToolBench and ALFWorld bear the predictions out, including the exact
invariance of the tally channel under substitution and the below-null drift of
the selection channel under informed rewriting. Limitations and
directions: detection requires the verifier to access the candidate
sets, which assumes the environment's action space is available at
audit time; the two channels buy their robustness at complementary
prices, the selection channel paying in entropy, its power degrading
on near-deterministic agents exactly as Theorem~\ref{thm:entropy}
predicts with pooling requirements growing accordingly, and the tally
channel paying in log volume, its redundant records adding eleven to
twelve entries per ALFWorld task, an overhead in storage and audit
length rather than in tool calls or task success, since the decision
path stays at parity with the unwatermarked agent
(Section~\ref{sec:exp-utility}); and our guarantees concern
\emph{removal}, so a
reseller that abandons the provider's agent entirely and re-runs the
task on a different model produces a genuinely unwatermarked trajectory,
whose missing mark flags the misrepresentation but does not by itself
name the substitute. Extending the tally channel beyond $\{1,2\}$,
watermarking multi-agent interaction patterns,
and treating watermark removal as the attacker's
joint optimization of attribution score, task utility, and
stealth are natural next steps.

\paragraph{Impact statement.}
This work targets accountability infrastructure for autonomous agents,
specifically the resale setting in which a middleman controls the logs.
Watermarks of this kind can deter the misattribution of agent behavior
and support audit trails; conversely, any provenance tool can in
principle be used to track benign users of an agent system. Our scheme
watermarks the provider's own trajectories with the provider's keys and
reveals nothing about third parties; we believe the accountability
benefits outweigh the risks.

\bibliographystyle{plainnat}
\bibliography{references}

\appendix

\addtocontents{toc}{\protect\setcounter{tocdepth}{2}}
\clearpage
\begingroup
\hypersetup{linkcolor=black}
\renewcommand{\contentsname}{Table of Contents}
\tableofcontents
\endgroup
\clearpage

\section*{Appendix}

\section{Notation}\label{app:notation}

We collect the symbols used throughout. The conventions are those of
Sections~\ref{sec:prelim} and~\ref{sec:method}; this table is a
reference, not a redefinition, and every entry points to the place
where the object is introduced.

\begin{center}
\small
\renewcommand{\arraystretch}{1.2}
\begin{longtable}{@{}p{0.30\linewidth}p{0.64\linewidth}@{}}
\caption{Symbols used throughout \method{}, each with a pointer to where
it is introduced.}\label{tab:notation}\\
\toprule
\textbf{Symbol} & \textbf{Meaning} \\
\midrule
\endfirsthead
\toprule
\textbf{Symbol} & \textbf{Meaning} \\
\midrule
\endhead
\midrule
\multicolumn{2}{r}{\footnotesize\itshape continued on next page} \\
\endfoot
\bottomrule
\endlastfoot
\multicolumn{2}{@{}l}{\textit{Trajectories and grouping} (Section~\ref{sec:traj})} \\[2pt]
$\tau$ & agent trajectory: a finite sequence of tagged records (Def.~\ref{def:traj}) \\
$e_t,\ \rho(e_t),\ c(e_t)$ & record $t$, its role tag $\rho\in\{\textsc{dec},\textsc{obs}\}$, its content string (Def.~\ref{def:traj}) \\
$s(\tau)$ & skeleton: the tag sequence $(\rho(e_1),\dots,\rho(e_T))$ (Def.~\ref{def:traj}) \\
$g_1,\dots,g_m$ & decision-boundary grouping into $m$ groups (Def.~\ref{def:group}) \\
$k_i$ & number of \textsc{obs} records in $g_i$; the tally channel's carrier (Def.~\ref{def:group}) \\
$B_i,\ N_i=|B_i|$ & candidate behavior set at decision $i$ and its size (Def.~\ref{def:attrs}) \\
$P_i,\ b_i,\ A_i$ & agent distribution over $B_i$; the selected behavior; the recorded action-identity sequence (Def.~\ref{def:attrs}) \\
$\ent(P_i)$ & Shannon entropy of $P_i$ in nats (Def.~\ref{def:attrs}) \\
effective group & a group with $N_i\ge 2$ and non-terminal selection; only these are pooled (Def.~\ref{def:attrs}) \\
$n$ & number of effective, deduplicated pooled groups \\[4pt]
\multicolumn{2}{@{}l}{\textit{Keyed pseudorandomness} (Section~\ref{sec:traj})} \\[2pt]
$\drbg(\texttt{key},\texttt{nonce})$ & deterministic random bit generator in $[0,1)$, instantiated with HMAC-SHA512 \\
$H(\cdot)$ & hash deriving \textsc{drbg} keys and nonces from structured inputs \\
$\keyone,\ \keytwo$ & the provider's two independent secret keys \\
Assumption~\ref{ass:prf} & ideal pseudorandomness: i.i.d.\ uniform across distinct nonces, independent across keys \\[4pt]
\multicolumn{2}{@{}l}{\textit{Selection channel} (Section~\ref{sec:layer1})} \\[2pt]
$\ctx_i=\mathrm{enc}(A_{i-1})\,\|\,\keyone$ & content-keyed watermark context of group $i$ \eqref{eq:ctx} \\
$\mathrm{enc}$ & injective encoding of action-identity sequences \\
$r_b=\drbg(H(\ctx_i),b)$ & keyed value in $(0,1)$ for candidate $b$ \eqref{eq:rb} \\
$b_i=\argmax_b r_b^{1/P_i[b]}$ & distortion-free keyed exponential race \eqref{eq:exp} \\
$\varphi_i=-\ln(1-r_{b_i})$ & selection score of group $i$ \eqref{eq:phi} \\
$X_1,\ z_1$ & pooled selection sum $\sum_i\varphi_i$ and its $z$-statistic \eqref{eq:z1} \\
$Q(n,X_1)$ & upper regularized incomplete Gamma tail: the selection $p$-value \\
$p_1$ & exact selection-channel $p$-value \\[4pt]
\multicolumn{2}{@{}l}{\textit{Tally channel} (Section~\ref{sec:layer2})} \\[2pt]
$\rho_i=\drbg(H(\keytwo),H(\texttt{qid}\,\|\,i))$ & skeleton-keyed target draw for group $i$ \eqref{eq:target} \\
$G_i\in\{\{1\},\{2\}\}$ & keyed target count set \eqref{eq:target} \\
$\texttt{qid}$ & task-instance identifier \\
$\tilde e$ & context-neutral redundant record appended when $G_i=\{2\}$ (Def.~\ref{def:redundant}) \\
$p_0=\tfrac12$ & tally null hit rate \\
$X_2,\ z_2$ & tally hit count $\sum_i \mathbf{1}[k_i\in G_i]$ and its $z$-statistic \eqref{eq:z2} \\
$p_2$ & exact tally-channel $p$-value \\
$\theta$ & detection threshold on the pooled $z$-statistic ($\theta=2$ in the experiments) \\[4pt]
\multicolumn{2}{@{}l}{\textit{Threat model and attacks} (Sections~\ref{sec:threat}--\ref{sec:attacks})} \\[2pt]
provider / reseller / verifier & defender; log-holding adversary; key-holding auditor (Section~\ref{sec:threat}) \\
$r$ & per-observation deletion rate: each \textsc{obs} record dropped independently w.p.\ $r$ (Def.~\ref{def:del}) \\
$R,\ q$ & rewriting attack with $s(R(\tau))=s(\tau)$, at strength $q$ (Def.~\ref{def:rewrite}) \\
LLM rewriter & informed, plausibility-preserving substitution instance (Section~\ref{sec:attacks}) \\[4pt]
\multicolumn{2}{@{}l}{\textit{Detection and constants} (Section~\ref{sec:theory})} \\[2pt]
$\Hnull,\ \Halt$ & null (key-less) and alternative (watermarked) hypotheses (Section~\ref{sec:threat}) \\
$\alpha$ & target false positive rate; the detector rejects when $\min(p_1,p_2)\le\alpha/2$ \\
$\psi,\ \psi',\ \gamma$ & digamma, trigamma, and the Euler--Mascheroni constant (Lem.~\ref{lem:cond}) \\
generalized exponential law & conditional law of $\varphi_i$ given the winner, shape $1/p_b$ (Lem.~\ref{lem:cond}) \\
\end{longtable}
\end{center}

\section{Map of the Results}\label{app:resultsmap}

Figure~\ref{fig:resultsmap} traces the dependency structure of the
theory in four layers. Two restated facts supply the inputs the
construction stands on: the ideal-pseudorandomness assumption
(Assumption~\ref{ass:prf}) and the skeleton invariance of the grouping
(Proposition~\ref{prop:skeleton}, immediate from
Definition~\ref{def:group}). The lemmas are the load-bearing
mechanisms: the race lemma (Lemma~\ref{lem:race}) for distortion-freeness,
the conditional law of the score (Lemma~\ref{lem:cond}) and the trigamma
estimate (Lemma~\ref{lem:trigamma}) for the entropy bound, and the two
null laws (Lemmas~\ref{lem:null1} and~\ref{lem:null2}) for exact
false-positive control. The four boxed theorems assemble these:
distortion-freeness (Theorem~\ref{thm:df}) from the race lemma, the
entropy--detectability bound (Theorem~\ref{thm:entropy}) from the
conditional law and the trigamma estimate, rewrite invariance
(Theorem~\ref{thm:rewrite}) from skeleton invariance, and the
robustness propositions, blast radius one
(Proposition~\ref{prop:blast}) and below-null drift under informed
substitution (Proposition~\ref{prop:loser}), from the entropy bound and
the conditional law respectively. Everything converges on the capstone:
the joint-erasure theorem (Theorem~\ref{thm:erasure}) reads off rewrite
invariance, the tally null, the entropy bound, and the two robustness
propositions, pricing the cost of silencing both channels at once.
Off to the side, one-way coupling (Proposition~\ref{prop:compose})
combines the two null laws under the cross-key independence of
Assumption~\ref{ass:prf} to license the joint test, and the
sample-complexity corollary (Corollary~\ref{cor:power}) turns the
entropy bound, together with the conditional-law variance, into the
pooling rate.

\begin{figure}[!t]
\centering
\begin{tikzpicture}[
  >={Stealth[length=2mm]},
  every node/.style={font=\footnotesize},
  res/.style={draw, rounded corners=2pt, align=center, inner sep=3pt,
              fill=LightBlue!18, minimum height=8mm, text width=24mm},
  capstone/.style={draw, rounded corners=2pt, align=center, inner sep=3pt,
              fill=bestbg, minimum height=8mm, text width=24mm},
  ext/.style={draw, dashed, rounded corners=2pt, align=center, inner sep=3pt,
              fill=gray!8, minimum height=8mm, text width=24mm},
  dep/.style={->, gray!70, shorten >=1pt, shorten <=1pt},
]
\node[ext] (ap) at (0,1.5)   {Assumption~\ref{ass:prf}\\\scriptsize ideal PRF};
\node[ext] (sk) at (0,-3.4)  {Prop.~\ref{prop:skeleton}\\\scriptsize skeleton fixes positions};
\node[res] (race) at (3.8,3.4)  {Lemma~\ref{lem:race}\\\scriptsize race property};
\node[res] (cond) at (3.8,2.0)  {Lemma~\ref{lem:cond}\\\scriptsize conditional law};
\node[res] (trig) at (3.8,0.7)  {Lemma~\ref{lem:trigamma}\\\scriptsize trigamma bound};
\node[res] (n1)   at (3.8,-0.7) {Lemma~\ref{lem:null1}\\\scriptsize selection null};
\node[res] (n2)   at (3.8,-2.2) {Lemma~\ref{lem:null2}\\\scriptsize tally null};
\node[res] (df)   at (7.6,3.4)  {Thm~\ref{thm:df}\\\scriptsize distortion-free};
\node[res] (ent)  at (7.6,2.0)  {Thm~\ref{thm:entropy}\\\scriptsize entropy bound};
\node[res] (blast)at (7.6,0.7)  {Prop.~\ref{prop:blast}\\\scriptsize blast radius one};
\node[res] (loser)at (7.6,-0.7) {Prop.~\ref{prop:loser}\\\scriptsize below-null drift};
\node[res] (comp) at (7.6,-2.0) {Prop.~\ref{prop:compose}\\\scriptsize joint FPR};
\node[res] (rw)   at (7.6,-3.4) {Thm~\ref{thm:rewrite}\\\scriptsize rewrite invariance};
\node[res]      (pow)  at (11.4,2.0)  {Cor.~\ref{cor:power}\\\scriptsize detection power};
\node[capstone] (eras) at (11.4,-1.4) {Thm~\ref{thm:erasure}\\\scriptsize joint erasure};
\draw[dep] (ap) -- (race);
\draw[dep] (ap) -- (n1);
\draw[dep] (ap) -- (n2);
\draw[dep] (ap) -- (comp);
\draw[dep] (sk) -- (rw);
\draw[dep] (race) -- (df);
\draw[dep] (race) -- (cond);
\draw[dep] (cond) -- (ent);
\draw[dep] (cond) -- (loser);
\draw[dep] (trig) -- (ent);
\draw[dep] (ent) -- (pow);
\draw[dep] (cond) to[out=22,in=158] (pow);
\draw[dep] (ent) -- (blast);
\draw[dep] (ent) -- (eras);
\draw[dep] (n1) -- (comp);
\draw[dep] (n2) -- (comp);
\draw[dep] (blast) -- (eras);
\draw[dep] (loser) -- (eras);
\draw[dep] (rw) -- (eras);
\draw[dep] (n2) to[out=-25,in=200] (eras);
\end{tikzpicture}
\caption{Dependency map of the theory. Dashed boxes are restated
foundational facts; solid boxes are results of this paper; the yellow
box is the capstone. Arrows trace each result's principal proof inputs;
secondary and purely transitive dependencies are omitted for clarity.
Utility and detection (top) and robustness (bottom) meet at the
joint-erasure theorem.}
\label{fig:resultsmap}
\end{figure}

\section{Algorithms and the LLM Rewriter Prompt}\label{app:algs}

Algorithms~\ref{alg:embed} and~\ref{alg:detect} give the embedder and
detector of Section~\ref{sec:method} in full. The prompt below
instantiates the LLM rewriter, the informed rewriting attack we
introduce in Section~\ref{sec:attacks}; one call is issued per attacked
group.

\begin{algorithm}[!ht]
\caption{\method{} embedding (one task instance)}
\label{alg:embed}
\begin{algorithmic}[1]
\State $A_0 \gets \textsc{bootstrap}$
\For{each decision $i = 1, 2, \dots$}
  \State elicit $B_i$, $P_i$ from agent and environment
  \State $\ctx_i \gets \mathrm{enc}(A_{i-1}) \,\|\, \keyone$;\quad
         $r_b \gets \drbg(H(\ctx_i), b)$ for all $b \in B_i$
  \State $b_i \gets \argmin_b\, (-\ln r_b)/P_i[b]$
         \Comment{selection channel: distortion-free selection}
  \State execute $b_i$; record decision and primary observation
  \State $\rho_i \gets \drbg(H(\keytwo), H(\texttt{qid} \| i))$
  \If{$\rho_i \ge \tfrac12$ \textbf{and} augmentation admissible}
    \State append context-neutral redundant record $\tilde e$ with
           $c(\tilde e) = f(\pi)$ \Comment{tally channel: $k_i = 2$}
  \EndIf
  \State $A_i \gets$ action identities recorded in $g_i$
\EndFor
\end{algorithmic}
\end{algorithm}

\begin{algorithm}[!ht]
\caption{\method{} detection (pooled over trajectories)}
\label{alg:detect}
\begin{algorithmic}[1]
\State regroup each trajectory by its skeleton (Definition~\ref{def:group});
       never trust self-reported indices or values
\State deduplicate: among groups sharing an evaluation point, keep one
       representative (Assumption~\ref{ass:prf})
\For{each effective group $i$}
  \State recompute $\ctx_i$, $(r_b)_{b \in B_i}$, and
         $\varphi_i \gets -\ln(1 - r_{b_i})$ \Comment{selection channel}
  \State recompute $G_i$ from \eqref{eq:target}; record hit
         $\mathbf{1}[k_i \in G_i]$ \Comment{tally channel}
\EndFor
\State $z_1 \gets (X_1 - n)/\sqrt{n}$ with exact $p$-value $Q(n, X_1)$
\State $z_2 \gets (X_2 - n p_0)/\sqrt{n p_0 (1-p_0)}$ with exact
       Binomial $p$-value
\State \Return reject $\Hnull$ if $\min(p_1, p_2) \le \alpha/2$
\end{algorithmic}
\end{algorithm}

\begin{promptbox}[frametitle={LLM rewriter prompt}]
\small\ttfamily\raggedright
{\normalfont\bfseries System.}\enspace You audit AI-agent
trajectories. Given one decision step, you pick a different but still
plausible action the agent could have taken.

\medskip
{\normalfont\bfseries User (template, one call per attacked
group).}\enspace At this step the agent observed:\\
\{obs\}\\
Available actions: \{actions\}\\
The agent chose: \{chosen\}\\
Pick ONE different action from the available list that a reasonable
agent could plausibly have taken instead. Reply with only the exact
action name, nothing else.
\end{promptbox}

\FloatBarrier
\section{Supporting Theory and Missing Proofs}\label{app:proofs}

This appendix states the supporting results deferred from
Section~\ref{sec:theory} and proves every claim of the paper.
Throughout, $\psi$ is the digamma function, $\psi' = \psi^{(1)}$ the
trigamma function, and $\gamma = -\psi(1)$ the Euler--Mascheroni
constant. Proposition~\ref{prop:skeleton} is immediate from
Definition~\ref{def:group}: the partition depends only on the positions
of \textsc{dec} tags. The remaining results follow in order of
appearance.

\subsection{The Race Lemma and Proof of Theorem~\ref{thm:df}}\label{app:race}

\begin{lemma}[Race property]\label{lem:race}
Let $E_b := -\ln r_b$, $T_b := E_b / P_i[b]$, and
$T := \min_b T_b$. Then $b_i$ and $T$ are independent,
$\Pr[\,b_i = b\,] = P_i[b]$, and $T \sim \mathrm{Exp}(1)$.
\end{lemma}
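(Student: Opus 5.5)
Under Assumption~\ref{ass:prf}, the values $r_b$ for $b \in B_i$ are i.i.d.\ uniform on $(0,1)$: within one group they share the key $H(\ctx_i)$ and differ only in the nonce $b$. So $E_b = -\ln r_b$ are i.i.d.\ $\mathrm{Exp}(1)$, and $T_b = E_b / P_i[b]$ are independent exponentials with rates $\lambda_b := P_i[b]$.

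Candidates with $P_i[b] = 0$ have $T_b = +\infty$ almost surely; we drop them from the support, since they never win. This matches $\Pr[b_i = b] = 0 = P_i[b]$.

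The rates of the remaining candidates sum to $\sum_b P_i[b] = 1$. The plan is then the standard competing-exponentials computation. Fix $b$ and $t \ge 0$, and compute the joint event $\{b_i = b,\, T > t\}$:
\[
\Pr[\,b_i = b,\ T > t\,] = \int_t^\infty \lambda_b e^{-\lambda_b s} \prod_{b' \ne b} e^{-\lambda_{b'} s}\, ds = \int_t^\infty \lambda_b e^{-s}\, ds = \lambda_b e^{-t}.
\]
This uses the independence of the clocks and the fact that the rates sum to $1$. Ties have probability zero because the distributions are continuous, so the argmin is almost surely unique.

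The joint law factorizes as $P_i[b] \cdot e^{-t}$, which gives all three claims at once:
\begin{itemize}
\item Setting $t = 0$ gives $\Pr[b_i = b] = P_i[b]$.
\item Summing over $b$ gives $\Pr[T > t] = e^{-t}$, so $T \sim \mathrm{Exp}(1)$.
\item The product form shows that $b_i$ and $T$ are independent.
\end{itemize}

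Equation~\eqref{eq:exp} also writes the rule as an argmax of $r_b^{1/P_i[b]}$. That form is equivalent, because $x \mapsto -\ln x$ is strictly decreasing.

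The only delicate point is not the calculation but the modelling step: justifying that the $r_b$ are exactly independent uniforms. This needs distinct nonces, which holds because the $b \in B_i$ are distinct. It also needs a draw of $r_b = 0$ to be excluded; this is a null event under the idealization, consistent with the range $(0,1)$ in~\eqref{eq:rb}.

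Theorem~\ref{thm:df} is then the marginal statement of this lemma.
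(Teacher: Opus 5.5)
Your proposal is correct and follows the paper's proof essentially verbatim: restrict to the support, compute $\Pr[b_i = b,\ T > t] = P_i[b]\,e^{-t}$ by the competing-exponentials integral using $\sum_b P_i[b] = 1$, and read off independence and both marginals from the product form. Your extra remarks on ties, on distinct nonces, and on the null event $r_b = 0$ are sound refinements that the paper leaves implicit.
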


\begin{proof}
The $E_b$ are i.i.d.\ $\mathrm{Exp}(1)$ by Assumption~\ref{ass:prf}, so
$T_b \sim \mathrm{Exp}(P_i[b])$ independently (we restrict to the support
$\{b : P_i[b] > 0\}$; a candidate with $P_i[b] = 0$ sets $T_b = +\infty$
and is selected with probability $0$). For $t \ge 0$,
\[
\Pr[\,b_i = b,\ T > t\,]
= \int_t^\infty P_i[b]\, e^{-P_i[b] s}
  \prod_{b' \ne b} e^{-P_i[b'] s}\, ds
= P_i[b] \int_t^\infty e^{-s}\, ds
= P_i[b]\, e^{-t},
\]
using $\sum_{b'} P_i[b'] = 1$. The product form gives independence and
both marginals at once. Theorem~\ref{thm:df} is immediate.
\end{proof}

\subsection{Exact Null Distributions}\label{app:null}

\begin{lemma}[Selection-channel null distribution]\label{lem:null1}
Under $\Hnull$ and Assumption~\ref{ass:prf}, the scores
$\varphi_1, \dots, \varphi_n$ of the $n$ pooled (effective,
deduplicated) groups are i.i.d.\ $\mathrm{Exp}(1)$; hence
$X_1 \sim \mathrm{Gamma}(n, 1)$ exactly, the one-sided $p$-value is the
regularized upper incomplete Gamma function $Q(n, X_1)$, and
$z_1 \Rightarrow \mathcal{N}(0,1)$ as $n \to \infty$.
\end{lemma}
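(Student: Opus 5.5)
The plan is to work directly from what $\Hnull$ means: the trajectory was produced without knowledge of $\keyone$. So the recorded behavior $b_i$, and the content $A_{i-1}$ that feeds $\ctx_i$, are functions of randomness independent of the keyed values. The replayed value $r_{b_i} = \drbg(H(\ctx_i), b_i)$ is therefore the DRBG evaluated at a point $\nu_i := (H(\ctx_i), b_i)$ that the trajectory fixes without looking at the key's outputs.

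First I condition on the whole trajectory, which determines the evaluation points $\nu_1, \dots, \nu_n$ of the pooled groups. Deduplication in Algorithm~\ref{alg:detect} keeps these points pairwise distinct. By Assumption~\ref{ass:prf}, the values $\drbg(\keyone\text{-derived key}, \nu_i)$ at distinct points are i.i.d.\ $\mathrm{Uniform}[0,1)$ and independent of all other randomness, including the process that produced the trajectory. So conditionally on the trajectory, $r_{b_1}, \dots, r_{b_n}$ are i.i.d.\ uniform.

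The conditional law does not depend on the trajectory, so the same holds unconditionally. The one subtle point is that the pooled index set and $n$ are themselves random, but they depend only on the trajectory. I handle this by stating the result conditionally on $n$ and the index set, which is all the $p$-value needs.

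The remaining steps are routine calculation:
\begin{itemize}
\item If $U$ is uniform, then $\Pr[-\ln(1-U) > t] = \Pr[U > 1 - e^{-t}] = e^{-t}$, so each $\varphi_i \sim \mathrm{Exp}(1)$, and the scores are i.i.d.\ as measurable functions of independent variables.
\item A sum of $n$ i.i.d.\ $\mathrm{Exp}(1)$ variables is $\mathrm{Gamma}(n,1)$, checked by convolution or moment generating functions $(1-s)^{-n}$. Hence $\Pr[X_1 \ge x] = \Gamma(n,x)/\Gamma(n) = Q(n,x)$, and the one-sided $p$-value is exact.
\item Since $\E[\varphi_i] = 1$ and $\Var(\varphi_i) = 1$, the classical CLT gives $z_1 = (X_1 - n)/\sqrt{n} \Rightarrow \mathcal{N}(0,1)$.
\end{itemize}

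The main obstacle is justifying that the key-less producer's choices of $b_i$ and of the context do not correlate with the values at the points it selects. This is where the idealization does the work. The "independent of all other randomness" clause of Assumption~\ref{ass:prf} is exactly what licenses conditioning on the trajectory. An adaptive choice of points is harmless because the producer never observes any DRBG output.

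I would also note the caveat on dependence within a single trajectory: context coincidences such as $A_{i-1} = A_{j-1}$ with the same $b$ produce repeated points. Deduplication removes these, so distinctness holds by construction rather than with high probability.
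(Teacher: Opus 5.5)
Your proposal is correct and follows the same route as the paper. The paper's proof applies Assumption~\ref{ass:prf} at the deduplicated evaluation points $(\ctx_i, b_i)$ to get i.i.d.\ uniform $r_{b_i}$ under $\Hnull$, then cites standard facts about sums of i.i.d.\ exponentials; your explicit conditioning on the trajectory, which fixes the random index set and $n$, is a more careful write-up of that same argument rather than a different one.
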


\begin{proof}
Under $\Hnull$ the producer holds no information about $\keyone$, so by
Assumption~\ref{ass:prf} the value $r_{b_i}$ at the realized
(deduplicated) evaluation point $(\ctx_i, b_i)$ is uniform on $(0,1)$,
independently across groups. Then $\varphi_i \sim \mathrm{Exp}(1)$, and
the remaining claims are standard properties of sums of i.i.d.\
exponentials.
\end{proof}

\begin{lemma}[Tally null distribution and exact tail]\label{lem:null2}
Under $\Hnull$ with $p_0 = \tfrac12$, if every pooled group has
$k_i \in \{1,2\}$ then $X_2 \sim \mathrm{Bin}(n, \tfrac12)$; in
particular $\Pr[X_2 = n] = 2^{-n}$. A group with $k_i \notin \{1,2\}$
is a forced miss, so for a general baseline $X_2$ is stochastically
dominated by $\mathrm{Bin}(n, \tfrac12)$ and the upper tail remains a
conservative $p$-value.
\end{lemma}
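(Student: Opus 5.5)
The plan is to prove Lemma~\ref{lem:null2} by conditioning on the skeleton and then using the independence of the tally targets. Under $\Hnull$ the trajectory is produced without knowledge of $\keytwo$. The pooled index set and the counts $(k_1,\dots,k_n)$ are functions of the skeleton together with the environment-evaluated effectiveness predicate. These therefore form a random object that is independent of the values $\rho_i=\drbg(H(\keytwo),H(\texttt{qid}\,\|\,i))$ under Assumption~\ref{ass:prf}. So we condition on the skeleton, which fixes $n$ and every $k_i$.

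The nonces $H(\texttt{qid}\,\|\,i)$ are distinct across the pooled groups, since deduplication guarantees distinct evaluation points. Hence, conditionally, the $\rho_i$ are i.i.d.\ uniform on $[0,1)$. The event $G_i=\{1\}$ therefore has probability exactly $\tfrac12$, independently across $i$. The hit indicator $\mathbf{1}[k_i\in G_i]$ then behaves according to the value of $k_i$:
\begin{itemize}
\item if $k_i=1$, it equals $\mathbf{1}[\rho_i<\tfrac12]$, a fair coin;
\item if $k_i=2$, it equals $\mathbf{1}[\rho_i\ge\tfrac12]$, also a fair coin;
\item if $k_i\notin\{1,2\}$, it is identically $0$, a forced miss.
\end{itemize}

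Suppose every $k_i\in\{1,2\}$. Then $X_2$ is, conditionally, a sum of $n$ independent Bernoulli$(\tfrac12)$ variables, so $X_2\sim\mathrm{Bin}(n,\tfrac12)$. Because this conditional law does not depend on the conditioning, it is also the unconditional law, and in particular $\Pr[X_2=n]=2^{-n}$.

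In the general case, write $n'\le n$ for the number of pooled groups with $k_i\in\{1,2\}$. Conditionally, $X_2\sim\mathrm{Bin}(n',\tfrac12)$. We couple this with $n-n'$ extra independent fair coins, which gives $X_2\le Y$ with $Y\sim\mathrm{Bin}(n,\tfrac12)$. This is stochastic dominance conditionally, and it survives mixing over the skeleton. It follows that $\Pr[X_2\ge x]\le\Pr[\mathrm{Bin}(n,\tfrac12)\ge x]$, so the Binomial upper tail is a conservative $p$-value.

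The main obstacle is justifying the independence between the adversarially chosen skeleton and the keyed targets. A key-less producer could correlate $k_i$ with $\rho_i$ only by learning $\rho_i$, which Assumption~\ref{ass:prf} rules out. A second point is that $n$ itself is random, so the argument should be carried out conditionally on $n$ rather than with $n$ fixed in advance.
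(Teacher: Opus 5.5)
Your proposal is correct and is essentially the paper's proof: distinct tally nonces plus Assumption~\ref{ass:prf} make the targets $G_i$ i.i.d.\ fair and independent of the key-less trajectory, so each group with $k_i\in\{1,2\}$ contributes a fair-coin hit and any other group a deterministic miss. You make explicit two steps the paper leaves terse. The first is conditioning on the trajectory; strictly this means conditioning on the full pooled index set rather than the skeleton alone, since only that fixes $n$, and the Binomial law then holds given $n$. The second is the coupling with $n-n'$ extra fair coins, which turns ``a forced miss only lowers $X_2$'' into stochastic dominance and a conservative $p$-value.
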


\begin{proof}
The pooled groups' tally evaluation points $H(\texttt{qid} \,\|\, i)$
are pairwise distinct, distinct task instances carrying distinct
\texttt{qid} and within-trajectory indices being distinct, with any
residual collision removed by the deduplication of
Assumption~\ref{ass:prf}. Under $\Hnull$ the targets $G_i$ are
therefore, by Assumption~\ref{ass:prf}, i.i.d.\ uniform over
$\{\{1\},\{2\}\}$ and independent of the trajectory. When $k_i \in \{1,2\}$, exactly one of $\{1\},\{2\}$
contains $k_i$, so each indicator in \eqref{eq:z2} is a fair coin; a
group with $k_i \notin \{1,2\}$ lies in neither target and is a
deterministic miss, which only lowers $X_2$
(Appendix~\ref{app:practical}).
\end{proof}

\subsection{The Conditional Law of the Score}\label{app:cond}

\begin{lemma}[Conditional law of the score]\label{lem:cond}
Let group $i$ be watermarked via \eqref{eq:exp} and write
$p_b := P_i[b]$. Conditional on $b_i = b$, the score $\varphi_i$ has
distribution function
\[
F_b(t) = \bigl(1 - e^{-t}\bigr)^{1/p_b}, \qquad t \ge 0,
\]
the generalized exponential law with shape $1/p_b$
\citep{Gupta1999TheoryM}. Consequently
\[
\E[\varphi_i \mid b_i = b] = \psi\!\Bigl(\tfrac{1}{p_b} + 1\Bigr) + \gamma,
\qquad
\Var[\varphi_i \mid b_i = b]
= \psi'(1) - \psi'\!\Bigl(\tfrac{1}{p_b} + 1\Bigr) < \frac{\pi^2}{6},
\]
and, unconditionally,
$\E[\varphi_i] = \sum_{b} p_b \bigl(\psi(1/p_b + 1) + \gamma\bigr)$.
\end{lemma}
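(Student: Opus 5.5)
Conditioning on the winner is handled through the race structure, and I would reduce it to the race lemma (Lemma~\ref{lem:race}). Keep the notation $E_b=-\ln r_b$, $T_b=E_b/p_b$, $T=\min_b T_b$. On the event $b_i=b$ the winner's clock equals the minimum, $T_b=T$, so $E_b=p_bT$. By Lemma~\ref{lem:race}, $T$ is independent of $b_i$ and $T\sim\mathrm{Exp}(1)$. Hence, conditional on $b_i=b$, $E_b\stackrel{d}{=}p_bT$ with $T\sim\mathrm{Exp}(1)$, and $r_{b_i}=e^{-p_bT}$. This gives
\[
\Pr[\varphi_i\le t\mid b_i=b]=\Pr\bigl[r_b\le 1-e^{-t}\bigr]=\Pr\bigl[T\ge -\tfrac{1}{p_b}\ln(1-e^{-t})\bigr]=(1-e^{-t})^{1/p_b},
\]
which is the stated $F_b$.

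For the moments I set $\alpha:=1/p_b\ge 1$ and use the standard generalized-exponential computation. Since $F_b(t)=(1-e^{-t})^{\alpha}$, I can write $\varphi_i\stackrel{d}{=}-\ln(1-U^{1/\alpha})$ with $U$ uniform. Equivalently, $\varphi_i$ has the law of the maximum of $\alpha$ i.i.d.\ $\mathrm{Exp}(1)$ variables when $\alpha$ is an integer, and the formula extends in general. The cleanest route is the moment generating function:
\[
\E[e^{s\varphi_i}]=\int_0^1(1-u)^{-s}\,\alpha u^{\alpha-1}\,du=\alpha B(\alpha,1-s)=\frac{\Gamma(\alpha+1)\Gamma(1-s)}{\Gamma(\alpha+1-s)} .
\]
Taking the log-derivative at $s=0$ gives the mean $\psi(\alpha+1)-\psi(1)=\psi(\alpha+1)+\gamma$. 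The second log-derivative gives the variance $\psi'(1)-\psi'(\alpha+1)$.

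Two facts then finish the variance claim. First, $\psi'(1)=\pi^2/6$. Second, $\psi'(x)=\sum_{k\ge0}(x+k)^{-2}>0$ for all $x>0$, so $\psi'(\alpha+1)>0$ and the variance is strictly below $\pi^2/6$. The unconditional mean then follows from the tower rule together with $\Pr[b_i=b]=p_b$ (Theorem~\ref{thm:df}).

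The only delicate step is the first one. Conditioning on the winner biases its own clock, so I must not treat $E_b$ as $\mathrm{Exp}(1)$. The joint product form in Lemma~\ref{lem:race}, which gives independence of $b_i$ and $T$, is exactly what licenses replacing $T_b$ by an independent $\mathrm{Exp}(1)$ variable on the event $b_i=b$. The remaining work is routine: restrict to support points with $p_b>0$, and justify differentiating the MGF at $s=0$, which is valid because it is analytic for $s<1$.
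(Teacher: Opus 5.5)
Your proposal is correct and matches the paper's proof step for step: you use the race lemma to write $E_{b_i}=p_bT$ with $T\sim\mathrm{Exp}(1)$ independent of the winner, read off the CDF $(1-e^{-t})^{1/p_b}$, and get the moments from the log-derivatives at $s=0$ of the MGF $\Gamma(1-s)\Gamma(\alpha+1)/\Gamma(\alpha+1-s)$. The only difference is cosmetic: you evaluate the MGF through the representation $1-e^{-\varphi_i}\stackrel{d}{=}U^{1/\alpha}$, whereas the paper substitutes $v=e^{-t}$ in the density, and the two integrals coincide under $v=1-u$.
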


\begin{proof}
Write $p := p_b$ and condition throughout on $b_i = b$. By
Lemma~\ref{lem:race}, the winning exponential satisfies
$E_{b_i} = p\, T$ where $T := \min_{b'} T_{b'} \sim \mathrm{Exp}(1)$ is
independent of the identity of the winner. Hence
$r_{b_i} = e^{-E_{b_i}} = e^{-pT}$ and
\[
\varphi_i = -\ln\bigl(1 - e^{-pT}\bigr), \qquad T \sim \mathrm{Exp}(1).
\]
For $t \ge 0$,
\[
\Pr[\varphi_i \le t]
= \Pr\bigl[e^{-pT} \le 1 - e^{-t}\bigr]
= \Pr\Bigl[T \ge -\tfrac{1}{p}\ln(1 - e^{-t})\Bigr]
= \bigl(1 - e^{-t}\bigr)^{1/p},
\]
which is the claimed distribution function with shape
$\alpha := 1/p \ge 1$.

For the moments we compute the moment generating function. The density
is $f(t) = \alpha (1 - e^{-t})^{\alpha - 1} e^{-t}$, and for $s < 1$ the
substitution $v = e^{-t}$ (so $e^{st} = v^{-s}$, $dt = -dv/v$) gives
\[
\E\bigl[e^{s \varphi_i}\bigr]
= \alpha \int_0^\infty e^{st} (1 - e^{-t})^{\alpha-1} e^{-t}\, dt
= \alpha \int_0^1 v^{-s} (1 - v)^{\alpha - 1}\, dv
= \alpha\, B(1 - s, \alpha)
= \frac{\Gamma(1-s)\, \Gamma(\alpha + 1)}{\Gamma(\alpha + 1 - s)} .
\]
The cumulant generating function is therefore
$K(s) = \ln \Gamma(1-s) + \ln \Gamma(\alpha+1) - \ln \Gamma(\alpha+1-s)$,
with
\[
K'(s) = -\psi(1-s) + \psi(\alpha + 1 - s),
\qquad
K''(s) = \psi'(1-s) - \psi'(\alpha + 1 - s).
\]
Evaluating at $s = 0$ and using $\psi(1) = -\gamma$:
$\E[\varphi_i \mid b] = \psi(\alpha+1) + \gamma$ and
$\Var[\varphi_i \mid b] = \psi'(1) - \psi'(\alpha+1)$. The variance
bound follows from $\psi'(1) = \pi^2/6$ and $\psi' > 0$. The
unconditional mean follows from Lemma~\ref{lem:race} (the winner is $b$
with probability $p_b$) and the tower rule.
\end{proof}

\begin{corollary}[Uniform decisions]\label{cor:uniform}
If $P_i$ is uniform on $N_i$ candidates, then conditional on any
selection $\varphi_i$ is distributed as the maximum of $N_i$ i.i.d.\
$\mathrm{Exp}(1)$ variables; in particular
$\E[\varphi_i] = H_{N_i} := \sum_{j=1}^{N_i} 1/j = \ln N_i + \gamma + o(1)$
and $\Var[\varphi_i] = \sum_{j=1}^{N_i} 1/j^2 \le \pi^2/6$.
\end{corollary}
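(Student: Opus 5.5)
The plan is to specialize Lemma~\ref{lem:cond} to $p_b = 1/N_i$ for every $b$, so that the shape parameter is $\alpha = N_i$ regardless of which candidate wins. Because the conditional law depends on the winner only through $p_b$, and $p_b$ is the same for all $b$, the conditional distribution is the same for every selection. This is what justifies the phrase ``conditional on any selection.''

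The first step is to identify the law. By Lemma~\ref{lem:cond}, $F(t) = (1-e^{-t})^{N_i}$. This is exactly the distribution function of $\max_{j \le N_i} Y_j$ for i.i.d.\ $Y_j \sim \mathrm{Exp}(1)$, since each factor $1-e^{-t}$ is $\Pr[Y_j \le t]$.

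The moments then follow from the closed forms already established in Lemma~\ref{lem:cond}:
\[
\E[\varphi_i \mid b] = \psi(N_i+1) + \gamma, \qquad \Var[\varphi_i \mid b] = \psi'(1) - \psi'(N_i+1).
\]
I would apply the standard recurrences at integer arguments. These are $\psi(N+1) = -\gamma + \sum_{j=1}^{N} 1/j$, obtained by iterating $\psi(x+1) = \psi(x) + 1/x$ from $\psi(1) = -\gamma$, and $\psi'(N+1) = \psi'(1) - \sum_{j=1}^{N} 1/j^2$, obtained by iterating $\psi'(x+1) = \psi'(x) - 1/x^2$. Substituting gives $\E = H_{N_i}$ and $\Var = \sum_{j \le N_i} 1/j^2$. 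The bound $\Var \le \pi^2/6 = \psi'(1)$ holds because the partial sums are dominated by the full Basel series.

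Two further facts finish the argument. First, since the conditional moments do not depend on $b$, the tower rule with Lemma~\ref{lem:race} gives the same unconditional values. Second, the asymptotic $H_N = \ln N + \gamma + o(1)$ is the classical definition of $\gamma$.

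As an alternative check, I could use the Rényi representation: the maximum of $N$ i.i.d.\ exponentials is equal in law to $\sum_{j=1}^{N} Z_j / j$ with the $Z_j$ i.i.d.\ $\mathrm{Exp}(1)$. This yields the mean and variance directly.

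No step is a real obstacle; the only point needing care is the invariance of the conditional law across winners. That invariance is immediate from uniformity, so the conditional and unconditional statements coincide.
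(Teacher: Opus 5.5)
Your proposal is correct and matches the paper's proof in every step. You specialize Lemma~\ref{lem:cond} to shape $\alpha = N_i$ and recognize $(1-e^{-t})^{N_i}$ as the law of the maximum of $N_i$ i.i.d.\ $\mathrm{Exp}(1)$ variables. You obtain $H_{N_i}$ and $\sum_{j\le N_i} 1/j^2$ from the digamma and trigamma recurrences, which the paper gives as its parenthetical alternative to the classical max-of-exponentials moments that your R\'enyi check reproduces. You note, as the paper does, that winner-independence makes the conditional law the unconditional one.
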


\begin{proof}
With $p_b = 1/N_i$ for all $b$, the shape is $\alpha = N_i$ and the
conditional distribution function is $(1 - e^{-t})^{N_i}$, exactly
that of the maximum of $N_i$ i.i.d.\ $\mathrm{Exp}(1)$ variables, whose
mean and variance are the classical $H_{N_i}$ and
$\sum_{j=1}^{N_i} 1/j^2$. (Equivalently: from Lemma~\ref{lem:cond} and
the recurrence $\psi(x+1) = \psi(x) + 1/x$, induction gives
$\psi(N+1) + \gamma = H_N$ and
$\psi'(1) - \psi'(N+1) = \sum_{j=1}^{N} 1/j^2$.) The conditional law
does not depend on which candidate won, so it is also the unconditional
law.
\end{proof}

\begin{corollary}[Deterministic decisions]\label{cor:pointmass}
If $P_i$ is a point mass, then $\varphi_i \sim \mathrm{Exp}(1)$,
identical to the null.
\end{corollary}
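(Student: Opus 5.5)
The plan is to specialize Lemma~\ref{lem:cond} to a point mass. If $P_i$ puts all its mass on a single candidate $b^\star$, then $p_{b^\star} = 1$. By Lemma~\ref{lem:race}, restricted to the support as in its proof, the winner is $b^\star$ with probability one. Every other candidate has $T_b = +\infty$ and never wins.

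Conditioning on $b_i = b^\star$ therefore costs nothing, and Lemma~\ref{lem:cond} gives the unconditional distribution function
\[
F_{b^\star}(t) = \bigl(1 - e^{-t}\bigr)^{1/p_{b^\star}} = 1 - e^{-t}, \qquad t \ge 0 .
\]
This is exactly the $\mathrm{Exp}(1)$ law, which is the null law of Lemma~\ref{lem:null1}.

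As a direct check that avoids the lemma, note that with a point mass the race is degenerate. The winner is $b^\star$ regardless of the keyed values, so $r_{b_i} = r_{b^\star}$ is a single $\drbg$ output. By Assumption~\ref{ass:prf} it is uniform on $(0,1)$ and was never conditioned on winning anything. Hence $\varphi_i = -\ln(1 - r_{b^\star}) \sim \mathrm{Exp}(1)$.

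The moment formulas agree with this. With shape $1/p_{b^\star} = 1$, the mean is $\psi(2) + \gamma = 1$ and the variance is $\psi'(1) - \psi'(2) = 1$, matching $\mathrm{Exp}(1)$.

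There is essentially no obstacle. The one point needing care is the zero-probability candidates: Lemma~\ref{lem:cond}'s formula with $1/p_b$ is only applied to $b^\star$, and the other candidates are excluded by the support convention in the proof of Lemma~\ref{lem:race}.
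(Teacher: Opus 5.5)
Your proposal is correct and matches the paper's proof: specializing Lemma~\ref{lem:cond} to shape $1/p_b = 1$ gives $F(t) = 1 - e^{-t}$. The paper also gives your direct argument in a parenthetical, namely that a deterministic selection reveals nothing about $r_b$, so it stays uniform; your moment check and your handling of zero-probability candidates are correct additions.
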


\begin{proof}
With $p_b = 1$ the shape is $\alpha = 1$ and the distribution function
is $1 - e^{-t}$, i.e., $\mathrm{Exp}(1)$. (Directly: the selection is
deterministic, so it reveals nothing about $r_{b}$, which remains
uniform.)
\end{proof}

\subsection{Proof of Theorem~\ref{thm:entropy}}\label{app:entropy}

We first isolate the trigamma estimate.

\begin{lemma}\label{lem:trigamma}
For every $y > 0$:
$\psi'(y) > \dfrac{1}{y} + \dfrac{1}{2y^2}$.
\end{lemma}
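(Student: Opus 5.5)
The plan is a telescoping argument built on the functional equation of the trigamma function, with no appeal to the asymptotic expansion beyond a crude squeeze. Define
\[
g(y) \;:=\; \psi'(y) - \frac{1}{y} - \frac{1}{2y^2}, \qquad y > 0,
\]
so the claim is $g(y) > 0$. We will use the series representation $\psi'(y) = \sum_{k \ge 0} (y+k)^{-2}$. Differentiating $\psi(y+1) = \psi(y) + 1/y$ then gives the recurrence $\psi'(y) - \psi'(y+1) = 1/y^2$.

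\textbf{Step 1 (one-step decrement).} Use the recurrence to compute
\[
g(y) - g(y+1)
= \frac{1}{y^2} - \Bigl(\frac1y - \frac1{y+1}\Bigr) - \Bigl(\frac{1}{2y^2} - \frac{1}{2(y+1)^2}\Bigr)
= \frac{1}{2y^2} + \frac{1}{2(y+1)^2} - \frac{1}{y(y+1)} .
\]
The key observation is that the right-hand side is a perfect square,
\[
g(y) - g(y+1) = \frac12\Bigl(\frac1y - \frac1{y+1}\Bigr)^2 = \frac{1}{2y^2(y+1)^2} > 0 .
\]
Hence $g(y) > g(y+1) > g(y+2) > \cdots$. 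Iterating gives $g(y) - g(y+n) = \sum_{j=0}^{n-1} \tfrac{1}{2(y+j)^2(y+j+1)^2}$, which is strictly positive already at $n = 1$.

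\textbf{Step 2 (the limit vanishes).} Show that $g(x) \to 0$ as $x \to \infty$ by comparing the series with integrals. Since $t \mapsto (x+t)^{-2}$ is decreasing, we have $\int_0^\infty (x+t)^{-2}\,dt \le \sum_{k\ge0}(x+k)^{-2} \le x^{-2} + \int_0^\infty (x+t)^{-2}\,dt$. That is, $1/x \le \psi'(x) \le 1/x + 1/x^2$, so
\[
-\frac{1}{2x^2} \le g(x) \le \frac{1}{2x^2}.
\]
Therefore $g(y+n) \to 0$ as $n \to \infty$.

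\textbf{Step 3 (conclude).} Letting $n \to \infty$ in the telescoped identity gives $g(y) = \sum_{j \ge 0} \tfrac{1}{2(y+j)^2(y+j+1)^2}$. Every term is positive, so $g(y) > 0$ for every $y > 0$, which is the claimed strict inequality.

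The only step needing an idea is Step 1: one must recognize that the decrement collapses to a square. The crude bound $\psi'(y) > 1/y$ alone is too weak to capture the $1/(2y^2)$ term. Step 2 is routine, but it must be done with the integral squeeze rather than by quoting the asymptotic series, to keep the proof self-contained. The positivity of $\psi'$ used in Lemma~\ref{lem:cond} is consistent with, but not needed for, this argument.
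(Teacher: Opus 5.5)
Your proof is correct. The decrement $g(y)-g(y+1)=\tfrac12\bigl(\tfrac1y-\tfrac1{y+1}\bigr)^2$ checks out, the squeeze $1/x\le\psi'(x)\le 1/x+1/x^2$ forces $g(y+n)\to 0$, and the telescoped series then gives $g(y)>0$.

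The paper's route looks different but runs on the same sum. It writes $\psi'(y)=\sum_{k\ge0}f(k)$ with $f(t)=(y+t)^{-2}$. Strict convexity of $f$ makes the trapezoid rule overestimate each panel, $\int_k^{k+1}f<\tfrac12\bigl(f(k)+f(k+1)\bigr)$. Summing over $k$ gives $1/y<\psi'(y)-\tfrac{1}{2y^2}$ directly.

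Your decrement $g(y+k)-g(y+k+1)=\tfrac{1}{2(y+k)^2(y+k+1)^2}$ is exactly that panel's trapezoid error, $\tfrac12\bigl(f(k)+f(k+1)\bigr)-\int_k^{k+1}f$. So where the paper only signs each panel error by convexity, you evaluate it in closed form through the recurrence.

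Your version buys the exact identity $g(y)=\sum_{j\ge0}\tfrac{1}{2(y+j)^2(y+j+1)^2}$. That gives quantitative information, for instance $g(y)>\tfrac{1}{2y^2(y+1)^2}$ and $g(y)\sim 1/(6y^3)$, which is the next term of the asymptotic series. The paper's version is shorter and needs no separate limit step. Summing the convergent panel inequalities handles the tail automatically, whereas your Step 2 has to control it explicitly, in fact with the same integral comparison the paper uses throughout.
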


\begin{proof}
Recall $\psi'(y) = \sum_{k \ge 0} (y + k)^{-2}$. The function
$f(t) := (y + t)^{-2}$ is strictly convex on $[0, \infty)$, so the
trapezoid rule strictly overestimates each panel:
$\int_k^{k+1} f(t)\, dt < \tfrac12\bigl(f(k) + f(k+1)\bigr)$ for every
$k \ge 0$. Summing over $k$ and telescoping,
\[
\frac{1}{y} = \int_0^\infty f(t)\, dt
< \sum_{k \ge 0} f(k) - \frac{f(0)}{2}
= \psi'(y) - \frac{1}{2y^2}. \qedhere
\]
\end{proof}

\thmentropy*
\begin{proof}
By Lemma~\ref{lem:cond} it suffices to prove the scalar inequality
\begin{equation}\label{eq:scalar}
\psi(x + 1) + \gamma \;\ge\; 1 + \tfrac12 \ln x
\qquad \text{for all } x \ge 1,
\end{equation}
with equality only at $x = 1$; the theorem then follows by applying
\eqref{eq:scalar} with $x = 1/p_b \ge 1$ for each $b$ in the support of
$P_i$ and averaging:
\[
\E[\varphi_i]
= \sum_b p_b \Bigl(\psi\bigl(\tfrac1{p_b}+1\bigr) + \gamma\Bigr)
\;\ge\; \sum_b p_b \Bigl(1 + \tfrac12 \ln \tfrac{1}{p_b}\Bigr)
= 1 + \tfrac12\, \ent(P_i),
\]
with equality iff every supported $p_b$ equals $1$, i.e., iff $P_i$ is
a point mass.

To prove \eqref{eq:scalar}, set
$F(x) := \psi(x+1) + \gamma - 1 - \tfrac12 \ln x$ on $[1, \infty)$.
Since $\psi(2) = 1 - \gamma$, we have $F(1) = 0$. Differentiating and
applying Lemma~\ref{lem:trigamma} at $y = x + 1$,
\[
F'(x) = \psi'(x+1) - \frac{1}{2x}
> \frac{1}{x+1} + \frac{1}{2(x+1)^2} - \frac{1}{2x}
= \frac{2x(x+1) + x - (x+1)^2}{2x(x+1)^2}
= \frac{x^2 + x - 1}{2x(x+1)^2},
\]
which is strictly positive for $x \ge 1$. Hence $F$ is strictly
increasing on $[1, \infty)$ with $F(1) = 0$, proving \eqref{eq:scalar}
with the stated equality case.
\end{proof}

\subsection{Detection Power}\label{app:power}

\begin{corollary}[Detection power]\label{cor:power}
Suppose every effective group satisfies $\ent(P_i) \ge \bar{h} > 0$ and
all nonzero candidate probabilities lie in $[p_*, 1]$, and set
$\sigma_*^2 := \pi^2/6 + \tfrac14\bigl(\psi(1/p_* + 1) + \gamma -
1\bigr)^2$. Then for every threshold $\theta > 0$ and every
$\beta \in (0,1)$,
\[
n \;\ge\; \frac{4\bigl(\theta + \sigma_* \beta^{-1/2}\bigr)^2}{\bar{h}^2}
\quad\Longrightarrow\quad
\Pr_{\Halt}\bigl[\, z_1 \le \theta \,\bigr] \le \beta .
\]
\end{corollary}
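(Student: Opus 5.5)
The plan is a Chebyshev argument on the pooled sum $X_1 = \sum_{i=1}^n \varphi_i$ under $\Halt$. Under Assumption~\ref{ass:prf} the deduplicated groups use distinct evaluation points, so the scores $\varphi_i$ are independent across groups; I would take this as given, as the paper does in pooling. The event $z_1 \le \theta$ is the event $X_1 \le n + \theta\sqrt{n}$. Since $\E[X_1] = \sum_i \E[\varphi_i] \ge n + \tfrac{n}{2}\bar h$ by Theorem~\ref{thm:entropy}, this event forces $X_1 - \E[X_1] \le -\bigl(\tfrac{\bar h}{2}\sqrt{n} - \theta\bigr)\sqrt{n}$. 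Chebyshev then gives
\[
\Pr_{\Halt}[z_1 \le \theta] \le \frac{\sum_i \Var[\varphi_i]}{n\,(\tfrac{\bar h}{2}\sqrt{n} - \theta)^2},
\]
valid whenever $\tfrac{\bar h}{2}\sqrt n > \theta$.

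The key step is a uniform per-group variance bound $\Var[\varphi_i] \le \sigma_*^2$. I would use the law of total variance with Lemma~\ref{lem:cond}. The conditional variance satisfies $\psi'(1) - \psi'(1/p_b+1) < \pi^2/6$. The variance of the conditional mean $m_b := \psi(1/p_b+1)+\gamma$ over the random winner $b \sim P_i$ needs a bound as well. Each $m_b$ lies in $[1, M]$ with $M := \psi(1/p_*+1)+\gamma$, because $\psi$ is increasing, $p_b \ge p_*$, and $m_b = 1$ at $p_b = 1$. Popoviciu's inequality then bounds the variance of a variable supported on an interval of length $M-1$ by $\tfrac14(M-1)^2$. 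Summing the two pieces gives exactly $\sigma_*^2$.

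With $\Var[X_1] \le n\sigma_*^2$, the bound becomes $\Pr \le \sigma_*^2/(\tfrac{\bar h}{2}\sqrt n - \theta)^2$. Requiring this to be at most $\beta$ amounts to $\tfrac{\bar h}{2}\sqrt n \ge \theta + \sigma_*\beta^{-1/2}$, that is, $n \ge 4(\theta + \sigma_*\beta^{-1/2})^2/\bar h^2$, which is the stated threshold. This condition also guarantees the positivity $\tfrac{\bar h}{2}\sqrt n > \theta$ that Chebyshev needs.

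The main obstacle is the independence across groups under $\Halt$. Contexts depend on earlier selections, so I would argue sequentially instead. Conditional on the past, group $i$'s fresh uniforms are independent of it, so the centered terms $\varphi_i - \E[\varphi_i \mid \mathcal F_{i-1}]$ form martingale differences. Their conditional variances are bounded by $\sigma_*^2$, and their conditional means are at least $1 + \bar h/2$. This yields the same Chebyshev bound on the martingale part and avoids needing full independence.
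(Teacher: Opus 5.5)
Your proposal is correct and matches the paper's proof step for step. The common steps are the mean bound from Theorem~\ref{thm:entropy}, the per-group variance bound $\sigma_*^2$ via the law of total variance, Lemma~\ref{lem:cond}, and Popoviciu's inequality on $\psi(1/p_{b_i}+1)+\gamma \in [1, \psi(1/p_*+1)+\gamma]$, and then Chebyshev with the same rearrangement.

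Your closing martingale remark actually repairs a step the paper glosses. The paper asserts that the scores are independent across deduplicated groups and uses $\Var[X_1] \le n\sigma_*^2$. This is not literally true when $P_i$ adapts to earlier selections: the conditional means $\E[\varphi_i \mid \mathcal{F}_{i-1}]$ then correlate with earlier scores, and $\Var[X_1]$ can exceed $n\sigma_*^2$. Your argument avoids that assumption and still gives the same threshold: the pointwise bound $\sum_i \E[\varphi_i \mid \mathcal{F}_{i-1}] \ge n(1+\bar{h}/2)$, combined with Chebyshev on the martingale part, whose second moment is at most $n\sigma_*^2$.
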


\begin{proof}
Fix an effective group $i$ and abbreviate $\mu_i := \E[\varphi_i]$,
$g(p) := \psi(1/p + 1) + \gamma$. By Theorem~\ref{thm:entropy},
$\mu_i \ge 1 + \bar{h}/2$. By Lemma~\ref{lem:cond} and the law of total
variance,
\[
\Var[\varphi_i]
= \E_b\bigl[\Var[\varphi_i \mid b]\bigr]
+ \Var_b\bigl[g(p_b)\bigr]
\le \frac{\pi^2}{6}
+ \frac{\bigl(g(p_*) - g(1)\bigr)^2}{4}
= \sigma_*^2,
\]
where the second term uses Popoviciu's inequality \citep{popoviciu1935}
for the random variable $g(p_{b_i}) \in [g(1), g(p_*)] = [1, \psi(1/p_*+1)+\gamma]$
($g$ is decreasing and all supported probabilities lie in $[p_*, 1]$).
The scores are independent across (deduplicated) groups, so
$\E[X_1] \ge n(1 + \bar h/2)$ and $\Var[X_1] \le n \sigma_*^2$. The
event $z_1 \le \theta$ is $X_1 \le n + \theta \sqrt{n}$; whenever
$\sqrt{n}\, \bar h / 2 > \theta$, Chebyshev's inequality gives
\[
\Pr[\,X_1 \le n + \theta\sqrt{n}\,]
\le \frac{n \sigma_*^2}
{\bigl(n \bar h / 2 - \theta \sqrt{n}\bigr)^2}
= \frac{\sigma_*^2}{\bigl(\sqrt{n}\, \bar h/2 - \theta\bigr)^2}.
\]
This is at most $\beta$ as soon as
$\sqrt{n}\, \bar h/2 - \theta \ge \sigma_* / \sqrt{\beta}$, i.e.,
$n \ge 4(\theta + \sigma_* \beta^{-1/2})^2 / \bar h^2$ (which also
implies the side condition).
\end{proof}

\subsection{Deletion Robustness}\label{app:blast}

\begin{proposition}[Blast radius one: deletion attenuates but never inverts]\label{prop:blast}
Let $\tau$ be watermarked with $m$ effective single primary-observation
groups ($k_i \in \{1,2\}$: one genuine observation plus an optional
redundant record), and let $\tau'$ be obtained by deleting each
\textsc{obs} record independently with probability $r$, the
\textsc{dec} records (hence all $m$ groups) preserved. The
detector reads the genuine selected identity $b_i$, so, absent
post-deletion evaluation-point collisions (a deletion-shifted context
coinciding with another group's embedded or deletion-shifted context, a
generic condition; see the proof), every score is
stochastically at least the null, $\varphi_i \succeq \mathrm{Exp}(1)$
with $\E[\varphi_i] \ge 1$ and never below it; a group whose predecessor
retains all its records keeps its embedded above-null score,
$\E[\varphi_i] \ge 1 + \tfrac12 \ent(P_i)$. Consequently
\[
\E\bigl[X_1(\tau')\bigr]
\;\ge\; m \;+\; \tfrac12 (1-r)^2 \!\sum_{i=1}^{m} \ent(P_i)
\;\ge\; m + \tfrac12 (1-r)^2\, m\, \ent_{\min},
\]
where $\ent_{\min} := \min_i \ent(P_i)$ and all groups are effective, so
$\E[z_1(\tau')] \ge \tfrac12 (1-r)^2 \sqrt{m}\, \ent_{\min} > 0$ for
every $r < 1$.
\end{proposition}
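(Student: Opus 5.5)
The proof splits each group's score according to whether its evaluation point survives deletion intact. Group $i$ is evaluated at $(\ctx_i, b_i)$ with $\ctx_i = \mathrm{enc}(A_{i-1})\,\|\,\keyone$. The selected identity $b_i$ lives in the preserved \textsc{dec} record, so the detector always reads the genuine winner. Only the context can change, and it changes exactly when group $i-1$ loses at least one of its \textsc{obs} records, since $A_{i-1}$ is the sequence of action identities recorded in $g_{i-1}$. For $i=1$ the context is \textsc{bootstrap} and never changes. So I would define the event $S_i$ that group $i-1$ retains all its records, and treat the two cases separately.

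\emph{Case $S_i$ (context intact).} The replayed evaluation point equals the embedded one. The score is then exactly the embedded score, and Theorem~\ref{thm:entropy} gives $\E[\varphi_i \mid S_i] \ge 1 + \tfrac12\ent(P_i)$. This needs a small independence point: whether $S_i$ occurs depends only on the deletion coins of group $i-1$, which are independent of the keyed values and the race at group $i$. So conditioning on $S_i$ does not change the law of $\varphi_i$.

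\emph{Case $S_i^c$ (context shifted).} The detector evaluates $\drbg$ at a fresh point $(\ctx_i', b_i)$. Under the no-collision hypothesis this point differs from every embedded point and from every other shifted point. Assumption~\ref{ass:prf} then makes $r$ at that point uniform and independent of everything else, in particular of $b_i$, which was chosen using the values at the original point. Hence $\varphi_i \sim \mathrm{Exp}(1)$ exactly, with mean $1$.

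Combining the two cases, each score is a mixture of $\mathrm{Exp}(1)$ and the generalized exponential law of Lemma~\ref{lem:cond}. Since $F_b(t) = (1-e^{-t})^{1/p_b} \le 1-e^{-t}$, that law stochastically dominates $\mathrm{Exp}(1)$, so the mixture satisfies $\varphi_i \succeq \mathrm{Exp}(1)$ and never falls below the null.

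\emph{Lower bound on $\Pr[S_i]$.} Group $i-1$ has $k_{i-1} \in \{1,2\}$ observation records, each deleted independently with probability $r$. So $\Pr[S_i] = (1-r)^{k_{i-1}} \ge (1-r)^2$. Summing by linearity of expectation,
\[
\E[X_1(\tau')] \ge \sum_i \Bigl(1 + \tfrac12 (1-r)^2 \ent(P_i)\Bigr),
\]
which gives both displayed inequalities. Dividing $\E[X_1(\tau')] - m$ by $\sqrt m$ yields the bound on $\E[z_1(\tau')]$. It is positive whenever $r<1$, because every group is effective with a candidate set containing at least two actions, so $\ent_{\min}$ is positive provided the pooled distributions are nondegenerate. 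I would note that the strict-positivity claim implicitly requires $\ent_{\min} > 0$.

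\emph{Main obstacle.} The delicate step is the fresh-point argument in case $S_i^c$. The shifted context could coincide with another group's context; this is why the hypothesis excludes post-deletion collisions, and it must be invoked explicitly. The second worry is that the deletion pattern might correlate the fresh value with the winner. This is handled by observing that the deletion coins are independent of the keys, so the selection at group $i$, which depends only on the values at $(\ctx_i, \cdot)$, is independent of the value at the distinct point $(\ctx_i', b_i)$. Deduplication under Assumption~\ref{ass:prf} is what makes the per-group independence legitimate when pooling.
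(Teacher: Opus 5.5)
Your proposal is correct and follows the paper's proof essentially step for step: you split on whether $g_{i-1}$ retains all its records, apply Theorem~\ref{thm:entropy} when the context is intact, use the fresh-point $\mathrm{Exp}(1)$ argument under the no-collision hypothesis otherwise, and sum $(1-r)^{k_{i-1}} \ge (1-r)^2$ by linearity, with your CDF comparison $(1-e^{-t})^{1/p_b} \le 1-e^{-t}$ supplying the stochastic-dominance step the paper only asserts. Your caveat that strict positivity needs $\ent_{\min} > 0$ is right, since effectiveness ($N_i \ge 2$) does not exclude a point-mass $P_i$; you should also state explicitly, as the paper does, that the no-collision condition means deduplication removes nothing, so $n = m$ in the normalization of $z_1$.
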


\begin{proof}
By \eqref{eq:ctx} the context $\ctx_i$ depends only on $A_{i-1}$ and the
key, which the detector recomputes from the surviving records of group
$g_{i-1}$. Two facts drive the result. First, the \textsc{dec} records
are preserved, so the detector reads the \emph{genuine} selected
identity $b_i$ and the only effect of deletion is on the keying context:
deletion can never substitute a race loser, so, in contrast with
rewriting (Proposition~\ref{prop:loser}), it cannot push a score below
the null. Second, the window has memory one, so deleting a record from
$g_j$ disturbs the context of at most its immediate successor $g_{j+1}$
(blast radius one).

A group $g_i$ whose predecessor retains all its records has $\ctx_i$
unchanged: the detector recomputes the embedded value $r_{b_i}$, and
$\varphi_i$ keeps the above-null law of Lemma~\ref{lem:cond},
$\E[\varphi_i] \ge 1 + \tfrac12 \ent(P_i)$ by
Theorem~\ref{thm:entropy}. Otherwise the recomputed $\ctx_i$ differs
from the embedded one and the detector queries the \textsc{drbg} at the
point $(\ctx_i, b_i)$. Unless the shifted context coincides with another
group's embedded context, this point went unqueried during embedding, so
by Assumption~\ref{ass:prf} the value $r_{b_i}$ is uniform and
independent of the embedder's selection $b_i$ (made under the old
context), giving $\varphi_i \sim \mathrm{Exp}(1)$: mean exactly $1$,
never below. We take such collisions to be absent, the generic case,
since a collision requires the deletion-shortened sequence $A_{i-1}'$ to
reproduce another group's verbatim sequence $A_{j-1}$ under the injective
encoding $\mathrm{enc}$; deduplication does not remove it, because the
two share a context but not a full evaluation point $(\ctx, b)$. The same
genericity excludes two deletion-shifted groups sharing a full evaluation
point, so deduplication removes nothing and $n = m$. Were one collision
to occur, that single group would read a race loser of the collided
group and fall below the null (Proposition~\ref{prop:loser}), but never
beneath the pointwise floor $\varphi_i \ge 0$ of \eqref{eq:phi}, which
bounds every group in every case.

Group $g_{i-1}$ retains all $k_{i-1} \le 2$ of its records with
probability $(1-r)^{k_{i-1}} \ge (1-r)^2$ (group $g_1$, with the
bootstrap context, is always intact). Conditioning,
$\E[\varphi_i] \ge 1 + \tfrac12 \ent(P_i)\,(1-r)^{k_{i-1}}$, and summing
over the $m$ preserved groups gives the displayed bound.
\end{proof}

\subsection{Informed Substitution and Rewrite Invariance}\label{app:rewriteproofs}

\begin{proposition}[Informed substitution overshoots the null]\label{prop:loser}
Fix a group with $p_b < 1$ for the selected $b$. If the recorded
identity is replaced by a candidate $b' \ne b_i$ from the same group
with $p_{b'} > 0$ (so the detector evaluates a race \emph{loser}), then
$\E[-\ln(1 - r_{b'}) \mid b_i \ne b'] < 1$: the substituted score falls
strictly below the null mean. If instead $p_{b'} = 0$, the substituted
score has conditional mean exactly $1$; in no case does substitution
push the score above the null.
\end{proposition}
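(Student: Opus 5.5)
The plan is to work directly with the exponential clocks of Lemma~\ref{lem:race} and show that conditioning on $b'$ \emph{losing} the race shifts its exponential $E_{b'} = -\ln r_{b'}$ upward by an independent positive amount. A monotonicity argument then pushes the score below its null mean. Write $p' := p_{b'}$ and $T_{b'} = E_{b'}/p'$. Let $S := \min_{c \ne b'} T_c$ be the competing clock. By Assumption~\ref{ass:prf} and the proof of Lemma~\ref{lem:race}, $T_{b'} \sim \mathrm{Exp}(p')$ and $S \sim \mathrm{Exp}(1-p')$ are independent; the rate $1 - p'$ is positive because the winner $b_i \ne b'$ has positive mass. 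The event $\{b_i \ne b'\}$ is exactly $\{T_{b'} > S\}$, since ties occur with probability zero.

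\textbf{Step 1 (memoryless decomposition).} Condition on $\{T_{b'} > S\}$. By memorylessness of $T_{b'}$ and its independence from $S$, we can write $T_{b'} = S + U$ with $U \sim \mathrm{Exp}(p')$ independent of $S$, where $S$ now carries its law conditional on losing to $b'$; all we need is $S \in (0,\infty)$ almost surely. Multiplying by $p'$ gives
\[
E_{b'} = E_0 + p' S, \qquad E_0 := p' U \sim \mathrm{Exp}(1),
\]
with $E_0$ independent of $p' S$ and $p' S > 0$ almost surely. The same decomposition holds if one conditions further on the identity of the winner $b$, since that event depends only on the clocks $(T_c)_{c \ne b'}$. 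This is the step I expect to need the most care: one has to check that the conditional independence of the overshoot $U$ from $S$ survives the conditioning event. I would verify it by a direct joint-density computation, analogous to the one in Lemma~\ref{lem:race}, rather than by appealing to memorylessness informally.

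\textbf{Step 2 (monotone comparison).} The score is $\varphi = g(E_{b'})$ with $g(e) := -\ln(1 - e^{-e})$, which is strictly decreasing and positive on $(0,\infty)$. At the null, $\E[g(E_0)] = \E[\mathrm{Exp}(1)] = 1$, as in Lemma~\ref{lem:null1}. For any fixed $c > 0$ we have $g(E_0 + c) < g(E_0)$ pointwise. Taking expectations, integrating over the independent positive shift $c = p' S$, and using that the pointwise inequality is strict on a set of full measure gives $\E[\varphi \mid b_i \ne b'] < 1$. Integrability is not an issue, because $0 \le g(E_0 + c) \le g(E_0)$.

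\textbf{Step 3 (the zero-mass case).} If $p' = 0$, then Lemma~\ref{lem:race} sets $T_{b'} = +\infty$, so $b'$ never wins and the race outcome is a function of the other clocks only. Hence $r_{b'}$ is uniform and independent of the conditioning event, so $-\ln(1 - r_{b'}) \sim \mathrm{Exp}(1)$ with conditional mean exactly $1$. Combining Steps 2 and 3 shows that substitution never raises the score above the null mean, which is the final clause of the proposition.
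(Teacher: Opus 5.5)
Your proof is correct, but it takes a different route from the paper's. The paper argues by total expectation. Unconditionally $r_{b'}$ is uniform, so the substituted score has mean exactly $1$. By Lemma~\ref{lem:cond}, its mean conditional on $b'$ \emph{winning} is $\psi(1/p'+1)+\gamma > 1$. Since $\Pr[b_i = b'] = p' \in (0,1)$, the conditional mean given $b_i \ne b'$ is forced strictly below $1$. Once the digamma formula is available this takes two lines. It also gives the exact value $\bigl(1 - p'(\psi(1/p'+1)+\gamma)\bigr)/(1-p')$ of the conditional mean.

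Your memoryless decomposition needs no digamma machinery, and it proves something stronger. The joint-density check you propose does go through. On $\{T_{b'} > S\}$ the pair $(S,U)$ has conditional density $e^{-s}\cdot p' e^{-p'u}$, so $S \sim \mathrm{Exp}(1)$ and $U \sim \mathrm{Exp}(p')$ independently, consistent with Lemma~\ref{lem:race}. Hence the loser's exponential is an $\mathrm{Exp}(1)$ plus an independent, almost surely positive shift. The substituted score is therefore strictly stochastically dominated by the null $\mathrm{Exp}(1)$, not merely smaller in mean: every upper-tail probability $\Pr[\varphi > t]$, $t>0$, lies strictly below the null's.

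You also get more on the conditioning. Among the remaining clocks, the winner's identity is independent of their minimum (the race lemma applied to the sub-race). So your conclusion holds conditional on the particular realized winner $b$, which fits the proposition's framing of a fixed selected $b$ better than the paper's conditioning on $b_i \ne b'$ alone. The zero-mass case is handled the same way in both arguments.
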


\begin{proof}
Unconditionally $\E[-\ln(1 - r_{b'})] = 1$ since $r_{b'}$ is uniform.
If $p_{b'} = 0$, then $b'$ is never selected (Lemma~\ref{lem:race}), so
$r_{b'}$ never affects the selection and remains an independent uniform
on the conditioning event $\{b_i \ne b'\}$, giving conditional mean
exactly $1$. Now let $p_{b'} > 0$. The
selected $b$ is realized, so $p_b > 0$ (Theorem~\ref{thm:df}), and since
$b' \neq b$ we have $p_{b'} \le 1 - p_b < 1$, so $p_{b'} \in (0,1)$. By Lemma~\ref{lem:cond},
$\E[-\ln(1 - r_{b'}) \mid b_i = b'] = \psi(1/p_{b'} + 1) + \gamma > 1$.
Since $\Pr[b_i = b'] = p_{b'}$ (Theorem~\ref{thm:df}) and
$1 = p_{b'}\,\E[-\ln(1 - r_{b'}) \mid b_i = b']
   + (1 - p_{b'})\,\E[-\ln(1 - r_{b'}) \mid b_i \neq b']$
with $1 - p_{b'} > 0$, the complementary conditional expectation is
forced strictly below $1$.
\end{proof}

\thmrewrite*
\begin{proof}
By Proposition~\ref{prop:skeleton}, $R$ preserves group positions and
the counts $(k_i)$, which are the carrier. By \eqref{eq:target}, each
target $G_i$ depends only on $(\keytwo, \texttt{qid}, i)$; the task
identifier \texttt{qid} is trajectory-level metadata reconciled against
the provider's upstream consumption record, not an editable record
content $c(e_t)$, so the keying reads nothing a rewriting attack can
alter. Every indicator in \eqref{eq:z2} is
therefore unchanged. Finally, the pooled index set is unchanged as
well: by Definition~\ref{def:attrs}, effectiveness is evaluated against
the environment-supplied $B_i$ and the executed action stream, neither
of which $R$ can edit, and the tally evaluation points
$H(\texttt{qid} \,\|\, i)$ are content-free, so deduplication reads
nothing $R$ can touch. Hence the summation set and its size $n$
coincide for $\tau$ and $R(\tau)$, giving $X_2(R(\tau)) = X_2(\tau)$
and $z_2(R(\tau)) = z_2(\tau)$.
\end{proof}

\subsection{Proof of Theorem~\ref{thm:erasure}}\label{app:erasure}

\thmerasure*
\begin{proof}
\textbf{(a)} If $s(\tau') = s(\tau)$, Theorem~\ref{thm:rewrite} gives
$z_2(\tau') = z_2(\tau)$, and Lemma~\ref{lem:null2} together with the
embedding rule gives $z_2(\tau) = \sqrt{m}$ for the watermarked $\tau$
with admissible augmentation. The contrapositive is the second
sentence.

\textbf{(b)} The preserved decisions keep all $m$ groups, so $n = m$
groups are scored. Call a group \emph{intact} if its selection-channel
evaluation point is uncorrupted; by hypothesis at least $m - a$ are
intact. Because the corrupted set is chosen obliviously, it is
independent of the realized race values, so conditioning on it leaves
each intact group's score with conditional mean at least
$1 + \tfrac12 \ent(P_i) \ge 1 + \tfrac{h}{2}$
(Theorem~\ref{thm:entropy} and the entropy floor $h$); summing and
taking expectations,
\[
\E\Bigl[\textstyle\sum_{i\, \mathrm{intact}} \varphi_i\Bigr]
\;\ge\; \Bigl(1 + \frac{h}{2}\Bigr)\,
\E\bigl[\#\{i \text{ intact}\}\bigr]
\;\ge\; (m - a)\Bigl(1 + \frac{h}{2}\Bigr),
\]
which needs only that the kept set is independent of the realized
scores, not of the elicited entropies. (Without
obliviousness this step fails: an attacker targeting groups whose
realized winner was improbable removes more than $1 + h/2$ per
corrupted group; see Remark~\ref{rem:adaptive}.) For the corrupted
groups we use only $\varphi_i \ge 0$, which holds pointwise by
\eqref{eq:phi}: a deletion-shifted context contributes mean $1$
(Proposition~\ref{prop:blast}) and a rewriting substitution mean below
$1$ (Proposition~\ref{prop:loser}), both nonnegative; note we may
\emph{not} credit the rewritten groups with the null mean $1$.
Therefore, with $n = m$,
\[
\E\bigl[X_1(\tau')\bigr] \ge (m - a)\Bigl(1 + \frac{h}{2}\Bigr),
\qquad
\E\bigl[z_1(\tau')\bigr]
\ge \frac{(m - a)(1 + h/2) - m}{\sqrt{m}} .
\]
Now suppose $\E[z_1(\tau')] \le \theta$ for a threshold $\theta \ge 0$.
Rearranging,
\[
(m - a)\Bigl(1 + \frac{h}{2}\Bigr)
\;\le\; m + \theta \sqrt{m},
\]
hence
\[
a
\;\ge\; m - \frac{m + \theta \sqrt{m}}{1 + h/2}
\;=\; \frac{h}{2 + h}\, m - \frac{\theta \sqrt{m}}{1 + h/2}
\;\ge\; \frac{h}{2 + h}\, m - \theta \sqrt{m}. \qedhere
\]
\end{proof}

\begin{remark}
A single content edit can alter at most two groups' selection-channel evaluation
points (its own selected identity and, through the memory-one
window, its successor's context), so the bound on $a$ translates into
a bound of at least $a/2$ on the number of \emph{edited records}; the
constant-fraction conclusion is unchanged.
\end{remark}

\begin{remark}[Why obliviousness is needed in (b)]\label{rem:adaptive}
The hypothesis is necessary, not an artifact of the proof. By
Lemma~\ref{lem:cond} the conditional mean of a score given its winner,
$\psi(1/p_b + 1) + \gamma$, is unbounded as $p_b \to 0$ while
$\ent(P_i)$ can stay small, so an attacker that observes the realized
selections and targets precisely the groups whose winner was improbable
removes more than $1 + h/2$ of expected signal per altered group, and
with sufficiently skewed decision distributions it suppresses $z_1$ on
a budget below the bound. The oblivious class still covers intervention
sets chosen by position, at random, or by any rule ignorant of the
realized selections; in particular it covers the attacks of
Section~\ref{sec:exp}, where the LLM rewriter attacks a random fraction
$q$ of groups while remaining fully informed \emph{within} each
attacked group. Quantifying the score-adaptive rate, which involves the
order statistics of the conditional laws of Lemma~\ref{lem:cond}, is
left open.
\end{remark}

\subsection{Composition of the Two Tests}\label{app:compose}

\begin{proposition}[One-way coupling and joint FPR]\label{prop:compose}
The tally-channel targets $(G_i)$ are independent of $\keyone$ and of all
selection-channel randomness; under $\Hnull$ the exact $p$-values $p_1$ and
$p_2$ are independent (the $z$-scores, sharing the random count $n$, are
independent only conditional on the trajectory). The combined test that
rejects when either layer's exact $p$-value falls below $\alpha/2$ has
false positive rate at most $\alpha$ by the union bound; independence of
$(p_1, p_2)$ further licenses the Fisher and Stouffer combinations, which
remain valid but are conservative because the Binomial $p_2$ is discrete.
\end{proposition}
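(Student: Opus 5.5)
The plan is to prove the three claims of Proposition~\ref{prop:compose} in order: independence of the tally targets from the selection randomness, independence of $p_1$ and $p_2$ under $\Hnull$, and validity of the combined test.

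\emph{Independence of the targets.} By \eqref{eq:target}, each $G_i$ is a deterministic function of $\drbg(H(\keytwo),H(\texttt{qid}\|i))$. These calls use the key $\keytwo$, while every selection value $r_b$ in \eqref{eq:rb} is a $\drbg$ call keyed through $\keyone$. Assumption~\ref{ass:prf} makes calls under independent keys mutually independent. Hence the family $(G_i)$ is independent of $\keyone$ and of all $(r_b)$.

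\emph{Independence of $p_1$ and $p_2$ under $\Hnull$.} I will condition on the observed trajectory $\tau$, which under $\Hnull$ is produced without either key. Given $\tau$, the pooled index set, $n$, the counts $k_i$, and the selection evaluation points $(\ctx_i,b_i)$ are all fixed.
\begin{itemize}
\item By Lemma~\ref{lem:null1}, the scores $\varphi_i$ are i.i.d.\ $\mathrm{Exp}(1)$, so $p_1=Q(n,X_1)$ is a function of $\keyone$-randomness only. In fact $p_1$ is exactly uniform conditional on $\tau$, because $Q(n,\cdot)$ is the continuous survival function of $\mathrm{Gamma}(n,1)$.
\item By Lemma~\ref{lem:null2}, $p_2$ is a function of the $\keytwo$-randomness $(G_i)$ together with the fixed $k_i$.
\end{itemize}
The independent-key clause of Assumption~\ref{ass:prf} then gives conditional independence of $p_1$ and $p_2$ given $\tau$. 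To remove the conditioning, write
\[
\Pr[p_1\le u,\ p_2\le v]=\E_\tau\bigl[\,u\cdot \Pr[p_2\le v\mid\tau]\,\bigr]=u\,\Pr[p_2\le v].
\]
This factorization holds because the conditional law of $p_1$ is $\mathrm{Unif}(0,1)$ for every $\tau$, so it does not depend on $\tau$.

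\emph{The $z$-scores.} The parenthetical claim about $z_1,z_2$ follows the same way. Both $z$-scores depend on the shared random $n$, so the factorization above fails for them unconditionally and they are independent only given $\tau$.

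\emph{The combined test and other combinations.} The FPR bound needs no independence. The union bound gives $\Pr[\min(p_1,p_2)\le\alpha/2]\le \Pr[p_1\le\alpha/2]+\Pr[p_2\le\alpha/2]\le\alpha$, using that $p_1$ is exact and $p_2$ is super-uniform. For Fisher and Stouffer, independence of the $p$-values gives the claimed null reference distributions when both are uniform. Because $p_2$ is discrete and super-uniform, replacing it by a uniform variable can only make the combined statistic stochastically less extreme, so each combination stays valid and is conservative. I would argue this via monotonicity of each combination function in each argument.

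\emph{Main obstacle.} The delicate step is the one-way coupling. The selection contexts $A_{i-1}$ depend on the tally, since redundant records enter the window. Under $\Hnull$, however, the trajectory is produced without $\keytwo$. The argument must therefore check that conditioning on $\tau$ fixes all selection evaluation points. If so, the only remaining randomness is the pair of independent PRF families, and the coupling, which is a data dependence in $\tau$ rather than a key dependence, does not break the independence.
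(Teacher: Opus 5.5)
Your proposal is correct and follows the paper's proof essentially step for step. The paper also fixes the selection evaluation points by conditioning on the trajectory. It conditions on $(\tau, X_2)$ in one go, where you condition on $\tau$ and then integrate out using that the conditional law of $p_1$ is uniform for every $\tau$. It likewise obtains $p_1 \sim \mathrm{Unif}(0,1)$ from the $\keyone$-values at those fixed points, and gets the joint FPR from the union bound without independence.

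Your monotonicity argument for why Fisher and Stouffer stay valid but conservative is a small improvement, since the paper only asserts it. You apply Stouffer to the $p$-values themselves, which is the form that independence of $(p_1,p_2)$ actually licenses. The paper instead applies it to $(z_1+z_2)/\sqrt{2}$, whose summands, as you note, are not marginally independent.
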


\begin{proof}
By \eqref{eq:target} the targets $(G_i)$ are functions of
$(\keytwo, \texttt{qid}, i)$ only. Under Assumption~\ref{ass:prf},
DRBG calls under the independent keys $\keyone$ and $\keytwo$ are
mutually independent, so $(G_i)$ is independent of the family $(r_b)$
of \eqref{eq:rb} and of everything computed from it. Under $\Hnull$ the
trajectory $\tau$ is independent of both keys, and $\keyone \perp
\keytwo$. Condition on the pair $(\tau, X_2)$, which fixes the skeleton,
the counts $(k_i)$, every selection-channel evaluation point, the
deduplicated index set, the count $n$, and the tally statistic. Given
$(\tau, X_2)$, the selection statistic $X_1$ is a function of the
$\keyone$-values at the now-fixed evaluation points alone, and, since
$\keyone$ is independent of $\tau$ and of $\keytwo$ (hence of $X_2$), it
remains $\mathrm{Gamma}(n,1)$; so by Lemma~\ref{lem:null1} the exact
$p$-value $p_1 = Q(n, X_1)$ is uniform on $(0,1)$ given $(\tau, X_2)$ (the
probability integral transform). As this holds for every value of
$(\tau, X_2)$, $p_1$ is independent of $(\tau, X_2)$, and in particular of
$p_2$, which is a function of $(n, X_2)$. The $z$-scores are
not in general independent marginally, as both standardize by the same
random $n = n(\tau)$; the dependence vanishes only at the $p$-value
level. The union bound,
which uses no independence, gives
$\Pr_{\Hnull}[\,p_1 \le \alpha/2 \text{ or } p_2 \le \alpha/2\,] \le
\alpha$. Independence of $(p_1, p_2)$ licenses Fisher's method
\citep{Fisher1927StatisticalMF} applied to $-2\ln p_1 - 2\ln p_2$ and Stouffer's to
$(z_1 + z_2)/\sqrt{2}$; because the Binomial $p_2$ is discrete
(super-uniform), the $\chi^2_4$ and normal references are conservative
rather than exact.
\end{proof}

\section{Practical Notes}\label{app:practical}

\paragraph{Elicitation of $P_i$.}
The distribution $P_i$ is elicited from the agent at decision time
(e.g., as normalized scores over $B_i$); when parsing fails we fall
back to the uniform distribution on $B_i$, which by
Corollary~\ref{cor:uniform} is the most detectable case and by
Theorem~\ref{thm:df} still leaves the realized selection a faithful
sample of the fallback distribution actually used.

\paragraph{Estimating $p_0$.}
The tally-channel null rate $p_0 = \tfrac12$ is exact whenever the baseline
agent's group counts satisfy $k_i \equiv 1$,
because the balanced keyed pattern \eqref{eq:target} then makes each
hit a fair coin regardless of the baseline's behavior. A baseline that
emits $k_i = 2$ keeps a fair-coin hit, but a group with
$k_i \notin \{1,2\}$ is a guaranteed \emph{miss}, since neither
$\{1\}$ nor $\{2\}$ contains $k_i$. Such groups only lower the null
hit rate below $\tfrac12$, so for heavy-observation baselines the
$\mathrm{Bin}(n,\tfrac12)$ tail is a \emph{conservative} bound on the
null $p$-value rather than exact; we report empirically calibrated
$p_0$ for transparency.

\paragraph{Log/execution consistency audit.}
Operationally, the verifier compares the skeleton of the
reseller-released log against the grouping reconstructed from the
provider-side execution record; any mismatch in \textsc{dec} counts or
group boundaries flags tampering (Section~\ref{sec:threat}). The audit
is assumed available rather than evaluated in our experiments.

\paragraph{Admissibility.}
When a group's logging format does not admit a redundant record, the
embedder leaves $k_i = 1$; if the target was $\{2\}$ this group becomes
a guaranteed miss. With admissibility rate $\eta$ over target-$2$
groups, the watermarked hit rate is $1 - \tfrac12(1 - \eta)$ and the
detection statistics adjust in the obvious way; all groups remain
usable.

\begin{remark}[Window knob]\label{rem:window}
Whether the selection channel's window \emph{includes} redundant records is
a design choice. Including them (our default) couples the layers: a
targeted deletion of redundant records then also corrupts the
selection-channel contexts of the following groups, so the selection channel
shares a sliver of the tally channel's deletion exposure, bounded as in
Proposition~\ref{prop:blast}. Sliding the window over genuine actions
only makes the selection channel entirely immune to targeted
redundant-record deletion, at the price of decoupling; our experiments
use the coupled default (Section~\ref{sec:exp}). A window longer than
one is equally admissible but widens the damage of each deletion from
one group to the window length.
\end{remark}

\paragraph{Numerical verification.}
All distributional claims in this paper (Lemma~\ref{lem:race}
(independence and marginals of the race), Theorem~\ref{thm:df},
Lemma~\ref{lem:cond} (conditional CDF, digamma mean, trigamma
variance), Corollaries~\ref{cor:uniform}--\ref{cor:pointmass},
Theorem~\ref{thm:entropy} (including Lemma~\ref{lem:trigamma} on a
dense grid), Lemma~\ref{lem:null1}, Proposition~\ref{prop:loser},
Corollary~\ref{cor:power}, Proposition~\ref{prop:blast},
Lemma~\ref{lem:null2}, and Proposition~\ref{prop:compose}) were
additionally verified by Monte Carlo simulation; the script is included
in the supplementary material.

\section{Additional Experimental Results}\label{app:exp}

This appendix records the per-split detection values, per-seed raw
values, the backbone ablation, the full baseline-robustness tables,
and the calibrated-FPR detection-power tables behind
Section~\ref{sec:exp}. Table~\ref{tab:splits} maps the split labels
used throughout to the benchmarks' native identifiers.

\begin{table}[!ht]
\centering
\caption{Split labels and the benchmarks' native identifiers.}
\label{tab:splits}
\small
\begin{tabular}{cl|cl}
\toprule
\textbf{Label} & \textbf{ToolBench split} & \textbf{Label} & \textbf{ALFWorld task type} \\
\midrule\midrule
T1 & \texttt{G1\_category}    & A1 & \texttt{pick\_and\_place} \\
T2 & \texttt{G1\_instruction} & A2 & \texttt{pick\_clean\_then\_place} \\
T3 & \texttt{G1\_tool}        & A3 & \texttt{pick\_heat\_then\_place} \\
T4 & \texttt{G2\_category}    & A4 & \texttt{pick\_cool\_then\_place} \\
T5 & \texttt{G2\_instruction} & A5 & \texttt{look\_at\_obj\_in\_light} \\
T6 & \texttt{G3\_instruction} & A6 & \texttt{pick\_two\_obj\_and\_place} \\
\midrule
\multicolumn{2}{l|}{ID = \texttt{valid\_seen}} &
\multicolumn{2}{l}{OOD = \texttt{valid\_unseen}} \\
\bottomrule
\end{tabular}
\end{table} All values
are mean $\pm$ sample standard deviation over three seeds unless
marked. The per-task utility breakdown appears in
Table~\ref{tab:main}; further per-seed raw values are included in the
supplementary material.

\subsection{ToolBench Detection per Split}

\begin{table}[!ht]
\centering
\caption{ToolBench detection per split: RG single-layer $z$ and
\method{} per-channel $z$, each with its wrong-key control.}
\label{tab:tb-detect}
\footnotesize
\setlength{\tabcolsep}{3.4pt}
\begin{tabular}{l|cc|cccc}
\toprule
\multirow{2}{*}{\textbf{Split}} & \multicolumn{2}{c|}{\textbf{RG}} & \multicolumn{4}{c}{\textbf{\method{}}} \\
\cmidrule(lr){2-3} \cmidrule(lr){4-7}
& $z$ & wk & Sel.\ $z$ & Tally $z$ & wk sel. & wk tally \\
\midrule\midrule
T1 & $0.65{\pm}0.92$ & $0.62{\pm}0.89$ & $3.78{\pm}0.23$ & $5.12{\pm}0.40$ & $-0.30{\pm}0.41$ & $0.19{\pm}0.40$ \\
T2 & $1.87{\pm}0.34$ & $-0.46{\pm}0.61$ & $5.40{\pm}2.16$ & $5.32{\pm}0.29$ & $1.07{\pm}0.95$ & $1.29{\pm}0.69$ \\
T3 & $3.04{\pm}0.23$ & $-1.77{\pm}0.96$ & $6.03{\pm}0.44$ & $5.44{\pm}0.37$ & $-0.10{\pm}0.54$ & $1.42{\pm}0.55$ \\
T4 & $4.32{\pm}0.81$ & $-1.70{\pm}1.08$ & $2.10{\pm}2.00$ & $5.54{\pm}0.70$ & $-0.35{\pm}0.25$ & $-0.44{\pm}0.51$ \\
T5 & $4.23{\pm}0.18$ & $-0.04{\pm}0.02$ & $6.72{\pm}1.67$ & $7.77{\pm}0.04$ & $-1.66{\pm}0.48$ & $0.64{\pm}0.23$ \\
T6 & $2.23{\pm}0.53$ & $-1.94{\pm}0.66$ & $3.04{\pm}0.57$ & $5.41{\pm}1.06$ & $0.37{\pm}0.36$ & $-0.54{\pm}0.37$ \\
\midrule
\textbf{Mean over splits} & $2.72$ & $-0.88$ & $4.51$ & $5.77$ & $-0.16$ & $0.43$ \\
\bottomrule
\end{tabular}
\end{table}

\FloatBarrier
\subsection{ALFWorld per Seed}

\begin{table}[!ht]
\centering
\caption{\method{} ALFWorld per-seed values behind the headline
aggregates.}
\label{tab:alf-seed}
\small
\begin{tabular}{l|ccc|ccc}
\toprule
\multirow{2}{*}{\textbf{Split}} & \multicolumn{3}{c|}{\textbf{SR (\%)}} & \multicolumn{3}{c}{\textbf{Sel.\ $z$}} \\
\cmidrule(lr){2-4} \cmidrule(lr){5-7}
& seed 1 & seed 2 & seed 3 & seed 1 & seed 2 & seed 3 \\
\midrule\midrule
ID (\texttt{valid\_seen}) & $85.7$ & $80.7$ & $84.3$ & $87.5$ & $99.6$ & $95.3$ \\
OOD (\texttt{valid\_unseen}) & $85.8$ & $79.9$ & $77.6$ & $95.4$ & $106.4$ & $105.8$ \\
\bottomrule
\end{tabular}
\end{table}

\FloatBarrier
\subsection{Redundant-Record Accounting}\label{app:redundant}

Table~\ref{tab:redundant} itemizes the redundant records behind the
\method{} step counts of Table~\ref{tab:main}: per task, the decision
steps the agent actually takes, the redundant records the tally
channel appends, and their total, for every benchmark split and both
backbones. The volume scales with trajectory length, about half a
record per decision group on ToolBench and eleven to fifteen records
per task on ALFWorld horizons, and the records disturb neither the
semantic content of the log nor the task's execution
(Definition~\ref{def:redundant}): the decision-step column tracks
\textsc{Base} throughout, so the accounting confirms that the entire
step overhead in Table~\ref{tab:main} is watermark records rather than
agent work.

\begin{table}[!ht]
\centering
\caption{Redundant-record accounting for \method{} per task: decision
steps, appended redundant records, and total logged steps
($\dagger$: difference of the two measured columns).}
\label{tab:redundant}
\small
\begin{tabular}{l|ccc}
\toprule
\textbf{Setting} & \textbf{Decision steps} & \textbf{Redundant records} & \textbf{Total} \\
\midrule\midrule
ToolBench T1 & $1.32{\pm}0.21$ & $0.48^{\dagger}$ & $1.80{\pm}0.22$ \\
ToolBench T2 & $1.42{\pm}0.15$ & $0.51^{\dagger}$ & $1.93{\pm}0.13$ \\
ToolBench T3 & $1.48{\pm}0.20$ & $0.89^{\dagger}$ & $2.37{\pm}0.34$ \\
ToolBench T4 & $1.55{\pm}0.40$ & $0.85^{\dagger}$ & $2.40{\pm}0.66$ \\
ToolBench T5 & $1.21{\pm}0.01$ & $0.60^{\dagger}$ & $1.81{\pm}0.05$ \\
ToolBench T6 & $1.50{\pm}0.59$ & $0.85^{\dagger}$ & $2.35{\pm}0.90$ \\
\textbf{ToolBench Avg.} & $1.37{\pm}0.09$ & $0.68^{\dagger}$ & $2.05{\pm}0.16$ \\
\midrule
ALFWorld ID & $23.15{\pm}1.73$ & $11.04{\pm}0.85$ & $34.19{\pm}2.58$ \\
ALFWorld OOD & $24.80{\pm}1.42$ & $11.85{\pm}0.71$ & $36.65{\pm}2.13$ \\
\midrule
ALFWorld ID (Qwen) & $30.28{\pm}0.17$ & $14.68{\pm}0.15$ & $44.96{\pm}0.28$ \\
ALFWorld OOD (Qwen) & $29.15{\pm}0.24$ & $13.82{\pm}0.15$ & $42.97{\pm}0.39$ \\
\bottomrule
\end{tabular}
\end{table}

\FloatBarrier
\subsection{Backbone Ablation (Local Qwen under vLLM)}\label{app:qwen}

The second backbone is a locally deployed Qwen3-4B-Instruct model served with
vLLM. The ablation runs ALFWorld only: a ToolBench decision presents
dozens of candidate tools at once, more than a 4B-parameter backbone
can reliably discriminate among, so that benchmark is uninformative at
this model scale. On ALFWorld the baselines transfer: RG reaches
$z = 38.16 \pm 0.98$ (ID) and $35.63 \pm 0.75$ (OOD) with wrong-key
controls near zero, AM-F recovers $56.05$ and $54.98$ bits per task,
and the qualitative utility picture matches the main backbone, RG
paying the largest cost while the distribution-preserving arms stay
near \textsc{Base}
(Tables~\ref{tab:qwen-util} and \ref{tab:qwen-detect}). \method{}
itself transfers as well, reaching SR $63.8 \pm 0.4$ (ID) and
$67.2 \pm 1.3$ (OOD), $3.1$ and $4.3$ points above RG on the same
backbone and task subsets: the distortion-free advantage is even more
visible at the 4B scale, where the model tolerates less interference
with its sampling. Its step counts again include the tally channel's
redundant records, $14.7$ (ID) and $13.8$ (OOD) per task at this
backbone's lower success rate, where failed episodes run to the
$50$-step cap; net of them the decision path runs $30.3$ and $29.2$
steps against \textsc{Base}'s $30.5$ and $30.8$, at parity here as on
the main backbone. Detection reaches selection $z = 90.21$ (ID) and
$93.88$ (OOD) with the tally at $62.96$ and $60.44$, and all wrong-key
controls stay far from the threshold.

\begin{table}[!ht]
\centering
\caption{Qwen backbone, ALFWorld utility ($\dagger$: $n$-weighted mean
over task types; \method{} steps include the tally channel's redundant
records).}
\label{tab:qwen-util}
\small
\begin{tabular}{l|cc|cc}
\toprule
\multirow{2}{*}{\textbf{Arm}} & \multicolumn{2}{c|}{\textbf{SR (\%)} $\uparrow$}
& \multicolumn{2}{c}{\textbf{Steps / task}} \\
\cmidrule(lr){2-3} \cmidrule(lr){4-5}
& ID & OOD & ID & OOD \\
\midrule\midrule
Base & $64.5{\pm}0.8$ & $62.9{\pm}2.3$ & $30.5^{\dagger}$ & $30.8^{\dagger}$ \\
AM-F & $67.4{\pm}0.8$ & $66.9{\pm}5.2$ & $29.7^{\dagger}$ & $29.5^{\dagger}$ \\
RG & $60.7{\pm}3.8$ & $62.9{\pm}0.4$ & $31.7^{\dagger}$ & $31.3^{\dagger}$ \\
\method{} & $63.8{\pm}0.4$ & $67.2{\pm}1.3$ & $45.0{\pm}0.3$ & $43.0{\pm}0.4$ \\
\bottomrule
\end{tabular}
\end{table}

\begin{table}[!ht]
\centering
\caption{Qwen backbone, ALFWorld detection.}
\label{tab:qwen-detect}
\small
\begin{tabular}{l|cc|cc|c}
\toprule
\textbf{Setting} & \textbf{RG $z$} $\uparrow$ & \textbf{RG wk}
& \textbf{\method{} Sel.\ $z$} $\uparrow$
& \textbf{\method{} Tally $z$} $\uparrow$
& \textbf{AM-F bits/task} $\uparrow$ \\
\midrule\midrule
ALFWorld ID & $38.16{\pm}0.98$ & $0.15{\pm}0.35$ & $90.21{\pm}2.00$ & $62.96{\pm}0.17$ & $56.05{\pm}1.93$ \\
ALFWorld OOD & $35.63{\pm}0.75$ & $0.46{\pm}0.91$ & $93.88{\pm}2.05$ & $60.44{\pm}0.31$ & $54.98{\pm}4.82$ \\
\bottomrule
\end{tabular}
\end{table}

\FloatBarrier
\subsection{Baseline Robustness in Full}

\begin{table}[!ht]
\centering
\caption{RG and AM-F under the deletion sweep on ToolBench ($15$ runs
per rate).}
\label{tab:base-del}
\small
\begin{tabular}{l|cccc}
\toprule
\textbf{Deletion rate} & \textbf{RG $z$} & \textbf{AM-F $z$} & \textbf{AM-F bit acc.} & \textbf{AM-F RLNC succ.} \\
\midrule\midrule
$0.0$ & $7.09$ & $15.85$ & $1.00$ & $0.56$ \\
$0.1$ & $6.58$ & $14.98$ & $1.00$ & $0.54$ \\
$0.2$ & $5.61$ & $14.12$ & $1.00$ & $0.56$ \\
$0.3$ & $5.07$ & $13.15$ & $1.00$ & $0.52$ \\
$0.4$ & $4.74$ & $11.97$ & $1.00$ & $0.47$ \\
$0.5$ & $4.27$ & $11.26$ & $1.00$ & $0.49$ \\
$0.6$ & $3.20$ & $10.10$ & $1.00$ & $0.46$ \\
$0.7$ & $2.80$ & $8.67$ & $1.00$ & $0.24$ \\
\bottomrule
\end{tabular}
\end{table}

\begin{table}[!ht]
\centering
\caption{RG and AM-F under the rewriting sweep on ToolBench ($9$ runs
per strength).}
\label{tab:base-sub}
\small
\begin{tabular}{l|cccc}
\toprule
\textbf{Substitution $q$} & \textbf{RG $z$} & \textbf{AM-F $z$} & \textbf{AM-F bit acc.} & \textbf{AM-F RLNC succ.} \\
\midrule\midrule
$0.0$ & $7.09$ & $15.85$ & $1.00$ & $0.56$ \\
$0.2$ & $5.39$ & $13.99$ & $0.95$ & $0.44$ \\
$0.4$ & $3.79$ & $11.56$ & $0.88$ & $0.20$ \\
$0.6$ & $1.33$ & $1.47$ & $0.55$ & $0.04$ \\
$0.8$ & $0.60$ & $-1.62$ & $0.44$ & $0.00$ \\
$1.0$ & $0.29$ & $-4.04$ & $0.34$ & $0.00$ \\
\bottomrule
\end{tabular}
\end{table}

\begin{table}[!ht]
\centering
\caption{RG and AM-F under the combined attack on ToolBench: detection
$z$ at deletion rate $r$ and rewriting strength $q$ (nine runs per
cell).}
\label{tab:base-comb}
\small
\setlength{\tabcolsep}{4.6pt}
\begin{tabular}{l|rrrrrr}
\toprule
\textbf{Arm, rate} & $\bm{q{=}0.0}$ & $\bm{q{=}0.2}$ & $\bm{q{=}0.4}$ & $\bm{q{=}0.6}$ & $\bm{q{=}0.8}$ & $\bm{q{=}1.0}$ \\
\midrule\midrule
RG, $r = 0.3$ & $5.16$ & $4.38$ & $3.38$ & $1.91$ & $1.68$ & $1.37$ \\
RG, $r = 0.5$ & $4.29$ & $3.37$ & $2.70$ & $2.32$ & $1.45$ & $0.41$ \\
RG, $r = 0.7$ & $2.93$ & $2.35$ & $2.15$ & $0.62$ & $0.74$ & $0.48$ \\
\midrule
AM-F, $r = 0.3$ & $13.07$ & $11.90$ & $10.02$ & $1.70$ & $-0.54$ & $-3.12$ \\
AM-F, $r = 0.5$ & $11.27$ & $9.98$ & $8.10$ & $1.72$ & $-0.78$ & $-2.69$ \\
AM-F, $r = 0.7$ & $6.05$ & $6.06$ & $5.56$ & $-0.74$ & $-1.25$ & $-2.11$ \\
\bottomrule
\end{tabular}
\end{table}

\FloatBarrier
\subsection{Detection Power at Calibrated FPR}\label{app:tpr}

\paragraph{Protocol.}
Positives are watermarked trajectories scored under the true key;
negatives are the same trajectories scored under the wrong key
($99990001$/$99990002$). For clean detection, $B$ trajectories are
pooled into one bundle statistic
$z_{\mathrm{bundle}} = (\sum X - \sum E)/\sqrt{\sum V}$, and
TPR@$x\%$FPR is the fraction of positive bundles exceeding the
$(1{-}x\%)$ quantile of the negative bundles ($5000$ bootstrap
resamples). For detection under attack, each attack cell already pools
$118$ to $127$ trajectories into one $z$, so one cell is one
fixed-size bundle; TPR@$1\%$FPR is the fraction of cells above
$\theta_{1\%} = 2.326$, the one-sided Gaussian $1\%$ threshold, which
the wrong-key per-cell $z$ calibrates (granularity $1/15$ per deletion
strength, $1/9$ otherwise). Grand-pooled wrong-key selection scores
sit slightly below zero (about $-0.06$ to $-0.10$ per group), the
harmless side of a one-sided upper-tail test; TPR thresholds are taken
from the null's own quantiles, so the shift cannot inflate false
positives.

\begin{table}[!ht]
\centering
\caption{Clean detection power: single-trajectory AUC and TPR at $1\%$
FPR for bundle sizes $B \in \{1, 10, 50\}$.}
\label{tab:tpr-clean}
\small
\begin{tabular}{ll|c|ccc}
\toprule
\textbf{Setting} & \textbf{Scheme / channel} & \textbf{AUC} ($B{=}1$)
& $B{=}1$ & $B{=}10$ & $B{=}50$ \\
\midrule\midrule
\multirow{4}{*}{ToolBench}
 & \method{} selection & $0.711$ & $0.080$ & $0.618$ & $1.000$ \\
 & \method{} tally     & $0.802$ & $0.083$ & $1.000$ & $1.000$ \\
 & RG                  & $0.650$ & $0.030$ & $0.563$ & $1.000$ \\
 & AM-F                & $0.882$ & $0.130$ & $1.000$ & $1.000$ \\
\midrule
\multirow{3}{*}{ALFWorld ID}
 & \method{} selection & $0.974$ & $0.874$ & $1.000$ & $1.000$ \\
 & \method{} tally     & $0.985$ & $0.850$ & $1.000$ & $1.000$ \\
 & RG                  & $0.933$ & $0.562$ & $1.000$ & $1.000$ \\
\midrule
\multirow{3}{*}{ALFWorld OOD}
 & \method{} selection & $0.995$ & $0.938$ & $1.000$ & $1.000$ \\
 & \method{} tally     & $0.995$ & $0.860$ & $1.000$ & $1.000$ \\
 & RG                  & $0.900$ & $0.563$ & $1.000$ & $1.000$ \\
\bottomrule
\end{tabular}
\end{table}

\begin{table}[!ht]
\centering
\caption{Single-trajectory ($B = 1$) TPR at three FPR levels.}
\label{tab:tpr-fpr}
\small
\begin{tabular}{ll|ccc}
\toprule
\textbf{Setting} & \textbf{Scheme / channel}
& \textbf{@5\%} & \textbf{@1\%} & \textbf{@0.1\%} \\
\midrule\midrule
\multirow{4}{*}{ToolBench}
 & \method{} selection & $0.260$ & $0.080$ & $0.051$ \\
 & \method{} tally     & $0.147$ & $0.083$ & $0.033$ \\
 & RG                  & $0.075$ & $0.030$ & $0.030$ \\
 & AM-F                & $0.227$ & $0.130$ & $0.082$ \\
\midrule
\multirow{3}{*}{ALFWorld ID}
 & \method{} selection & $0.924$ & $0.874$ & $0.648$ \\
 & \method{} tally     & $0.930$ & $0.850$ & $0.786$ \\
 & RG                  & $0.721$ & $0.562$ & $0.351$ \\
\midrule
\multirow{3}{*}{ALFWorld OOD}
 & \method{} selection & $0.979$ & $0.938$ & $0.865$ \\
 & \method{} tally     & $0.959$ & $0.860$ & $0.826$ \\
 & RG                  & $0.677$ & $0.563$ & $0.457$ \\
\bottomrule
\end{tabular}
\end{table}

\begin{table}[!ht]
\centering
\caption{TPR at $1\%$ FPR under single-axis attacks (ToolBench;
bundle-level, see protocol).}
\label{tab:tpr-robust}
\small
\setlength{\tabcolsep}{5pt}
\begin{tabular}{l|cccc||l|cccc}
\toprule
\textbf{$r$} & \textbf{Sel.} & \textbf{Tally} & \textbf{RG} & \textbf{AM-F}
& \textbf{$q$} & \textbf{Sel.} & \textbf{Tally} & \textbf{RG} & \textbf{AM-F} \\
\midrule\midrule
$0.0$ & $1.000$ & $1.000$ & $1.000$ & $1.000$ & $0.0$ & $1.000$ & $1.000$ & $1.000$ & $1.000$ \\
$0.1$ & $1.000$ & $1.000$ & $1.000$ & $1.000$ & $0.2$ & $1.000$ & $1.000$ & $1.000$ & $1.000$ \\
$0.2$ & $1.000$ & $1.000$ & $1.000$ & $1.000$ & $0.4$ & $1.000$ & $1.000$ & $1.000$ & $1.000$ \\
$0.3$ & $1.000$ & $1.000$ & $1.000$ & $1.000$ & $0.6$ & $0.000$ & $1.000$ & $0.333$ & $0.111$ \\
$0.4$ & $1.000$ & $0.867$ & $1.000$ & $1.000$ & $0.8$ & $0.000$ & $1.000$ & $0.000$ & $0.000$ \\
$0.5$ & $1.000$ & $0.267$ & $1.000$ & $1.000$ & $1.0$ & $0.000$ & $1.000$ & $0.000$ & $0.000$ \\
$0.6$ & $1.000$ & $0.000$ & $0.933$ & $1.000$ & & & & & \\
$0.7$ & $0.933$ & $0.000$ & $0.733$ & $1.000$ & & & & & \\
\bottomrule
\end{tabular}
\end{table}

\begin{table}[!ht]
\centering
\caption{TPR at $1\%$ FPR under the combined attack (ToolBench; nine
cells per $(r, q)$).}
\label{tab:tpr-comb}
\small
\setlength{\tabcolsep}{5pt}
\begin{tabular}{cc|cccc}
\toprule
\textbf{$r$} & \textbf{$q$} & \textbf{\method{} Sel.} & \textbf{\method{} Tally}
& \textbf{RG} & \textbf{AM-F} \\
\midrule\midrule
$0.3$ & $0.0$ & $1.000$ & $1.000$ & $1.000$ & $1.000$ \\
$0.3$ & $0.2$ & $1.000$ & $1.000$ & $1.000$ & $1.000$ \\
$0.3$ & $0.4$ & $0.889$ & $1.000$ & $0.889$ & $1.000$ \\
$0.3$ & $0.6$ & $0.000$ & $1.000$ & $0.333$ & $0.222$ \\
$0.3$ & $0.8$ & $0.000$ & $1.000$ & $0.111$ & $0.000$ \\
$0.3$ & $1.0$ & $0.000$ & $1.000$ & $0.000$ & $0.000$ \\
\midrule
$0.5$ & $0.0$ & $1.000$ & $0.444$ & $1.000$ & $1.000$ \\
$0.5$ & $0.2$ & $1.000$ & $0.444$ & $0.889$ & $1.000$ \\
$0.5$ & $0.4$ & $0.778$ & $0.444$ & $0.778$ & $1.000$ \\
$0.5$ & $0.6$ & $0.000$ & $0.667$ & $0.556$ & $0.222$ \\
$0.5$ & $0.8$ & $0.000$ & $0.222$ & $0.111$ & $0.000$ \\
$0.5$ & $1.0$ & $0.000$ & $0.333$ & $0.000$ & $0.000$ \\
\midrule
$0.7$ & $0.0$ & $1.000$ & $0.000$ & $0.778$ & $1.000$ \\
$0.7$ & $0.2$ & $0.889$ & $0.000$ & $0.556$ & $1.000$ \\
$0.7$ & $0.4$ & $0.667$ & $0.000$ & $0.444$ & $1.000$ \\
$0.7$ & $0.6$ & $0.000$ & $0.000$ & $0.111$ & $0.000$ \\
$0.7$ & $0.8$ & $0.000$ & $0.000$ & $0.000$ & $0.000$ \\
$0.7$ & $1.0$ & $0.000$ & $0.000$ & $0.111$ & $0.000$ \\
\bottomrule
\end{tabular}
\end{table}

\end{document}